\let\NAT@parse\undefined
\newcommand{\Rbb}{\mathbb{R}}
\newcommand{\E}{\mathbb{E}}
\newcommand{\scp}[2]{\langle #1, #2 \rangle}
\newtheorem{theorem}{Theorem}
\newtheorem{definition}{Definition}
\newtheorem{cor}{Corollary}
\newtheorem{lemma}{Lemma}
\newcommand{\inv}[1]{\frac{1}{#1}}
\newcommand{\supp}{{\rm supp}\,}
\newcommand{\tinv}[1]{{\textstyle\frac{1}{#1}}}
\newcommand{\sign}{{\rm sign}\,}
\newcommand{\ud}{\mathrm{d}} 
\newcommand{\ie}{{i.e.}, } 
\newcommand{\eg}{{e.g.}, } 
\DeclareMathOperator*{\argmin}{argmin}
\newcommand{\BeSE}{B$\epsilon$SE}
\newcommand{\cl}{\mathcal}
\newcommand{\bs}{\boldsymbol}
\newcommand{\bb}{\mathbb}
\newcommand{\ang}[2]{d_{S}({#1,#2})}
\newlength{\imgwidth}
\newcommand{\st}{\quad{\rm s.t.}\quad}
\newcommand{\SGR}[2]{\cl N^{#1\times #2}(0,1)}
\newcommand{\sv}{\bar{\boldsymbol z}}
\newcommand{\ys}{\bar{\boldsymbol y}}
\newcommand{\ysc}[1]{\bar{y}_{#1}}
\title{{Robust $1$-Bit Compressive Sensing via\\ Binary Stable Embeddings
  of Sparse Vectors}\footnote{L.\,J. is funded by the Belgian Science
    Policy ``Return Grant'', BELSPO, IAP-VI BCRYPT, and by the Belgian F.R.S-FNRS.
    J.\,L. and R.\,B. were supported by the
    grants NSF CCF-0431150, CCF-0728867, CCF-0926127, CNS-0435425, and
    CNS-0520280, DARPA/ONR N66001-08-1-2065, N66001-11-1-4090, ONR N00014-07-1-0936,
    N00014-08-1-1067, N00014-08-1-1112, and N00014-08-1-1066, AFOSR
    FA9550-07-1-0301 and FA9550-09-1-0432, ARO MURI W911NF-07-1-0185
    and W911NF-09-1-0383, and the Texas Instruments Leadership
    University Program. P.B. is funded by Mitsubishi Electric Research
    Laboratories.}}
\author{Laurent Jacques\thanks{ICTEAM Institute, ELEN Department, Universit\'e catholique de
    Louvain (UCL), B-1348 Louvain-la-Neuve, Belgium. Email:
    laurent.jacques@uclouvain.be}, Jason
  N. Laska\thanks{Dropcam, Inc. San
    Francisco, CA.  Email: jason@dropcam.com} \thanks{Department of Electrical and Computer
    Engineering, Rice University, Houston, TX, 70015 USA. Email: laska@rice.edu, richb@rice.edu}, Petros
  T. Boufounos\thanks{Mitsubishi
    Electric Research Laboratories (MERL) Email: petrosb@merl.com},
  and Richard G. Baraniuk$^{\S}$}
\date{\today}
\begin{document}

\maketitle
\begin{abstract}
  The Compressive Sensing (CS) framework aims to ease the burden on
  analog-to-digital converters (ADCs) by reducing the sampling rate
  required to acquire and stably recover sparse signals.  Practical
  ADCs not only sample but also quantize each measurement to a finite
  number of bits; moreover, there is an inverse relationship between
  the achievable sampling rate and the bit-depth.  In this paper, we
  investigate an alternative CS approach that shifts the emphasis from
  the sampling rate to the number of bits per measurement.  In
  particular, we explore the extreme case of 1-bit CS measurements,
  which capture just their sign.  Our results come in two flavors.
  First, we consider ideal reconstruction from noiseless $1$-bit
  measurements and provide a lower bound on the best achievable
  reconstruction error. We also demonstrate that i.i.d. random
  Gaussian matrices provide measurement mappings that, with
  overwhelming probability, achieve nearly optimal error decay. Next, we
  consider reconstruction robustness to measurement errors and noise
  and introduce the \emph{Binary $\epsilon$-Stable Embedding}
  (B$\epsilon$SE) property, which characterizes the robustness of the 
  measurement process to sign changes.  We show that the same class of
  matrices that provide almost optimal noiseless performance also
  enable such a robust mapping.  On the practical side, we introduce
  the \emph{Binary Iterative Hard Thresholding} (BIHT) algorithm for
  signal reconstruction from 1-bit measurements that offers
  state-of-the-art performance.
\end{abstract}

\section{Introduction}
\label{sec:intro}

Recent advances in signal acquisition theory have led to significant
interest in alternative sampling methods.  Specifically, conventional
sampling systems rely on the Shannon sampling theorem that states that
signals must be sampled uniformly at the Nyquist rate, \ie a rate
twice their bandwidth.  However, the \emph{compressive sensing} (CS)
framework describes how to reconstruct a signal $\bs x \in
\mathbb{R}^{N}$ from the linear measurements
\begin{equation}
\label{eq:ypx}
\bs y = \Phi \bs x,
\end{equation}
where $\Phi \in \mathbb{R}^{M\times N}$ with $M<N$ is an
underdetermined measurement
system~\cite{Can::2006::Compressive-sampling,Don::2006::Compressed-sensing}.
It is possible to design a physical sampling system $\bar{\Phi}$ such
that $\bs y = \Phi\bs x = \bar{\Phi}(x(t))$ where $\bs x$ is a vector
of Nyquist-rate samples of a bandlimited signal $x(t)$, $t \in
\mathbb{R}$.  In this case, (\ref{eq:ypx}) translates to low,
sub-Nyquist sampling rates, providing the framework's axial
significance: CS enables the acquisition and accurate reconstruction of
signals that were previously out of reach, limited by hardware
sampling rates~\cite{TroLasDua::2009::Beyond-Nyquist:} or number of
sensors~\cite{DuaDavTak::2008::Single-pixel-imaging}.  

Although inversion of (\ref{eq:ypx}) seems ill-posed, it has been
demonstrated that $K$-sparse signals, \ie $\bs x \in \Sigma_{K}$
where $\Sigma_{K} := \{\bs x \in \mathbb{R}^{N}: \|\bs x\|_{0} :=
|\mathrm{supp}(\bs x)| \leq K\}$, can be reconstructed
exactly~\cite{Can::2006::Compressive-sampling,Don::2006::Compressed-sensing}.
To do this, we could na\"{i}vely solve for the sparsest signal that
satisfies (\ref{eq:ypx}),
\begin{equation*}
\label{eq:l0}
\bs x^{*}\quad =\quad \argmin_{\bs u \in \mathbb{R}^{N}}\|\bs u\|_{0} \st \bs y = \Phi \bs u;
\tag{\text{R$_{\mathrm{CS}}$}}
\end{equation*}
however, this non-convex program exhibits combinatorial complexity in
the size of the problem
\cite{Natarajan::1995::sparse-approx-lin-syst}. Instead, we solve
\emph{Basis Pursuit} (BP) by relaxing the objective in (\ref{eq:l0})
to the $\ell_{1}$-norm; the result is a convex, polynomial-time
algorithm \cite{Chen::1998::atomic-decomp-BP}. A key realization is
that, under certain conditions on $\Phi$, the BP solution will be
equivalent to that of
(\ref{eq:l0})~\cite{Can::2006::Compressive-sampling}.  This basic
reconstruction framework has been expanded to include numerous fast
algorithms as well as provably robust algorithms for reconstruction
from noisy
measurements~\cite{CandesRIP,CanRomTao::2006::Stable-signal,HalYinZha::2007::A-fixed-point-continuation,Apppro::2007::M.-A.-T.-Figueiredo-and-R.-D.-Nowak,CanTao::2007::The-dantzig-selector:}.
Reconstruction can also be performed with iterative and greedy
methods~\cite{romp,cosamp,BluDav::2008::Iterative-hard}.

Reconstruction guarantees for BP and other algorithms are often
demonstrated for $\Phi$ that are endowed with the \emph{restricted
  isometry property} (RIP), the sufficient condition that the norm of
the measurements is close to the norm of the signal for all sparse
$\bs x$~\cite{CandesRIP}.\footnote{The RIP is in fact not needed to
  demonstrate exact reconstruction guarantees in noiseless settings,
  however it proves quite useful for establishing robust
  reconstruction guarantees in noise.}  This can be expressed, in
general terms, as a \emph{$\delta$-stable embedding}.  Let $\delta \in
(0,1)$ and $\cl X, \cl S \subset \mathbb{R}^{N}$. We say the mapping $\Phi$ is
a \emph{$\delta$-stable embedding} of $\cl X, \cl S$ if
\begin{equation}
\label{eq:se}
(1-\delta)\|\bs x - \bs s\|_{2}^{2} \leq \|\Phi \bs x - \Phi \bs s \|_{2}^{2} \leq (1+\delta)\|\bs x - \bs s\|_{2}^{2},
\end{equation}
for all $\bs x \in \cl X$ and $\bs s \in \cl S$.  The RIP requires that
(\ref{eq:se}) hold for all $\bs x, \bs s \in \Sigma_{K}$; that is, it is a
stable embedding of sparse vectors.
A key result in the CS literature is that, if the coefficients of
$\Phi$ are randomly drawn from a sub-Gaussian distribution, then $\Phi$ will
satisfy the RIP with high probability as long as $M \geq C_{\delta}
K\log(N/K)$, for some constant
$C_{\delta}$~\cite{BarDavDeV::2008::A-Simple-Proof,Dav::2010::Random-observations}.
Several hardware inspired designs with only a few randomized
components have also been shown to satisfy this
property~\cite{Rom::2008::Compressive-sensing,TroLasDua::2009::Beyond-Nyquist:,TroWakDua::2006::Random-filters,Jacques::2009::CMOS-CS-cam}.

In practice, CS measurements must be quantized, \ie each measurement
is mapped from a real value (over a potentially infinite range) to a
discrete value over some finite range.  For example, in uniform
quantization, a measurement is mapped to one of $2^{B}$ distinct
values, where $B$ denotes the number of bits per
measurement. Quantization is an irreversible process that introduces
error in the measurements.  One way to account for quantization error is to treat it as bounded noise and employ robust
reconstruction algorithms. Alternatively, we might try to reduce the
error by choosing the most efficient quantizer for the distribution of
the measurements.  Several reconstruction techniques that specifically
address CS quantization have also been
proposed~\cite{JacHamFad::2009::Dequantizing-compressed,LasBouDav::2009::Demcracy-in-action:,DaiPhaMil::2009::Distortion-rate-functions,ZymBoyCan::2009::Compressed-sensing,SunGoy::2009::Quantization-for-compressed,GraZeo::1971::Quantization-and-Saturation}.

While quantization error is a minor inconvenience, fine quantization
invokes a more burdensome, yet often overlooked source of adversity:
in hardware systems, it is the primary bottleneck limiting sample
rates~\cite{Walden99:ADC,LeRonRee::2005::Analog-to-Digital-Converters}.
In other words, the analog-to-digital converter (ADC) is beholden to the quantizer.  First,
quantization significantly limits the maximum speed of the
 ADC, forcing an exponential decrease
in sampling rate as the number of bits is increased
linearly~\cite{LeRonRee::2005::Analog-to-Digital-Converters}.  Second,
the quantizer is the primary power consumer in an ADC.  Thus, more
bits per measurement directly translates to slower sampling rates and
increased ADC costs. Third, fine quantization is more susceptible to
non-linear distortion in the ADC electronics, requiring explicit
treatment in the
reconstruction~\cite{Bou::2010::Reconstruction-of-sparse}. As we have
seen, the CS framework provides one mechanism to alleviate the quantization bottleneck by reducing the ADC sampling rate.  Is it possible to extend the CS framework to mitigate this problem directly in the quantization
domain by reducing the number of bits per measurement (bit-depth)
instead?

In this paper we concretely answer this question in the affirmative.
We consider an extreme quantization; just one bit per CS measurement,
representing its sign.  The quantizer is thus reduced to a simple
comparator that tests for values above or below zero, enabling
extremely simple, efficient, and fast quantization. A $1$-bit quantizer is also more
robust to a number of commonly encountered non-linear distortions in
the input electronics, as long as they preserve the signs of the
measurements.  

It is not obvious that the signs of the CS measurements retain enough
information for signal reconstruction; for example, it is immediately
clear that the scale (absolute amplitude) of the signal is lost.
Nonetheless, there is strong empirical evidence that signal
reconstruction is
possible~\cite{BouBar::2008::1-Bit-compressive,Bou::2009::Greedy-sparse,LasWenYin::2010::Trust-but-verify:,Bou::2010::Reconstruction-of-sparse}. In
this paper we develop strong theoretical reconstruction and robustness
guarantees, in the same spirit as classical guarantees provided in CS
by the RIP.

We briefly describe the $1$-bit CS framework proposed
in~\cite{BouBar::2008::1-Bit-compressive}.  Measurements of a signal
$\bs x \in \mathbb{R}^{N}$ are computed via
\begin{equation}
\label{eq:defh}
\ys = A(\bs x) := \sign (\Phi \bs x),
\end{equation}
where the sign operator is applied component wise on $\Phi \bs x$,
where $\sign \lambda$ equals $1$ if $\lambda > 0$ and $-1$ otherwise, for
any $\lambda\in\Rbb$. Thus, the measurement operator $A(\cdot)$ is a
mapping from $\mathbb{R}^{N}$ to the Boolean cube\footnote{Generally,
  the $M$-dimensional Boolean cube is defined as $\{0,1\}^M$. Without
  loss of generality, we use $\{-1,1\}^{M}$ instead.}
$\mathcal{B}^{M} := \{-1,1\}^{M}$.  At best, we hope to recover
signals $\bs x \in \Sigma_{K}^{*} := \{\bs x \in S^{N-1}: \|\bs
x\|_{0} \leq K\}$ where $S^{N-1} := \{ \bs x \in \mathbb{R}^{N}: \|\bs
x\|_{2} = 1\}$ is the unit hyper-sphere of dimension $N$.  We restrict
our attention to sparse signals on the unit sphere since, as
previously mentioned, the scale of the signal has been lost during the
quantization process.  To reconstruct, we enforce \emph{consistency}
on the signs of the estimate's measurements, \ie that $A(\bs x^{*}) =
A(\bs x)$.  Specifically, we define a general non-linear
reconstruction decoder $\Delta^{\rm 1bit}(\ys, \Phi, K)$ such
that, for \sloppy{$\bs x^* = \Delta^{\rm 1bit}(\ys, \Phi, K)$},
the solution $\bs x^* $ is
\begin{center}
  \begin{enumerate}
  \item[\emph{(i)}] sparse, \ie satisfies $\|\bs x^*\|_0\leq K=\|\bs x\|_0$,
  \item[\emph{(ii)}] consistent, \ie satisfies $A(\bs x^*) = \ys =
    A(\bs x)$.
  \end{enumerate}
\end{center}
With (\ref{eq:l0}) from CS as a guide, one candidate program for
reconstruction that respects these two conditions is
\begin{equation*}
\label{eq:l0consist}
\bs x^{*}\quad =\quad \argmin_{\bs u \in S^{N-1}}\|\bs u\|_{0} \st \ys
=\sign(\Phi \bs u).
\tag{\text{R$_{\mathrm{1BCS}}$}}
\end{equation*}
Although the parameter $K$ is not explicit in (\ref{eq:l0consist}),
the solution will be $K'$-sparse with $K'\leq K$ because $\bs x$ is a feasible
point of the constraints.

Since (\ref{eq:l0consist}) is computationally intractable,
\cite{BouBar::2008::1-Bit-compressive} proposes a relaxation that
replaces the objective with the $\ell_{1}$-norm and enforces
consistency via a linear convex constraint.  However, the resulting
program remains non-convex due to the unit-sphere requirement.  Be
that as it may, several optimization algorithms have been developed
for the relaxation, as well as a greedy algorithm inspired by the same
ideas~\cite{BouBar::2008::1-Bit-compressive,Bou::2009::Greedy-sparse,LasWenYin::2010::Trust-but-verify:}.
While previous empirical results from these algorithms provide
motivation for the validity of this $1$-bit framework, there have been
few analytical guarantees to date. 

The primary contribution of this paper is a rigorous analysis of the
$1$-bit CS framework.  Specifically, we examine how
  the reconstruction error behaves as we increase the number of
  measurement bits $M$ given the signal dimension $N$ and sparsity
  $K$. We provide two flavors of results.  First, we determine a
lower bound on reconstruction performance from all possible mappings
$A$ with the reconstruction decoder $\Delta^{\rm 1bit}$, \ie the best
achievable performance of this $1$-bit CS framework.  We further
demonstrate that if the elements of $\Phi$ are drawn randomly from
Gaussian distribution or its rows are drawn uniformly from the unit
sphere, then the worst-case reconstruction error
  using $\Delta^{\rm 1bit}$ will decay at a rate almost optimal with
  the number of measurements, up to a log factor in the oversampling
  rate $M/K$ and the signal dimension $N$.  Second, we provide
conditions on $A$ that enable us to characterize the reconstruction
performance even when some of the measurement signs have changed (\eg
due to noise in the measurements). In other words, we derive the
conditions under which robust reconstruction from $1$-bit measurements
can be achieved. We do so by demonstrating that $A$ is a stable
embedding of sparse signals, similar to the RIP. We apply these stable
embedding results to the cases where we have noisy measurements and
signals that are not strictly sparse.  Our guarantees demonstrate that
the $1$-bit CS framework is on sound footing and provide a first step
toward analysis of the relaxed $1$-bit techniques used in practice.

To develop robust reconstruction guarantees, we propose a new tool,
the \emph{binary $\epsilon$-stable embedding} (B$\epsilon$SE), to
characterize $1$-bit CS systems. The B$\epsilon$SE implies that the
normalized angle between any sparse vectors in $S^{N-1}$ is close to
the normalized Hamming distance between their $1$-bit measurements.
We demonstrate that the same class of random $A$ as above exhibit this
property when $M \geq C_{\epsilon}K\log N$ (where $C_{\epsilon}$ is
some constant).  Thus remarkably, there exist $A$ such that the
B$\epsilon$SE holds when both the number of measurements $M$ is
smaller than the dimension of the signal $N$ and the measurement
bit-depth is at minimum. 

As a complement to our theoretical analysis, we introduce a new
$1$-bit CS reconstruction algorithm, \emph{Binary Iterative Hard
  Thresholding} (BIHT).  Via simulations, we demonstrate that BIHT
yields a significant improvement in both reconstruction error as well
as consistency, as compared with previous algorithms.  To gain
intuition about the behavior of BIHT, we explore the way that this
algorithm enforces consistency and compare and contrast it with
previous approaches.  Perhaps more important than the algorithm itself
is the discovery that the BIHT consistency formulation provides a
significantly better feasible solution in noiseless settings, as
compared with previous algorithms.  Finally, we provide a brief
explanation regarding why this new formulation achieves better
solutions, and its connection with results in the machine learning
literature.

Since the first appearance of this work, Plan and Vershynin have
developed additional theoretical results and bounds on the performance of 1-bit
CS, as well as two convex algorithms with theoretical
guarantees~\cite{plan2011dimension,plan2011one,plan2012robust}. The results
in~\cite{plan2011dimension,plan2011one} generalize the B$\epsilon$SE guarantees for
more general classes of signals, including compressible signals
in addition to simply sparse ones. However, the guarantees provided in
that work exhibit worse decay rates in the error performance and the
tightness of the B$\epsilon$SE property. Furthermore, the results of~\cite{plan2011one,plan2012robust} are
intimately tied to reconstruction algorithms, in
contrast to our analysis. We point out similarities and differences
with our results when appropriate in the subsequent development.

In addition to benchmarking the performance of BIHT, our simulations
demonstrate that many of the theoretical predictions that arise from
our analysis (such as the error rate as a function of the number of
measurements or the error rate as a function of measurement Hamming
distance), are actually exhibited in practice. This suggests that our
theoretical analysis is accurately explaining the true behavior of the
framework.

The remainder of this paper is organized as follows.  In
Section~\ref{sec:secmainresultsopt}, we develop performance results for $1$-bit CS in
the noiseless setting.  Specifically we develop a lower bound on
reconstruction performance as well as provide the guarantee that
Gaussian matrices enable this performance.  In
Section~\ref{sec:secmainresults} we introduce the notion of a
B$\epsilon$SE for the mapping $A$ and demonstrate that Gaussian
matrices facilitate this property. We also expand reconstruction
guarantees for measurements with Gaussian noise (prior to
quantization) and non-sparse signals. To make use of these results in
practice, in Section~\ref{sec:introBIHT} we present the BIHT
algorithm for practical $1$-bit reconstruction.  In Section~\ref{sec:sims} we
provide simulations of BIHT to verify our claims.  In
Section~\ref{sec:disc} we conclude with a discussion about
implications and future extensions. To facilitate the flow of the paper and clear
descriptions of the results, most of our proofs are provided in the
appendices.

\section{Noiseless Reconstruction Performance}
\label{sec:secmainresultsopt}

\subsection{Reconstruction performance lower bounds}

\label{subsec:opt}

In this section, we seek to provide guarantees on the reconstruction
error from $1$-bit CS measurements.  Before analyzing this performance
from a specific mapping $A$ with the consistent sparse reconstruction
decoder $\Delta^{\rm 1bit}(\ys, \Phi, K)$, it is instructive to
determine the best achievable performance from measurements acquired
using any mapping. Thus, in this section we seek a lower bound on the
reconstruction error.

We develop the lower bound on the reconstruction error based on how
well the quantizer exploits the available measurement bits. A
distinction we make in this section is that of measurement bits, which
is the number of bits acquired by the measurement system, versus
information bits, which represent the true amount of information
carried in the measurement bits. Our analysis follows similar ideas to
that
in~\cite{Bou::2006::Quantization-and-erasures,BouBar::2007::Quantization-of-sparse},
adapted to sign measurements.

\begin{figure}[t]

  \centering
  \subfigure[\label{fig:orthants}]{
    \includegraphics[height=5cm]{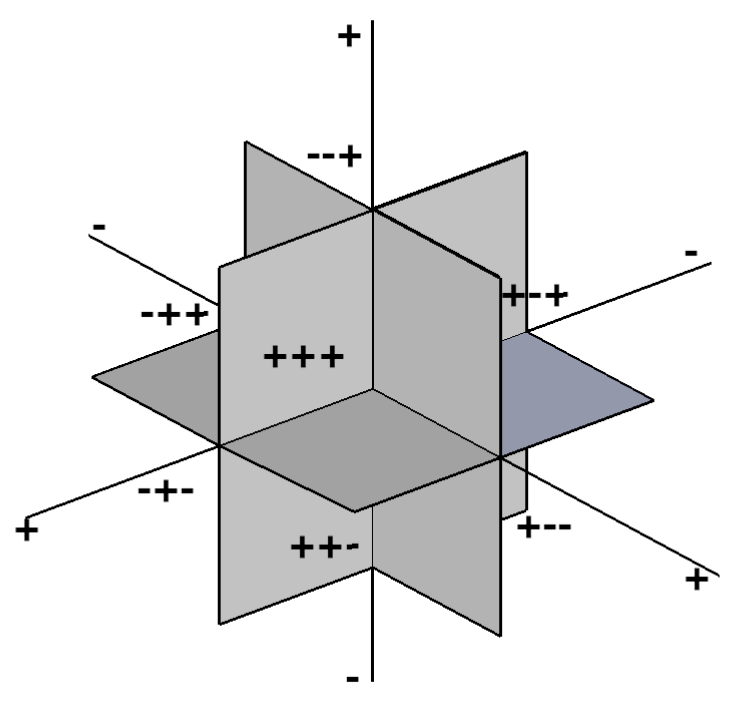}
  }
  \hspace{1in}
  \subfigure[\label{fig:orth_intersect}]
  {
    \includegraphics[height=5cm]{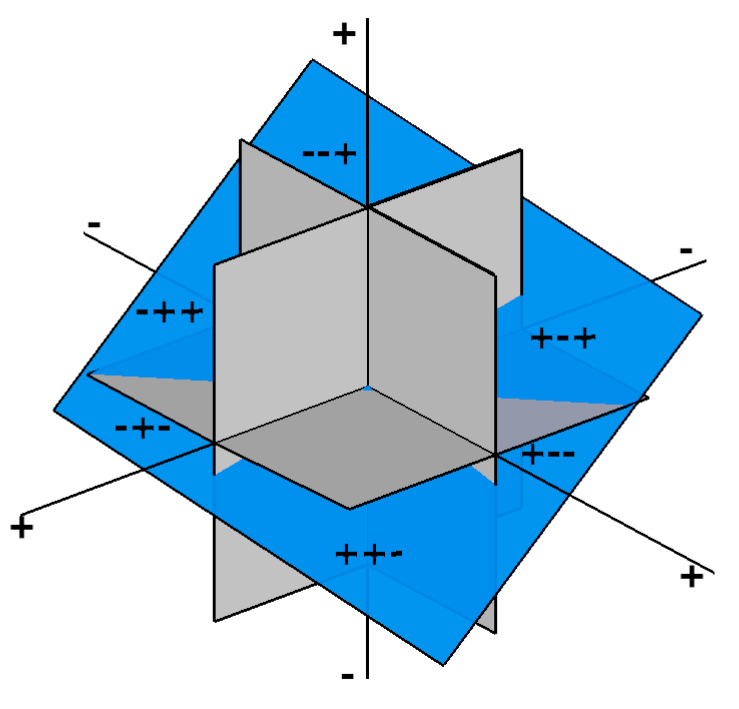}
  }
  \caption{(a) The 8 orthants in $\mathbb{R}^{3}$. (b) Intersection of
    orthants by a 2-dimensional subspace. At most 6 of the 8 available
    orthants are intersected.}
  \label{fig:orthant_geometry}
\end{figure}

We first examine how 1-bit quantization operates on the
measurements. Specifically, we consider the orthants of the
measurement space. An orthant in $\Rbb^M$ is the set of vectors such
that all the vector's coefficients have the same sign pattern
\begin{align}
  \mathcal{O}_{\sv}:=\{\bs z\in\Rbb^M~|~\sign{\bs z}=\sv \},
  \label{eq:orthant}
\end{align}
where $\sv \in \cl B^M$.  Notice that $\cup_{\bar{\bs z}\in \cl B^M}
\mathcal{O}_{\sv}=\Rbb^M$ and $\mathcal{O}_{\sv}\cap
\mathcal{O}_{\sv'}=\emptyset$ if $\sv \neq \sv'$. Therefore, any
$M$-dimensional space is partitioned to $2^M$
orthants. Figure~\ref{fig:orthants} shows the 8 orthants of $\Rbb^3$
as an example. Since 1-bit quantization only preserves the signs of
the measurements, it encodes in which measurement space orthant the
measurements lie. Thus, each available quantization point corresponds
to an orthant in the measurement space. Any unquantized measurement
vector $\Phi\bs x$ that lies in an orthant of the measurement space
will quantize to the corresponding \emph{quantization point} of that
orthant and cannot be distinguished from any other measurement vector
in the same orthant.  To obtain a lower bound on the reconstruction
error, we begin by bounding the number of quantization points (or
equivalently the number of orthants) that are used to encode the
signal.

While there are generally $2^M$ orthants in the measurement space, the
space formed by measuring all sparse signals occupies a small subset
of the available orthants. We determine the number of available
orthants that can be intersected by the measurements in the following
lemma:

\begin{lemma}
  \label{res:intersecions}
  Let $\bs x \in \cl S := \bigcup_{i=1}^L \cl S_i$ belong to a union
  of $L$ subspaces $\cl S_i\subset \mathbb{R}^{N}$ of dimension $K$,
  and let $M\ge 2K$ $1$-bit measurements $\ys$ be acquired via the mapping
  $A: \mathbb{R}^{N} \rightarrow \mathcal{B}^{M}$ as defined in
  (\ref{eq:defh}).  Then the measurements $\ys$ can effectively use at
  most $2^KL\binom{M}{K}$ quantization points, i.e., carry at most
  $K\log_2(2Me/K) + \log_2(L)$ information bits.
\end{lemma}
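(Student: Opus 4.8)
The plan is to count the distinct values the measurement vector $\ys$ can take as $\bs x$ ranges over $\cl S$: this is exactly the number of orthants $\mathcal O_{\sv}$ that $A$ actually reaches on $\cl S$, hence the number of quantization points effectively in use, and the information content is at most the base-$2$ logarithm of this number. First I would reduce to a subspace-counting problem. Since $A(\bs x)=\sign(\Phi\bs x)$ and every $\bs x\in\cl S$ lies in one of the $\cl S_i$, the point $\Phi\bs x$ lies in the linear subspace $W_i:=\Phi\,\cl S_i\subseteq\Rbb^M$, of dimension at most $\dim\cl S_i=K$. Thus the set of attainable $\ys$ is contained in $\bigcup_{i=1}^{L}\{\sign(\bs w):\bs w\in W_i\}$, and the number of quantization points in use is at most $\sum_{i=1}^{L}N(W_i)$, where $N(W)$ denotes the number of orthants of $\Rbb^M$ met by a subspace $W$.

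The geometric core is the claim that $N(W)\le 2^{K}\binom{M}{K}$ whenever $\dim W\le K$ and $M\ge 2K$. Fixing a basis, write $W=\{Bt:t\in\Rbb^{d}\}$ with $d=\dim W\le K$, $B\in\Rbb^{M\times d}$, and let $\bs b_1,\dots,\bs b_M\in\Rbb^{d}$ be the rows of $B$, so the $m$-th coordinate of $Bt\in W$ is $\scp{\bs b_m}{t}$. The sign vector $\sign(Bt)$ is constant on each open cell of the central hyperplane arrangement $\{t\in\Rbb^{d}:\scp{\bs b_m}{t}=0\}$, $m=1,\dots,M$, and by the classical count for such arrangements the number of these cells is at most $2\sum_{j=0}^{d-1}\binom{M-1}{j}\le 2\sum_{j=0}^{K-1}\binom{M}{j}$. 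Points of $W$ lying on some coordinate hyperplane only contribute sign vectors attached to lower-dimensional faces of the arrangement; accounting for the convention $\sign 0=-1$, these add at most a bounded number of further patterns, comfortably absorbed by the slack below. Using $M\ge 2K\ge 2d$, so the binomials $\binom{M}{j}$ are nondecreasing for $0\le j\le K$, and $2K\le 2^{K}$, I get
\[
  N(W)\ \le\ 2\sum_{j=0}^{K-1}\binom{M}{j}\ \le\ 2K\binom{M}{K}\ \le\ 2^{K}\binom{M}{K}.
\]

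Summing over the $L$ subspaces then shows that at most $L\,2^{K}\binom{M}{K}$ quantization points are used. Taking $\log_2$ and invoking the standard estimate $\binom{M}{K}\le(eM/K)^{K}$ yields an information content of at most $\log_2 L + K + K\log_2(eM/K)=\log_2 L + K\log_2(2Me/K)$ bits, which is the claimed bound.

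The reduction and the final arithmetic are routine; the delicate step is the geometric core. The clean binomial formula counts the \emph{full-dimensional} cells of a \emph{generic} central arrangement, so one must separately verify that non-generic positions of $W$ together with the points of $W$ that have a vanishing coordinate (which $\sign 0=-1$ folds into a neighboring orthant) cannot push $N(W)$ beyond the stated estimate. This book-keeping is exactly what the loose factor $2^{K}\binom{M}{K}$, as opposed to the sharper $2\sum_{j<K}\binom{M-1}{j}$, is there to cover, so the argument goes through without a tight analysis of the degenerate cases.
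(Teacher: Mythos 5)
Your argument is correct and follows the paper's overall structure: push each $\mathcal{S}_i$ through $\Phi$ to an (at most) $K$-dimensional subspace of $\mathbb{R}^M$, bound the number of orthants (sign patterns) such a subspace can meet, multiply by $L$, and take logarithms using $\binom{M}{K}\le (eM/K)^K$. The difference lies in how you establish the orthant bound $2^K\binom{M}{K}$. You pull the problem back to the parameter space of the subspace and invoke the classical Schl\"afli--Cover count of regions of a central arrangement of $M$ hyperplanes in $\mathbb{R}^d$, namely $2\sum_{j=0}^{d-1}\binom{M-1}{j}$, and then use the hypothesis $M\ge 2K$ (monotonicity of $\binom{M}{j}$ up to $j=M/2$) together with $2K\le 2^K$ to reach $2^K\binom{M}{K}$. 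The paper cites this same classical bound, Eq.~\eqref{eq:tight}, and even notes the simplification $2K\binom{M-1}{K-1}$ for $K\le M/2$, but for the stated constant it gives its own self-contained induction on orthant boundaries and faces in Appendix~\ref{app:subspace_intersections}, yielding the sharper $I(M,K)\le \tfrac{2^K}{M-K+1}\binom{M}{K}$ without needing $M\ge 2K$. So your route buys brevity by leaning on a known theorem plus elementary binomial arithmetic; the paper's buys self-containedness, a factor-$(M-K+1)$ improvement, and geometric intuition. One caveat: your remark that the patterns produced by points of $W$ with a vanishing coordinate (the $\sign 0=-1$ convention) add ``at most a bounded number'' of patterns is not literally accurate---these come from the lower-dimensional faces of the arrangement and their number can grow with $M$, roughly like $M^{K-1}$---but they are of lower order than $\binom{M}{K}$ and are indeed absorbed by your slack for $M\ge 2K$; the paper simply ignores this measure-zero boundary effect altogether (hence ``effectively use''), so this does not amount to a gap relative to the paper's own level of rigor.
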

\begin{proof}
  A $K$-dimensional subspace in an $M$-dimensional space cannot lie in
  all the $2^M$ available octants. For example, as shown in
  Fig.~\ref{fig:orth_intersect}, a 2-dimensional subspace of a
  3-dimensional space can intersect at most 6 of the available
  octants. In Appendix~\ref{app:subspace_intersections}, we
  demonstrate that one arbitrary $K$-dimensional subspace in an
  $M$-dimensional space intersects at most $2^K\binom{M}{K}$ orthants
  of the $2^M$ available. Since $\Phi$ is a linear operator, any
  $K$-dimensional subspace $\cl S_i$ in the signal space
  $\mathbb{R}^N$ is mapped through $\Phi$ to a subspace $\cl S'_i
  =\Phi \cl S_i \subset \mathbb{R}^M$ that is also at most
  $K$-dimensional and therefore follows the same bound.  Thus, if the
  signal of interest belongs in a union $\cl S := \bigcup_{i=1}^L \cl
  S_i$ of $L$ such $K$-dimensional subspaces, then $\Phi \bs x \in \cl
  S' := \bigcup_{i=1}^L \cl S'_i$, and it follows that at most
  $2^KL\binom{M}{K}$ orthants are intersected. This means that at most
  $2^KL\binom{M}{K} \leq L (\frac{2eM}{K})^K$ effective quantization
  points can be used, \ie at most $K\log_2(2eM/K) + \log_2(L)$
  information bits can be obtained.
\end{proof}

Since $K$-sparse signals in any basis $\Psi\in\Rbb^{N\times N}$ belong
to a union of at most $N \choose K$ subspaces in~$\Rbb^N$ with ${N
\choose K} \leq (eN/K)^K$, using
Lemma~\ref{res:intersecions} we can obtain the following
corollary.\footnote{This corollary is easily adaptable to a redundant
  frame $\Psi\in\Rbb^{N\times D}$ with $D\geq N$.}
\begin{cor}
  \label{cor:quantiz-sparsity-subspace}
  Let $\bs x = \Psi\bs\alpha \in \Rbb^N$ be $K$-sparse in a certain
  basis $\Psi\in\Rbb^{N\times N}$, \ie $\bs \alpha \in \Sigma_K$.
  Then the measurements $\ys = A(\bs x)$ can effectively use at
  most $2^K\binom{N}{K}\binom{M}{K}$ 1-bit quantization points, i.e, carry
  at most $2K\log_2(\sqrt 2\,e \sqrt{NM}/K)$ information bits.
\end{cor}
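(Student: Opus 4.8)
The plan is to obtain Corollary~\ref{cor:quantiz-sparsity-subspace} as an immediate specialization of Lemma~\ref{res:intersecions}. First I would note that if $\bs x = \Psi\bs\alpha$ with $\bs\alpha\in\Sigma_K$, then $\bs x$ lies in the union $\cl S = \bigcup_{\cl I}\cl S_{\cl I}$ taken over all supports $\cl I\subset\{1,\dots,N\}$ with $|\cl I| = K$, where $\cl S_{\cl I} = \mathrm{span}\{\psi_i : i\in\cl I\}$ is the span of the corresponding columns of $\Psi$; supports of size smaller than $K$ are absorbed into these. Since $\Psi$ is a basis, hence invertible, each $\cl S_{\cl I}$ has dimension exactly $K$ (in any case at most $K$, which is all we need), and there are $L = \binom{N}{K}$ of them. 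Thus $\bs x$ belongs to a union of $L = \binom{N}{K}$ subspaces of dimension at most $K$, so (with the hypothesis $M\ge 2K$ inherited from the lemma) Lemma~\ref{res:intersecions} applies verbatim.

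Applying the lemma with this value of $L$ gives at once that $\ys = A(\bs x)$ can use at most $2^K\binom{N}{K}\binom{M}{K}$ quantization points. The only remaining work is cosmetic: convert this count into a clean information-bit bound using the standard estimate $\binom{n}{k}\le (en/k)^k$ on both binomial coefficients,
\[
  2^K\binom{N}{K}\binom{M}{K}\;\le\;2^K\Big(\tfrac{eN}{K}\Big)^{K}\Big(\tfrac{eM}{K}\Big)^{K}\;=\;\Big(\tfrac{2e^2NM}{K^2}\Big)^{K}\;=\;\Big(\tfrac{\sqrt2\,e\sqrt{NM}}{K}\Big)^{2K},
\]
and then taking $\log_2$ of both sides yields the claimed bound of at most $2K\log_2(\sqrt2\,e\sqrt{NM}/K)$ information bits.

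There is essentially no obstacle here; the one place that warrants a sentence of care is the first step, namely that $K$-sparsity in a \emph{general} basis $\Psi$ (rather than the canonical one) still produces a union of $\binom{N}{K}$ subspaces of dimension at most $K$ — invertibility of $\Psi$ prevents the dimensions from being inflated, and for an upper bound one does not even need them to equal $K$. The rest is the bookkeeping of collecting the exponents above. As the footnote observes, the same argument goes through for a redundant frame $\Psi\in\Rbb^{N\times D}$ with $D\ge N$ simply by replacing $N$ with $D$ throughout, since then there are $\binom{D}{K}$ candidate supports.
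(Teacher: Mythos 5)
Your proposal is correct and follows essentially the same route as the paper: the corollary is obtained by applying Lemma~\ref{res:intersecions} with $L=\binom{N}{K}$ (the number of size-$K$ supports in the basis $\Psi$), and the information-bit count follows from $\binom{N}{K}\binom{M}{K}\le (eN/K)^K(eM/K)^K$, exactly as in your computation. The only remark is that your side comments (invertibility of $\Psi$, the $M\ge 2K$ hypothesis, the frame extension) match what the paper leaves implicit, so there is nothing to add or correct.
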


The set of signals of interest to be encoded is the set of unit-norm
$K$-sparse signals $\Sigma^*_K$. Since unit-norm signals of a
$K$-dimensional subspace form a $K$-dimensional unit sphere in that
subspace, $\Sigma^*_K$ is a union of $\binom{N}{K}$ such unit
spheres. The $Q:=2^K\binom{N}{K}\binom{M}{K}$ available quantization
points partition $\Sigma^*_K$ into $Q$ smaller sets, each of which
contains all the signals that quantize to the same point.

To develop the lower bound on the reconstruction error we
examine how $\Sigma^*_K$ can be optimally partitioned with respect to
the worst-case error, given the number of quantization points
used. The measurement and reconstruction process maps each signal in
$\Sigma^*_K$ to a finite set of quantized signals $\cl
Q\subset\Sigma^*_K, |\cl Q|=Q$. At best this map ensures that the
worst case reconstruction error is minimized, \ie
\begin{align}
\label{eq:eopt}
 \epsilon_{\mathrm{opt}}\ =\ 
  \max_{\bs x\in\Sigma^*_K}\,\min_{\bs q\in\cl Q\phantom{\|\!\!}}\|\bs x-\bs q\|_{2}, 
\end{align}
where $\epsilon_{\mathrm{opt}}$ denotes the worst-case quantization
error and $\bs q$ each of the available quantization points. The
optimal lower bound is achieved by designing $\cl Q$ to
minimize~\eqref{eq:eopt} without considering whether the measurement
and reconstruction process actually achieve this design. Thus,
designing the set $\cl Q$ becomes a set covering
problem. Appendix~\ref{app:sph_cap} precises this intuition and proves
the following statement.

\begin{theorem}
  \label{res:lower_bound}
  Let the mapping $A: \mathbb{R}^{N} \rightarrow \mathcal{B}^{M}$ and
  measurements $\ys$ be defined as in (\ref{eq:defh}) and let
  $\bs x \in \Sigma_{K}^{*}$. Then the estimate from the
  reconstruction decoder $\Delta^{\rm 1bit}(\ys,\Phi,K)$ has
  error defined by (\ref{eq:eopt}) of at least
  $$
    \epsilon_{\mathrm{opt}}\ge \frac{K}{2eM + 2K^{3/2}}\ \mathop{\to}_M\ \Omega\left(\frac{K}{M}\right).
  $$
\end{theorem}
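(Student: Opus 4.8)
The plan is to turn the statement into a sphere-covering bound for one low-dimensional sphere. First I would discard all signals except those on a single coordinate sphere: fix $\Lambda\subset\{1,\dots,N\}$ with $|\Lambda|=K$ and look only at $S_\Lambda:=\{\bs x\in S^{N-1}:\supp\bs x\subseteq\Lambda\}\cong S^{K-1}$, which is a subset of $\Sigma_K^{*}$, so a lower bound for $\bs x\in S_\Lambda$ suffices. For such $\bs x$ one has $\ys=\sign(\Phi\bs x)=\sign(\Phi_\Lambda\bs\alpha)$, with $\Phi_\Lambda$ the $M\times K$ submatrix of $\Phi$ and $\bs\alpha$ carrying the nonzero entries of $\bs x$; thus $\ys$ records only which orthant of $\Rbb^M$ contains $\Phi_\Lambda\bs\alpha$, and these points range over a subspace of dimension at most $K$. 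Here I would use the exact count of orthants met by such a subspace, $P:=2\sum_{k=0}^{K-1}\binom{M-1}{k}$ (the central-hyperplane arrangement count, of which the $2^K\binom{M}{K}$ in Lemma~\ref{res:intersecions} is a loose form, consistent with the ``$6$ of $8$'' picture in Fig.~\ref{fig:orth_intersect}); the looser $2^K\binom{M}{K}\sim M^K$ would only give the weaker rate $M^{-K/(K-1)}$. So $\sign(\Phi_\Lambda\,\cdot\,)$ partitions $S_\Lambda$ into at most $P$ cells, each cell is sent to a single reconstructed vector, and $\epsilon_{\mathrm{opt}}$ --- which for the optimal decoder equals the largest Chebyshev (smallest-enclosing-ball) radius over these cells --- is at least that radius.

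The second step is a volumetric lower bound on the largest cell radius. Some cell carries at least a $1/P$ fraction of the surface measure of $S^{K-1}$; conversely, any Euclidean ball of radius $r\le1$ meets $S^{K-1}$ in a spherical cap of angular radius at most $\arcsin r$, so a set of relative measure $1/P$ cannot be enclosed in a ball of radius below $\sin\theta_P$, where $\theta_P$ is the angular radius of the cap of relative area $1/P$. Comparing areas,
$$
\frac{\int_0^\theta(\sin u)^{K-2}\,\ud u}{\int_0^\pi(\sin u)^{K-2}\,\ud u}\ \le\ \frac{\theta^{K-1}}{K-1}\,\sqrt{\frac{K}{2\pi}},
$$
using $\sin u\le u$ in the numerator and $\int_0^\pi(\sin u)^{K-2}\,\ud u\ge\sqrt{2\pi/K}$ in the denominator; hence $\theta_P\ge\left((K-1)\sqrt{2\pi/K}\,/\,P\right)^{1/(K-1)}$ and $\epsilon_{\mathrm{opt}}\ge\sin\theta_P\ge c_K\,P^{-1/(K-1)}$, with $c_K$ bounded away from $0$ (indeed $c_K\to1$).

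It remains to simplify. With $\sum_{k=0}^{K-1}\binom{M-1}{k}\le\left(e(M-1)/(K-1)\right)^{K-1}$ one gets $P^{1/(K-1)}\le 2^{1/(K-1)}\,eM/(K-1)$, so $\epsilon_{\mathrm{opt}}\ge c_K'\,(K-1)/(eM)$ with $c_K'$ bounded away from $0$; in particular $\epsilon_{\mathrm{opt}}=\Omega(K/M)$ as $M\to\infty$, which is the limit claim. To land on the explicit bound $K/(2eM+2K^{3/2})$ one folds the remaining $K$-dependent factors --- the $2^{1/(K-1)}$, the $\sqrt{K}$ from the Gamma-function ratio, the loss in $\sin u\le u$, and the gap between $\sin\theta_P$ and $\theta_P$ --- into the additive $2K^{3/2}$ in the denominator, which must also absorb the borderline regime $M\approx2K$, where the extremal cap is not small and the small-angle estimates degrade. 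I expect this bookkeeping to be the only real difficulty: the target constants are tight enough that naive estimates (such as $\sin u\le u$ together with $\sin\theta\ge(2/\pi)\theta$ applied everywhere) already lose too much --- they fail even at $K=2$ --- so one must choose a cap-area inequality sharp enough and then verify the resulting bound uniformly over all $M\ge2K$. Everything else is orthant counting plus routine binomial and Gamma-function estimates; the careful version is in Appendix~\ref{app:sph_cap}.
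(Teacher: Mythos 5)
Your route is genuinely different from the paper's, and its skeleton is sound. You fix one support $\Lambda$, note that all signals on $S_\Lambda\cong S^{K-1}$ generate at most $P=2\sum_{k=0}^{K-1}\binom{M-1}{k}$ sign patterns (the sharp count (\ref{eq:tight}) from Appendix~\ref{app:subspace_intersections}), hence at most $P$ distinct decoder outputs, and then run a pigeonhole/cap-area argument on that single sphere. This restriction is legitimate precisely because the decoder is a function of the sign pattern, so the \emph{per-support} orthant count---not the global budget $2^K\binom{N}{K}\binom{M}{K}$---limits how finely $S_\Lambda$ can be resolved; that is why the $\binom{N}{K}$ factor never appears for you. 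The paper argues globally instead: it counts all $2^K\binom{N}{K}\binom{M}{K}$ quantization points, trims each sphere to $\widetilde S_i=\{\bs x\in S_i:\min_{k\in T_i}|x_k|>2r\}$ so that no quantization point can serve two supports at once, lower-bounds $\sigma(\widetilde S_i)\ge(1-2r\sqrt{K})^K\sigma(S^{K-1})$ by a translation/containment argument in the positive orthant, and upper-bounds the per-point coverage by $\kappa(r)\le r^K\sigma(S^{K-1})$; the $K$-th powers cancel and the closed form $r\ge K/(2eM+2K^{3/2})$ drops out in one line. Two points of comparison: the paper's $2K^{3/2}$ is \emph{not} bookkeeping for small-angle or Gamma-factor losses, as you guess---it is exactly the trimming margin $2r\sqrt K$; and because the paper pairs the loose count $2^K\binom{M}{K}$ with a cap bound carrying exponent $K$, it never needs (\ref{eq:tight}), whereas in your route, which keeps the $(K-1)$-dimensional surface scaling of caps, the sharp count is indeed indispensable, as you correctly observe.

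The genuine gap is the explicit constant. Your chain proves $\epsilon_{\mathrm{opt}}=\Omega(K/M)$, but it does not deliver $\epsilon_{\mathrm{opt}}\ge K/(2eM+2K^{3/2})$, and the repair you sketch cannot work as stated: the losses you enumerate (the $2^{1/(K-1)}$, the $\sqrt K$ from the Gamma ratio, $\sin u\le u$, and the gap between $\sin\theta_P$ and $\theta_P$) are \emph{multiplicative}, while $2K^{3/2}$ is an additive denominator term; in the regime $M\gg K^{3/2}$ the target is essentially $K/(2eM)$, so no additive term can absorb a multiplicative deficit. What your method actually requires is a verification that the overall prefactor in your bound $\epsilon_{\mathrm{opt}}\ge c\,(K-1)/(eM)$ satisfies $c\,(K-1)\ge K/2$ (roughly) uniformly over $M\ge 2K$; your own asymptotics suggest this holds with room to spare (the prefactor $((K-1)\sqrt{2\pi/K}/2)^{1/(K-1)}\to1$, and the exact small-cap computation even beats $K/(2eM)$ for fixed $K$), but you have not carried it out, and you note yourself that the naive estimates fail. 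Finally, the closing appeal to Appendix~\ref{app:sph_cap} does not close this gap: that appendix proves the theorem by the other (trimmed-union covering) route, not by a sharpened version of your cap/pigeonhole argument, so within your proposal the stated inequality remains unproven even though the $\Omega(K/M)$ assertion is established.
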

Thus, when $M$ is high compared to $K^{3/2}$, the worst-case error
cannot decay at a rate faster than $\Omega(1/M)$ as a function of the
number measurements, no matter what reconstruction algorithm is
used. 

This result assumes noiseless acquisition and provides no guarantees
of robustness and noise resiliency. This is in line with existing
results on scalar quantization in oversampled representations and CS
that state that the distortion due to scalar quantization of noiseless
measurements cannot decrease faster than the inverse of the
measurement
rate~\cite{BouBar::2007::Quantization-of-sparse,SunGoy::20090::Optimal-quantization,ThaVet::2994::Reduction-of-the-MSE-in-R-times,GoyVetTha::1998::Quantized-overcomplete,Bou::2006::Quantization-and-erasures}.

To improve the rate vs.\ distortion trade-off, alternative
quantization methods must be used, such as Sigma-Delta
($\Sigma\Delta$)
quantization~\cite{bib:Aziz96,bib:Thao96b,bib:Candy92,bib:Benedetto06,bib:BoufounosEurasip06,bib:BB_SD_SPIE07,bib:GPSY_SD10}
or non-monotonic scalar
quantization~\cite{Boufounos::2010::univer_rate_effic_scalar_quant}.
Specifically, $\Sigma\Delta$ approaches to CS can achieve error decay
rate of $O((K/M)^{p-1/2})$, where $p$ is the order of the
quantizer~\cite{bib:GPSY_SD10}. However, $\Sigma\Delta$ quantization
requires feedback during the quantization process, which is not
necessary in scalar quantization. Furthermore, the result
in~\cite{bib:GPSY_SD10} only holds for multibit quantizers, not 1-bit
ones. While efficient 1-bit $\Sigma\Delta$ quantization has been shown
for classical sampling~\cite{bib:Thao96b,gunturk2003one}, to the best
of our knowledge, similar results are not currently known for 1-bit
$\Sigma\Delta$ in CS applications. Alternatively, non-monotonic scalar
quantization can achieve error decay exponential in the number of
measurements $M$, even in CS
applications~\cite{Boufounos::2010::univer_rate_effic_scalar_quant}. However,
such a scheme requires a significantly more complex scalar quantizer
and reconstruction approach~\cite{bib:BoufSAMPTA2011}.

Theorem~\ref{res:lower_bound} bounds the best possible performance of
a consistent reconstruction over all possible mappings $A$. However,
not all mappings $A$ will behave as the lower bound suggests.  In the
next section we identify two classes of matrices such that the
mapping~$A$ admits an upper bound on the reconstruction error from a
general decoder~$\Delta^{\rm 1bit}$ that decays almost
optimally.

\subsection{Achievable performance via random projections}
\label{subsec:optgauss}

In this section we describe a class of matrices $\Phi$ such that the
consistent sparse reconstruction decoder $\Delta^{\rm 1bit}(\ys,
\Phi, K)$ can indeed achieve error decay rates of optimal order,
described by Theorem~\ref{res:lower_bound}, with the number of
measurements growing linearly in the sparsity $K$ and logarithmically in
the dimension $N$, as is required in conventional CS. We first focus
our analysis on Gaussian matrices, \ie $\Phi$ such that each element
$\phi_{i,j}$ is randomly drawn i.i.d.\ from the standard Gaussian
distribution, $\mathcal{N}(0,1)$. In the rest of the paper, we use the
short notation $\Phi \sim \SGR{M}{N}$ to characterize such
matrices, and we write $\bs \varphi\sim \SGR{N}{1}$ to describe the
equivalent random vectors in $\Rbb^N$ (\eg the rows of $\Phi$). For
these matrices $\Phi$, we prove the following in
Appendix~\ref{sec:proof-hash}.

\begin{theorem}
\label{thm:hash}
Let $\Phi$ be matrix generated as $\Phi \sim \SGR{M}{N}$, and let the
mapping $A: \mathbb{R}^{N} \rightarrow \mathcal{B}^{M}$ be defined as
in (\ref{eq:defh}).  Fix $0\leq \eta\leq 1$ and $\epsilon_{o}>0$.  If
the number of measurements is
\begin{equation}
\label{eq:gaussian-number-meas-consist}
M \geq \tfrac{2}{\epsilon_{o}}\,\big(2K\,\log(N) + 4K
\log(\tfrac{17}{\epsilon_{o}}) + \log
\tinv{\eta}\big),
\end{equation}
then for all $\bs x,\bs s\in \Sigma^*_K$  we have that
\begin{equation}
\label{eq:radius-implication}
\|\bs x - \bs s\|_{2} > \epsilon_{o}\ \Rightarrow\ A(\bs x) \neq A(\bs
s),  
\end{equation}
or equivalently
$$
A(\bs x) = A(\bs s) \ \Rightarrow\ \|\bs x - \bs s\|_{2} \leq \epsilon_{o},
$$
with probability higher than $1-\eta$.
\end{theorem}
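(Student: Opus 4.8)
The plan is to control, for a fixed pair of sparse unit vectors $\bs x, \bs s$ with $\|\bs x - \bs s\|_2 > \epsilon_o$, the probability that a single Gaussian row $\bs\varphi$ fails to separate them, i.e.\ that $\sign(\scp{\bs\varphi}{\bs x}) = \sign(\scp{\bs\varphi}{\bs s})$, and then to take a union bound over all measurements and over a net of sparse pairs. The key geometric fact is that for $\bs\varphi \sim \SGR{N}{1}$, the probability that $\bs\varphi$ lands in the ``double cone'' where the two signs agree is $1 - \ang{\bs x}{\bs s}$, where $\ang{\bs x}{\bs s}$ is the normalized angle $\tinv{\pi}\arccos\scp{\bs x}{\bs s}$; equivalently, the probability that the signs \emph{differ} equals $\ang{\bs x}{\bs s}$. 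Since $\bs x, \bs s$ lie on $S^{N-1}$, the chord length $\|\bs x - \bs s\|_2$ and the angle are related, and a standard inequality ($\arccos(1 - t^2/2) \geq t$ for $t \in [0,2]$, up to constants) gives $\ang{\bs x}{\bs s} \geq c\,\|\bs x - \bs s\|_2$ for an absolute constant $c$ — so whenever $\|\bs x - \bs s\|_2 > \epsilon_o$ the per-row separation probability is at least $c\,\epsilon_o$. Therefore the probability that \emph{none} of the $M$ independent rows separates $\bs x$ from $\bs s$ is at most $(1 - c\,\epsilon_o)^M \leq e^{-c\,\epsilon_o M}$.

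Next I would discretize. The difficulty with a naive union bound is that $\Sigma_K^*$ is a continuum (a union of $\binom{N}{K}$ spheres of dimension $K-1$), so I would build an $\epsilon_o/4$-cover of $\Sigma_K^*$: within each $K$-dimensional coordinate subsphere a covering of cardinality at most $(C/\epsilon_o)^K$ suffices, and there are $\binom{N}{K} \leq (eN/K)^K$ such subspheres, giving a net $\mathcal{G}$ of size at most $(eN/K)^K (C/\epsilon_o)^K$. Applying the tail bound above and a union bound over all $\binom{|\mathcal{G}|}{2} \le |\mathcal{G}|^2$ pairs in the net whose mutual distance exceeds $\epsilon_o/2$, we get that \emph{every} such net-pair is separated by at least one row, except with probability at most $|\mathcal{G}|^2 e^{-c\,\epsilon_o M}$. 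Forcing this to be below $\eta$ and solving for $M$ yields $M \gtrsim \tinv{\epsilon_o}\big(2K\log(N) + cK\log(1/\epsilon_o) + \log(1/\eta)\big)$, matching the form of \eqref{eq:gaussian-number-meas-consist} (the constants $17$ and $4$ come from carefully tracking the covering-number constant and the factor $e/K$).

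Finally I would pass from the net back to all of $\Sigma_K^*$. This is the step that needs a little care, since separation (unlike an isometry bound) does not obviously survive perturbation: one cannot simply say ``move $\bs x$ to its net point and conclude.'' The standard fix is to argue the contrapositive at the level of \emph{consistency cells}: if $A(\bs x) = A(\bs s)$ for some $\bs x, \bs s$ with $\|\bs x - \bs s\|_2 > \epsilon_o$, pick net points $\bs x', \bs s'$ within $\epsilon_o/4$ of $\bs x, \bs s$ respectively, so $\|\bs x' - \bs s'\|_2 > \epsilon_o/2$; then show that, on the high-probability event, the measurements cannot be constant on a ball of radius $\epsilon_o/4$ around any net point unless that ball is tiny — more precisely, one refines the net argument so that each net point $\bs x'$ is \emph{also} separated from its own antipodal-ish neighbors, forcing $A$ to be non-constant on each $\epsilon_o/4$-ball, which then propagates the separation of $\bs x', \bs s'$ down to $\bs x, \bs s$. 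The cleanest route is to absorb this into the covering constant from the outset: use an $\epsilon_o/c'$-net with $c'$ chosen large enough that net-level separation at scale $\epsilon_o/2$ plus two perturbations of size $\epsilon_o/c'$ still forces $A(\bs x)\neq A(\bs s)$. This is precisely where the main technical obstacle lies, and is the part I would expect to occupy the bulk of Appendix~\ref{sec:proof-hash}; everything else (the per-row angle computation and the union bound) is routine.
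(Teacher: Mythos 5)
Your skeleton matches the paper's proof in Appendix~\ref{sec:proof-hash}: a per-row separation probability equal to the normalized angle (hence at least $\epsilon_o/\pi$ once $\|\bs x-\bs s\|_2>\epsilon_o$), a $\delta$-covering of each coordinate subsphere, a union bound over the $\binom{N}{2K}$ joint supports and over net pairs, and a geometric decay $(1-c\,\epsilon_o)^M$ over the $M$ independent rows. However, the step you yourself flag as delicate---passing from net pairs back to arbitrary pairs in $\Sigma^*_K$---is a genuine gap, and the fix you sketch does not work. Knowing that the net points satisfy $A(\bs x')\neq A(\bs s')$, or that $A$ is non-constant on each $\epsilon_o/c'$-ball, says nothing about $A(\bs x)$ versus $A(\bs s)$: since $A$ is a sign map, a point arbitrarily close to $\bs x'$ can lie on the opposite side of the very hyperplane that separates $\bs x'$ from $\bs s'$, and may then agree with $\bs s$ on every row. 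Enlarging $c'$ cannot repair this, because the instability of $\sign$ at its discontinuity is scale-free; no choice of covering constant makes ``net-level separation plus two small perturbations'' imply separation of the perturbed pair.

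The paper's resolution is to change the per-row event \emph{before} any union bound: for net points $\bs p,\bs q$ it lower bounds the probability that a single Gaussian hyperplane separates the \emph{entire} balls $B^*_\delta(\bs p)$ and $B^*_\delta(\bs q)$, namely $p_1\geq d_S(\bs p,\bs q)-\sqrt{\tfrac{\pi}{2}D}\,\delta$ (Lemma~\ref{lemma:p0-p1-lower-bounds}, proved via the set identity of Lemma~\ref{lemma-W-V-xy-equiv} and an integration in hyperspherical coordinates; its two-sided analogue is Lemma~\ref{lem:conc-prop-hamming-balls}). With the ``for all $\bs u\in B^*_\delta(\bs p)$, $\bs v\in B^*_\delta(\bs q)$'' quantifier inside the probabilistic event, the extension from the net to the continuum is immediate, and the additive penalty $\sqrt{\tfrac{\pi}{2}D}\,\delta$ forces the choice $\delta\sim\epsilon_o/\sqrt{K}$ (with $D=2K$), which is exactly where the $4K\log(17/\epsilon_o)$ term in (\ref{eq:gaussian-number-meas-consist}) originates. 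This ball-separation lemma is the missing ingredient in your proposal; it is the technical heart of the argument (Appendix~\ref{sec:proof-lemma-conc-prop-hamming-balls}), not a bookkeeping adjustment of the covering radius, and without it (or an equivalent statement) your net argument cannot be completed.
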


Theorem~\ref{thm:hash} is a uniform reconstruction result, meaning
that with high probability all vectors $\bs x,\bs s\in \Sigma^*_K$ can
be reconstructed as opposed to a non-uniform result where each vector
could be reconstructed with high probability.

As derived in Appendix~\ref{app:proof_epsilon_MK_complex},
Theorem~\ref{thm:hash} demonstrates that if we use Gaussian matrices
in the mapping~$A$, then, given a fixed probability
level $\eta$, the reconstruction decoder $\Delta^{\rm 1bit}(\ys,
\Phi, K)$ will recover signals with error order
\begin{equation*}
  \epsilon_{o}\ =\ O\big( \tfrac{K}{M} \log \tfrac{MN}{K}\big),
\end{equation*}
which decays almost optimally compared to the lower bound given in
Theorem \ref{res:lower_bound} up to a log factor in $MN/K$. Whether
the gap can be closed, with tighter lower or upper bounds is still an
open question. Notice that the hidden proportionality factor in this
last relation depends linearly on $\log 1/\eta$ which is assumed
fixed.

We should also note a few minor extensions of Theorem~\ref{thm:hash}.
We can multiply the rows of $\Phi$ with a positive scalar without
changing the signs of the measurements. By normalizing the rows of the
Gaussian matrix we obtain another class of matrices, ones with rows
drawn uniformly from the unit $\ell_2$ sphere in $\Rbb^N$. It is thus
straightforward to extend the Theorem to matrices with such rows as
well.  Furthermore, these projections are rotation invariant (often
referred to as ``universal'' in CS systems), meaning that the theorem
remains valid for sparse signals in any basis $\Psi$, \ie for $\bs
x,\bs s$ belonging to $\Sigma^*_{\Psi,K}:=\{\bs u=\Psi\bs
\alpha\in\Rbb^N: \bs \alpha \in \Sigma^*_K\}$.  This is true since for
any orthonormal basis $\Psi\in\Rbb^{N \times N}$, $\Phi'=\Phi\Psi \sim
\SGR{M}{N}$ when $\Phi \sim \SGR{M}{N}$.

\subsection{Related Work}
\label{sec:related-works}

A similar result to Theorem~\ref{thm:hash} has been recently shown
for sign measurements of non-sparse signals in the context of
quantization using frame
permutations~\cite{vivekQuantFrame}. Specifically, it has been shown
that reconstruction from sign measurements of signals can be achieved
(almost surely) with an error rate decay arbitrarily close to
$O(1/M)$. Our main contribution here is demonstrating that this result is true
uniformly for all $K$-sparse vectors in $\Rbb^N$, given a \emph{sparse}
and \emph{consistent} decoder. Our results, in addition to introducing
the almost linear dependence on $K$, also show that proving this error
bound uniformly for all $K$-sparse signals involves a logarithmic
penalty in $(MN)/K$. This does not seem to be necessary from the lower
bound in the previous section.  We will see in Section~\ref{sec:sims}
that for Gaussian matrices, the optimal error behavior is empirically
exhibited on average.  Finally, we note that for a constant
$\epsilon_0$, the number of measurements required to guarantee
(\ref{eq:radius-implication}) is $M = O(K\log N)$, nearly the same as
order in conventional~CS.

Furthermore, since the first appearance of our work, a bound on the
achievable reconstruction error for compressible signals and for
signals in arbitrary subsets of $\Rbb^N$ appeared
in~\cite{plan2011dimension,plan2011one}. Specifically for compressible
signals, that works leads to error decay $\epsilon = O((\tfrac{K}{M}
\log \tfrac{N}{K})^{1/4})$, which decreases more slowly (with $K/M$)
than both our bound and the one provided
in~\cite{vivekQuantFrame}. However, the results
in~\cite{plan2011dimension,plan2011one} are for more general classes
of signals.
 
We can also view the binary measurements as a {\em hash} or a {\em
  sketch} of the signal. With this interpretation of the result we
guarantee with high probability that no sparse vectors with Euclidean
distance greater than $\epsilon_{o}$ will ``hash'' to the same binary
measurements.  In fact, similar results play a key role in
\emph{locality sensitive hashing} (LSH), a technique that aims to
efficiently perform approximate nearest neighbors searches from
quantized
projections~\cite{indyk1998approximate,AndDatImm::2006::Locality-sensitive-hashing,AndInd::2008::Near-optimal-hashing,raginsky2009locality}. Most
LSH results examine the performance on point-clouds of a discrete
number of signals instead of the infinite subspaces that we explore in this
paper. Furthermore, the primary goal of the LSH is to preserve the
structure of the nearest neighbors with high probability. Instead, in
this paper we are concerned with the ability to reconstruct the
signal from the hash, as well as the robustness of this reconstruction
to measurement noise and signal model mismatch.  To enable these properties, we require a property of the mapping $A$ that preserves the structure (geometry) of the entire signal set. Thus, in
the next section we seek an embedding property of $A$ that preserves geometry for the set of sparse signals and thus ensures
robust reconstruction.

\section{Acquisition and Reconstruction Robustness}
\label{sec:secmainresults}

\subsection{Binary $\epsilon$-stable embeddings}
\label{subsec:prelim}

In this section we establish an embedding property for the $1$-bit CS mapping $A$
that ensures that the sparse signal geometry is preserved in the
measurements, analogous to the RIP for real-valued measurements. This robustness property enables us to upper
bound the reconstruction performance even when some measurement signs
have been changed due to noise.  Conventional CS achieves robustness
via the $\delta$-stable embeddings of sparse vectors (\ref{eq:se}) discussed
in Section~\ref{sec:intro}.  This embedding is a restricted
\emph{quasi-isometry} between the metric spaces $(\mathbb{R}^{N},
d_{X})$ and $(\mathbb{R}^{M}, d_{Y})$, where the distance metrics
$d_{X}$ and $d_{Y}$ are the $\ell_{2}$-norm in dimensions $N$ and $M$,
respectively, and the domain is restricted to sparse
signals.\footnote{A function $A: X\rightarrow Y$ is called a
  \emph{quasi-isometry} between metric spaces $(X, d_{X})$ and $(Y,
  d_{Y})$ if there exists $C>0$ and $D\geq0$ such that
  $\frac{1}{C}d_{X}(\bs x, \bs s) - D \leq d_{Y}(A(\bs x), A(\bs s))
  \leq Cd_{X}(\bs x, \bs s) + D$ for $\bs x, \bs s \in X$, and $E>0$
  such that $d_{Y}(y,A(\bs x))<E$ for all $y\in Y$
  \cite{Bridson::2008::geom-comb-group-th}.  Since $D=0$ for
  $\delta$-stable embeddings, they are also called bi-Lipschitz
  mappings.}  We seek a similar definition for our embedding;
however, now the signals and measurements lie in the different spaces
$S^{N-1}$ and $\mathcal{B}^{M}$, respectively.  Thus, we first consider appropriate
distance metrics in these spaces.

The {Hamming distance} is the natural distance  for
counting the number of unequal bits between two measurement vectors.
Specifically, for $\bar{\bs a}, \bar{\bs b} \in \mathcal{B}^{M}$ we define the \emph {normalized
Hamming distance} as
\begin{equation*}
d_{H}(\bar{\bs a}, \bar{\bs b}) = \tfrac{1}{M}\sum_{i=1}^{M}\bar a_{i}
\oplus \bar b_{i},
\end{equation*}
where $\bar a \oplus \bar b$ is the XOR operation between $\bar a,
\bar b \in \cl B$ such that $\bar a \oplus \bar b$ equals 0 if $\bar
a=\bar b$ and 1 otherwise.  The distance is normalized such that
$d_{H} \in [0,1]$.  In the signal space we only consider unit-norm
vectors, thus, a natural distance is the angle formed by any two of
these vectors.  Specifically, for $\bs x, \bs s \in S^{N-1}$, we
consider
\begin{equation*}
d_{S}(\bs x, \bs s) := \tfrac{1}{\pi}\arccos\langle \bs x, \bs s \rangle.
\end{equation*} 
As with the Hamming distance, we normalize the true angle
$\arccos\langle \bs x, \bs s \rangle$ such that $d_{S} \in
[0,1]$. Note that since both vectors have the same norm, the inner
product $\langle \bs x, \bs s \rangle$ can easily be mapped to the
$\ell_{2}$-distance using the polarization identity. 

Using these distance metrics we define the binary stable embedding.

\begin{definition}[Binary $\epsilon$-Stable Embedding]
\label{def:bse}
Let $\epsilon \in (0,1)$.  A mapping $A: \mathbb{R}^{N} \rightarrow
\mathcal{B}^{M}$ is a \textbf{binary $\epsilon$-stable embedding}
(B$\epsilon$SE) of order $K$ for sparse vectors if
\begin{equation*}
\ang{\bs x}{\bs s} - \epsilon \leq d_{H}(A(\bs x),A(\bs s)) \leq \ang{\bs x}{\bs s} + \epsilon
\end{equation*}
for all $\bs x,\bs
  s \in S^{N-1}$ with $|\,\supp(\bs x) \cup \supp(\bs s)\,| \leq K$.
\end{definition}
Our definition describes a specific quasi-isometry between the two
metric spaces $(S^{N-1}, d_{S})$ and $(\mathcal{B}^{M}, d_{H})$,
restricted to sparse vectors.  While this mirrors the form of the
$\delta$-stable embedding for sparse vectors, one important difference
is that the sensitivity term $\epsilon$ is additive, rather than
multiplicative, and thus the B$\epsilon$SE is not bi-Lipschitz.  This
is a necessary side-effect of the loss of information due to
quantization.

Any B$\epsilon$SE $A(\cdot)$ of order $2K$ enables robustness
guarantees on any reconstruction algorithm extracting a unit sparse
signal estimate $\bs x^*$ of $\bs x\in\Sigma^*_K$. In this case, the
angular error is immediately bounded by
\begin{equation*}
\ang{\bs x}{\bs x^*}\ \leq d_H\big(A(\bs x),A(\bs x^*)\big) + \epsilon.
\end{equation*}

Thus, if an algorithm returns a unit norm sparse solution with
measurements that are not consistent (\ie $d_{H}(A(\bs x), A(\bs
x^{*})) > 0$), as is the case with several
algorithms~\cite{BouBar::2008::1-Bit-compressive,Bou::2009::Greedy-sparse,LasWenYin::2010::Trust-but-verify:},
then the worst-case angular reconstruction error is close to Hamming
distance between the estimate's measurements' signs and the original
measurements' signs. Section~\ref{sec:sims} verifies this behavior
with simulation results.  Furthermore, in Section~\ref{subsec:corrupt}
we use the \BeSE\ property to guarantee that if measurements are
corrupted by noise or if signals are not exactly sparse, then the
reconstruction error is bounded.

Note that, in the best case, for a B$\epsilon$SE $A(\cdot)$, the
angular error of any sparse and consistent $\Delta^{\rm
  1bit}(\ys,\Phi,K)$ decoder is bounded by $\epsilon$ since then
$d_H\big(A(\bs x),A(\bs x^*)\big) = 0$. As we have seen earlier this
is to be expected because, unlike conventional noiseless CS,
quantization fundamentally introduces uncertainty and exact recovery
cannot be guaranteed. This is an obvious consequence of the mapping of
the infinite set $\Sigma_K^*$ to a discrete set of quantized values.

We next identify a class of
matrices $\Phi$ for which $A$ is a B$\epsilon$SE.

\subsection{Binary $\epsilon$-stable embeddings via random projections}
\label{subsec:results}

As is the case for conventional CS systems with RIP, designing a
$\Phi$ for $1$-bit CS such that $A$ has the B$\epsilon$SE property is
possibly a computationally intractable task (and no such algorithm is
yet known).  Fortunately, an overwhelming number of ``good'' matrices
do exist.  Specifically we again focus our analysis on Gaussian
matrices $\Phi\sim \SGR{M}{N}$ as in as in
Section~\ref{subsec:optgauss}.  As motivation that this choice of
$\Phi$ will indeed enable robustness, we begin with a classical
concentration of measure result for binary measurements from a
Gaussian matrix.
\begin{lemma}
\label{lem:conc-prop-hamming}
Let $\bs x,\bs s\in
S^{N-1}$  be a pair of arbitrary fixed vectors, draw $\Phi$ according to $\Phi\sim\SGR{M}{N}$, and let the
mapping $A: \mathbb{R}^{N} \rightarrow \mathcal{B}^{M}$ be defined as
in (\ref{eq:defh}).  Fix $\epsilon > 0$.  Then we have
\begin{equation}
\label{eq:conc-prop-hamming}
\bb P\left(\ \left|\,d_H\big(A(\bs x),A(\bs s)\big)\ -\ \ang{\bs x}{\bs s}\,\right|\ \leq\ \epsilon\
\right)\ \geq\ 1 - 2\,e^{-2\epsilon^2 M},
\end{equation}
where the probability is with respect to the generation of $\Phi$.
\end{lemma}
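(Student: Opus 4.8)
The plan is to reduce the statement to a single Bernoulli concentration inequality. First I would observe that, row by row, the events $\{\sign(\bs\varphi^\top\bs x)\neq\sign(\bs\varphi^\top\bs s)\}$ for $\bs\varphi\sim\SGR{N}{1}$ are i.i.d.\ across the $M$ rows of $\Phi$, so $M\,d_H\big(A(\bs x),A(\bs s)\big)=\sum_{i=1}^M \bar a_i\oplus\bar b_i$ is a sum of $M$ i.i.d.\ Bernoulli variables. The key classical fact (the ``random hyperplane'' lemma underlying Goemans--Williamson, also used in LSH) is that for a single standard Gaussian vector $\bs\varphi$,
\begin{equation*}
\bb P\big(\sign(\bs\varphi^\top\bs x)\neq \sign(\bs\varphi^\top\bs s)\big)\ =\ \tfrac{1}{\pi}\arccos\langle\bs x,\bs s\rangle\ =\ \ang{\bs x}{\bs s}.
\end{equation*}
I would prove this by noting that only the projection of $\bs\varphi$ onto the two-dimensional span of $\bs x,\bs s$ matters, that this projection is rotationally invariant in that plane, and that the signs of $\bs\varphi^\top\bs x$ and $\bs\varphi^\top\bs s$ disagree exactly when the (uniformly distributed) direction of the projected vector falls in one of the two arcs of angular width $\arccos\langle\bs x,\bs s\rangle$ bounded by the lines orthogonal to $\bs x$ and to $\bs s$; the total measure of those two arcs, divided by $2\pi$, gives $\frac{1}{\pi}\arccos\langle\bs x,\bs s\rangle$. (Degenerate cases where $\bs\varphi^\top\bs x=0$ occur with probability zero, so the $\pm1$ convention in $\sign$ is harmless.)

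Given this, $\E\, d_H\big(A(\bs x),A(\bs s)\big)=\ang{\bs x}{\bs s}$, and the claim \eqref{eq:conc-prop-hamming} is precisely a two-sided tail bound on the deviation of the empirical mean of $M$ i.i.d.\ Bernoulli variables from its expectation. I would finish by invoking Hoeffding's inequality: for $M$ i.i.d.\ random variables bounded in $[0,1]$,
\begin{equation*}
\bb P\Big(\big|\tfrac{1}{M}\textstyle\sum_{i=1}^M (\bar a_i\oplus\bar b_i) - \ang{\bs x}{\bs s}\big| \geq \epsilon\Big)\ \leq\ 2\,e^{-2\epsilon^2 M},
\end{equation*}
which is exactly the complement of the event in the lemma. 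Taking complements yields \eqref{eq:conc-prop-hamming}.

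There is essentially no serious obstacle here; this lemma is the easy ``pointwise'' warm-up before the uniform B$\epsilon$SE result. The only point requiring a little care is the geometric computation of the single-row collision probability, and in particular checking that the argument survives when $\bs x$ and $\bs s$ are collinear (then the two arcs have width $0$ or $\pi$, giving probability $0$ or $1$ as it should) and that events of the form $\bs\varphi^\top\bs x = 0$ carry no mass. Everything else is a direct citation of Hoeffding's inequality, so I would keep the write-up short and spend the bulk of the exposition on the hyperplane-collision identity.
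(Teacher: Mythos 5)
Your proposal is correct and follows essentially the same route as the paper: the paper also cites the Goemans--Williamson single-hyperplane identity $\bb P[A_j(\bs x)\neq A_j(\bs s)]=\ang{\bs x}{\bs s}$ (you sketch its standard rotational-invariance proof instead of citing it) and then applies Hoeffding's inequality to the binomial variable $M\,d_H\big(A(\bs x),A(\bs s)\big)$. No gaps.
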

\begin{proof}
  This lemma is a simple consequence of Lemma 3.2 in
  \cite{GoeWil::1995::Improved-approximation} which shows that, for one
  measurement, $\bb P[A_j(\bs x) \neq A_j(\bs s)] = \ang{\bs x}{\bs
    s}$. The result then follows by applying Hoeffding's inequality to
  the binomial random variable $M d_H\big(A(\bs x),A(\bs s)\big)$ with
  $M$ trials.
\end{proof}

In words, Lemma~\ref{lem:conc-prop-hamming} implies that the Hamming
distance between two binary measurement vectors $A(\bs x),A(\bs s)$
tends to the angle between the signals $\bs x$ and $\bs s$ as the
number of measurements $M$ increases.  In
\cite{GoeWil::1995::Improved-approximation} this fact is used in the
context of randomized rounding for max-cut problems; however, this
property has also been used in similar contexts as ours with regards
to preservation of inner products from binary
measurements~\cite{ShaJacSta::2010::Randomly-driven,GupRecNow::2010::Sample-complexity}.

The expression (\ref{eq:conc-prop-hamming}) indeed looks similar to
the definition of the B$\epsilon$SE, however, it only holds for a
fixed pair of arbitrary (not necessarily sparse) signals, chosen
prior to drawing $\Phi$.  Our goal is to extend (\ref{eq:conc-prop-hamming}) to cover the entire
set of sparse signals.  Indeed, concentration results similar to
Lemma~\ref{lem:conc-prop-hamming}, although expressed in terms of
norms, have been used to demonstrate the
RIP~\cite{BarDavDeV::2008::A-Simple-Proof}.  These techniques usually
demonstrate that the cardinality of the space of all sparse signals is
sufficiently small, such that the concentration result can be applied
to demonstrate that distances are preserved with relatively few
measurements.

Unfortunately, due to the non-linearity of $A$ we cannot immediately
apply Lemma~\ref{lem:conc-prop-hamming} using the same procedure
as in \cite{BarDavDeV::2008::A-Simple-Proof}. To briefly
summarize, \cite{BarDavDeV::2008::A-Simple-Proof} proceeds by covering
the set of all $K$-sparse signals $\Sigma_K$ with a finite set of
points (with covering radius $\delta>0$).  A concentration inequality
is then applied to this set of points. Since any sparse signal lies in
a $\delta$-neighborhood of at least one such point, the concentration
property can be extended from the finite set to $\Sigma_K$ by bounding
the distance between the measurements of the points within the
$\delta$-neighborhood. Such an approach cannot be used to extend
(\ref{eq:conc-prop-hamming}) to $\Sigma_K$, because the severe
discontinuity of our mapping does not permit us to characterize the
measurements $A(\bs x + \bs s)$ using $A(\bs x)$ and $A(\bs s)$ and
obtain a bound on the distance between measurements of signals in a
$\delta$-neighborhood.

To resolve this issue, we extend Lemma~\ref{lem:conc-prop-hamming}
to include all points within Euclidean balls around the vectors $\bs
x$ and $\bs s$ inside the (sub) sphere $\Sigma^*(T):=\{\bs u\in
S^{N-1}:\supp\bs u \subset T\}$ for some fixed support set $T\subset
[N] := \{1,\,\cdots,N\}$ of size $|T|=D$.  Define the $\delta$-ball
$B_{\delta}(\bs x) := \{ \bs a \in S^{N-1}: \|\bs x - \bs
a\|_{2} < \delta \}$ to be the ball of Euclidean distance $\delta$
around $\bs x$, and let $B_\delta^*(\bs x) := B_\delta(\bs x)\cap\Sigma^*(T)$. 
\begin{lemma}
\label{lem:conc-prop-hamming-balls}
Given $T\subset [N]$ of size $|T|=D$, let $\Phi$ be a
matrix generated as $\Phi\sim\SGR{M}{N}$, and let the mapping $A:
\mathbb{R}^{N} \rightarrow \mathcal{B}^{M}$ be defined as in
(\ref{eq:defh}).  Fix $\epsilon > 0$ and $0\leq\delta\leq1$.  For any
$\bs x,\bs s\in \Sigma^*(T)$, we have
\begin{multline*}
\bb P \left(\ \forall\,\bs u\in B_{\delta}^{*}(\bs x), \forall\,\bs
v\in B_{\delta}^{*}(\bs s),\ \big|\,d_H\big(A(\bs u),A(\bs v)\big)\ -\ \ang{\bs x}{\bs
s}\,\big|\ \leq\ \epsilon + \sqrt{\tfrac{\pi}{2} D}\,\delta\, \right)\ \geq\ 1 - 2\,e^{-2\epsilon^2 M}.
\end{multline*}
\end{lemma}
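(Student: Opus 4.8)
The plan is to upgrade the pointwise concentration in Lemma~\ref{lem:conc-prop-hamming} into a statement that holds simultaneously for all perturbations of $\bs x$ and $\bs s$ inside the small Euclidean balls $B_\delta^*(\bs x)$ and $B_\delta^*(\bs s)$, at the cost of an additive slack term $\sqrt{\tfrac{\pi}{2}D}\,\delta$. The key point is that once $\Phi$ is fixed, the map $\bs u \mapsto d_H(A(\bs u), A(\bs v))$ can jump wildly, so we cannot control it uniformly over a ball by a union bound or an $\epsilon$-net directly on the Hamming distances. Instead, I would split the error budget: apply Lemma~\ref{lem:conc-prop-hamming} to the \emph{centers} $\bs x, \bs s$ to get $|d_H(A(\bs x),A(\bs s)) - \ang{\bs x}{\bs s}| \le \epsilon$ with probability $\ge 1 - 2e^{-2\epsilon^2 M}$, and then separately bound, \emph{deterministically given $\Phi$} (or with probability one), the quantity $\sup_{\bs u\in B_\delta^*(\bs x),\,\bs v\in B_\delta^*(\bs s)} \big|d_H(A(\bs u),A(\bs v)) - d_H(A(\bs x),A(\bs s))\big|$ by $\sqrt{\tfrac{\pi}{2}D}\,\delta$. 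Combining the two via the triangle inequality gives the claim, since the probability is spent only on the centers.

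The heart of the argument is therefore the deterministic ball bound. Moving $\bs x$ to $\bs u$ with $\|\bs x - \bs u\|_2 < \delta$ can only change $A_j(\cdot)$ on those rows $j$ for which the hyperplane $\{\bs z : \langle \bs \varphi_j, \bs z\rangle = 0\}$ separates $\bs x$ from $\bs u$; the number of such rows, divided by $M$, is exactly the change in Hamming distance we must control. The standard way to handle this is to note that for a fixed row $\bs\varphi_j$ restricted to the coordinate subspace indexed by $T$ (dimension $D$), the event that $\bs\varphi_j$ separates two points $\bs x, \bs u\in\Sigma^*(T)$ with angle $\theta=\pi\,d_S(\bs x,\bs u)$ occurs with probability $d_S(\bs x,\bs u)$ by the Goemans--Williamson identity, and for nearby unit vectors $d_S(\bs x,\bs u) \le \tfrac{1}{\pi}\cdot\tfrac{\pi}{2}\|\bs x-\bs u\|_2 \cdot(1+o(1))$—more carefully, $\arccos\langle\bs x,\bs u\rangle \le \tfrac{\pi}{2}\|\bs x - \bs u\|_2$ for unit vectors, so $d_S(\bs x,\bs u) \le \tfrac12\|\bs x-\bs u\|_2 < \tfrac\delta2$. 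However, this only controls the \emph{expected} number of flipped rows in expectation, not the supremum over the whole ball, so I would instead pass to an $\epsilon'$-net of $B_\delta^*(\bs x)$ and of $B_\delta^*(\bs s)$ within the $D$-dimensional sphere $\Sigma^*(T)$, apply a union bound over the net of the concentration of the flip-count, and then absorb the residual net-radius error into $\delta$. The factor $\sqrt{D}$ is exactly the price of the $D$-dimensional net (whose log-cardinality is $O(D\log(1/\epsilon'))$ and whose radius can be taken comparable to $\delta$), and the constant $\sqrt{\pi/2}$ comes from the $\arccos$-to-distance conversion together with the worst-case geometry of how a spherical cap's measure scales with its radius.

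The main obstacle is precisely making the uniformity over the balls rigorous without blowing up the failure probability: the clean form of the statement charges \emph{no} extra probability for the ball enlargement, so the ball contribution must be handled either (i) almost surely—by a deterministic geometric counting bound on how many of the $M$ fixed hyperplanes can cross a ball of radius $\delta$ in a $D$-dimensional subspace, which in the worst case is all of them, so this route fails—or (ii) by folding the net's union bound into the \emph{same} event that already controls the centers, \ie choosing $\epsilon$ in Lemma~\ref{lem:conc-prop-hamming} and the net parameters so that a single exponential $2e^{-2\epsilon^2 M}$ dominates. I expect the actual appendix proof takes route (ii): build the net on $B_\delta^*(\bs x)\times B_\delta^*(\bs s)$, apply Lemma~\ref{lem:conc-prop-hamming} to every net pair simultaneously (the union bound over a net of size $e^{O(D\log(1/\delta))}$ against $2e^{-2\epsilon^2 M}$ is swallowed because $M$ will later be taken large, but here it must be stated carefully—likely the slack $\sqrt{\pi D/2}\,\delta$ is chosen large enough relative to the net radius that the net's own failure probability is still bounded by $2e^{-2\epsilon^2 M}$ after a slight reparametrization), then extend from net points to all of $B_\delta^*$ by the deterministic bound that moving within distance $\delta$ changes $d_H$ by at most the fraction of hyperplanes separating the two points, which is at most $\tfrac12\|\cdot\|_2 \le \tfrac{\delta}{2}$ in \emph{expectation} and is controlled in the worst case by the very net estimate we just established. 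Verifying that the constants line up to give exactly $\epsilon + \sqrt{\pi D/2}\,\delta$ with the stated probability is the delicate bookkeeping step.
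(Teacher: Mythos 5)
There is a genuine gap, and you have in fact identified it yourself without resolving it. Your decomposition---pointwise concentration at the centers $\bs x,\bs s$ plus a separate bound on $\sup_{\bs u\in B^*_\delta(\bs x),\,\bs v\in B^*_\delta(\bs s)}|d_H(A(\bs u),A(\bs v))-d_H(A(\bs x),A(\bs s))|$---cannot be completed by either of the routes you sketch. The deterministic route fails for the reason you state (for a fixed $\Phi$, all $M$ hyperplanes could in principle cut a $\delta$-ball). The net route fails too, for two reasons: first, a union bound over a net of $B^*_\delta(\bs x)\times B^*_\delta(\bs s)$ of cardinality $(C/\delta)^{O(D)}$ multiplies the failure probability by that cardinality, so you cannot recover the stated bound $1-2e^{-2\epsilon^2M}$, which charges \emph{no} probability for the ball enlargement (and the lemma is stated for fixed $\epsilon,\delta,M$, so you cannot ``later take $M$ large'' to absorb it); second, even granting the net, extending from a net point to an arbitrary point of the ball runs into exactly the same sign-flip problem that killed the deterministic route---nearness in Euclidean distance does not control $d_H$ for a fixed $\Phi$---so the final extension step in your sketch is circular.

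The paper's proof uses a different decomposition that sidesteps both issues. For each row $i$ it defines the binary variables $Z_i^+:=\max\{A_i(\bs u)\oplus A_i(\bs v):\bs u\in B^*_\delta(\bs x),\,\bs v\in B^*_\delta(\bs s)\}$ and $Z_i^-:=\min\{A_i(\bs u)\oplus A_i(\bs v):\bs u\in B^*_\delta(\bs x),\,\bs v\in B^*_\delta(\bs s)\}$, i.e.\ the worst-case and best-case disagreement of that \emph{single} row over the two balls. By construction $\sum_i Z_i^-\leq M\,d_H(A(\bs u'),A(\bs v'))\leq \sum_i Z_i^+$ simultaneously for every pair $(\bs u',\bs v')$ in the balls, so uniformity is built in for free; and since $Z_i^\pm$ depends only on $\bs\varphi_i$, the $Z_i^\pm$ are i.i.d.\ Bernoulli across the $M$ rows, with success probabilities $1-p_0$ and $p_1$ where $p_0=\bb P[\forall\bs u,\bs v:\ A_i(\bs u)=A_i(\bs v)]$ and $p_1=\bb P[\forall\bs u,\bs v:\ A_i(\bs u)\neq A_i(\bs v)]$. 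Chernoff--Hoeffding applied to these two sums gives the probability $1-2e^{-2\epsilon^2M}$ with no net and no union bound. The slack $\sqrt{\tfrac{\pi}{2}D}\,\delta$ is then not net bookkeeping at all: it comes from a purely geometric estimate (Lemma~\ref{lemma:p0-p1-lower-bounds} in the appendix) showing $p_1\geq \ang{\bs x}{\bs s}-\sqrt{\tfrac{\pi}{2}D}\,\delta$ and $p_0\geq 1-\ang{\bs x}{\bs s}-\sqrt{\tfrac{\pi}{2}D}\,\delta$, proved by integrating the Gaussian measure, in spherical coordinates adapted to the plane of $\bs x$ and $\bs s$, of the hyperplanes that both separate (resp.\ fail to separate) the centers and pass at distance at least $\delta$ from each of them; the $\sqrt D$ factor arises from the Wallis-type integral $I_{D-2}\geq\sqrt{2\pi/D}$, not from a covering number. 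Your intuition that the expected per-row flip probability scales like $\delta$ is pointing at this computation, but the essential missing idea is the per-row max/min construction that converts uniformity over the balls into a property of a single Bernoulli parameter.
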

\noindent The proof of this result is given in
Appendix~\ref{sec:proof-lemma-conc-prop-hamming-balls}. It should be
noted that the proof does not depend on the radial behavior of the
Gaussian pdf in $\Rbb^N$. In other words, this result is easily
generalizable to matrices $\Phi$ whose rows are independent random
vectors drawn from an isotropic pdf .

In words, if the width $\delta$ is sufficiently small, then the Hamming
distance between the $1$-bit measurements $A(\bs u)$, $A(\bs v)$ of any points $\bs u$, $\bs v$ within the
balls  $B_{\delta}^{*}(\bs x)$, $B_{\delta}^{*}(\bs s)$, respectively, will be close to the angle between the centers of the balls.

Lemma~\ref{lem:conc-prop-hamming-balls} is key for providing a similar argument to that
in~\cite{BarDavDeV::2008::A-Simple-Proof}.  
We now simply need to count the number of pairs of $K$-sparse signals
that are euclidean distance $\delta$ apart.  The Lemma
can then be invoked to demonstrate that the angles between all of
these pairs will be approximately preserved by our mapping.\footnote{\label{foot:noiselessproof} We note that the covering argument in the proof of Theorem~\ref{thm:hash} also employs $\delta$-balls in similar fashion but only considers the probability that $d_{H} = 0$, rather than the concentration inequality.}  Thus, with Lemma~\ref{lem:conc-prop-hamming-balls} under our belt, we
demonstrate in Appendix~\ref{sec:proof-thm-1-bit-stable-embed} the
following result.
\begin{theorem}
\label{thm:1-bit-stable-embed}
Let $\Phi$ be a matrix generated as $\Phi\sim\SGR{M}{N}$ and let the
mapping $A: \mathbb{R}^{N} \rightarrow \mathcal{B}^{M}$ be defined as
in (\ref{eq:defh}).  Fix $0\leq \eta\leq 1$ and $\epsilon>0$.  If the
number of measurements is
\begin{equation}
\label{eq:numM}
M\ \geq\ \tfrac{2}{\epsilon^2}\big(K\,\log(N) +
2K\,\log(\tfrac{35}{\epsilon}) + \log(\tfrac{2}{\eta})
\big),
\end{equation}
then with probability exceeding $1-\eta$, the mapping $A$ is a B$\epsilon$SE of order $K$ for sparse vectors.
\end{theorem}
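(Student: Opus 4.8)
The plan is to run the classical ``discretize-then-union-bound'' argument used for the RIP in~\cite{BarDavDeV::2008::A-Simple-Proof}, but with Lemma~\ref{lem:conc-prop-hamming-balls} playing the role of the norm concentration inequality so as to sidestep the discontinuity of $A$. Since the B$\epsilon$SE requires control only over pairs $\bs x,\bs s$ whose joint support $\supp\bs x\cup\supp\bs s$ has size at most $K$, it suffices to fix a support set $T\subset[N]$ with $|T|=K$ and show the estimate holds for all $\bs x,\bs s\in\Sigma^*(T)$ simultaneously, then take a union bound over the $\binom{N}{K}\le (eN/K)^K$ choices of $T$. So fix such a $T$.

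First I would build a $\delta$-net of the unit sphere $\Sigma^*(T)\cong S^{K-1}$: a finite set $\cl G\subset\Sigma^*(T)$ with $|\cl G|\le (1+2/\delta)^K\le (3/\delta)^K$ such that every point of $\Sigma^*(T)$ lies within Euclidean distance $\delta$ of some net point. For each ordered pair $(\bs x_0,\bs s_0)\in\cl G\times\cl G$, apply Lemma~\ref{lem:conc-prop-hamming-balls} with $D=K$: with probability at least $1-2e^{-2\epsilon_0^2 M}$ (for a parameter $\epsilon_0$ to be fixed below), we have $|d_H(A(\bs u),A(\bs v))-\ang{\bs x_0}{\bs s_0}|\le\epsilon_0+\sqrt{\tfrac{\pi}{2}K}\,\delta$ for all $\bs u\in B^*_\delta(\bs x_0)$ and $\bs v\in B^*_\delta(\bs s_0)$. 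A union bound over the $|\cl G|^2\le(3/\delta)^{2K}$ pairs, and then over the $(eN/K)^K$ supports, says that with probability at least $1-2(eN/K)^K(3/\delta)^{2K}e^{-2\epsilon_0^2 M}$ this holds for every pair of net points in every support.

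Now I would upgrade from net points to all of $\Sigma^{N-1}$ restricted to $K$-sparse pairs. Given arbitrary $\bs x,\bs s$ with $|\supp\bs x\cup\supp\bs s|\le K$, set $T$ to be (a superset of size $K$ of) their joint support, pick net points $\bs x_0,\bs s_0\in\cl G$ with $\bs x\in B^*_\delta(\bs x_0)$, $\bs s\in B^*_\delta(\bs s_0)$, and use the net-point estimate to get $|d_H(A(\bs x),A(\bs s))-\ang{\bs x_0}{\bs s_0}|\le\epsilon_0+\sqrt{\tfrac{\pi}{2}K}\,\delta$; then Lipschitz continuity of $d_S$ on the sphere — concretely $|\ang{\bs x}{\bs s}-\ang{\bs x_0}{\bs s_0}|\le\tfrac{1}{\pi}(\|\bs x-\bs x_0\|_2+\|\bs s-\bs s_0\|_2)\le\tfrac{2\delta}{\pi}$, using that $\arccos$ is $1$-Lipschitz on $[-1,1]$ composed with unit vectors — absorbs the centering error. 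Combining, $|d_H(A(\bs x),A(\bs s))-\ang{\bs x}{\bs s}|\le\epsilon_0+(\sqrt{\tfrac{\pi}{2}K}+\tfrac{2}{\pi})\delta$. The last step is bookkeeping on the parameters: choose $\delta$ of order $\epsilon/(\sqrt{K}+1)$ (say $\delta=c\epsilon K^{-1/2}$) and $\epsilon_0=\epsilon/2$ so the total slack is at most $\epsilon$; substituting $\delta$ into $\log(3/\delta)$ produces the $2K\log(35/\epsilon)$-type term (with the $\log K$ contribution absorbed into constants and the choice of the numeric constant $35$), and forcing the failure probability $2(eN/K)^K(3/\delta)^{2K}e^{-\epsilon^2 M/2}\le\eta$ rearranges into exactly the bound~\eqref{eq:numM} on $M$, with the $\log(2/\eta)$ term coming from the leading factor of $2$.

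The main obstacle is the middle step — transferring the concentration from net points to all sparse pairs. For the ordinary RIP this is easy because $\|\Phi\cdot\|_2$ is itself Lipschitz, so a net estimate propagates directly; here $A$ is wildly discontinuous and no such propagation is available. This is precisely what Lemma~\ref{lem:conc-prop-hamming-balls} is engineered to fix: rather than controlling $d_H(A(\bs x),A(\bs s))$ via $d_H(A(\bs x_0),A(\bs s_0))$ plus a Lipschitz correction on $A$, the lemma already guarantees the bound \emph{uniformly over the balls} $B^*_\delta(\bs x_0),B^*_\delta(\bs s_0)$, so the only continuity we ever invoke is that of the \emph{angle} $d_S$, which is genuinely Lipschitz. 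Everything else is a standard net-cardinality and union-bound computation; the one place to be careful is that the net must be taken inside each fixed-support sphere (dimension $K$, hence net size $(3/\delta)^K$, not $(3/\delta)^N$), so that the exponent matches the $K\log N$ scaling rather than something larger.
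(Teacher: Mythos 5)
Your proposal follows the paper's proof essentially step for step: fix a support $T$ of size $K$, take a $\delta$-net of $\Sigma^*(T)$ of cardinality at most $(3/\delta)^K$, apply Lemma~\ref{lem:conc-prop-hamming-balls} with $D=K$ to each pair of net points, union bound over the $(3/\delta)^{2K}$ pairs and the $\binom{N}{K}\le (eN/K)^K$ supports, transfer to arbitrary $K$-sparse pairs using the ball-uniformity of the lemma together with continuity of $d_S$, and then choose $\delta\asymp\epsilon/\sqrt K$ and $\epsilon_0=\epsilon/2$ to arrive at \eqref{eq:numM}. The only slip is the claimed estimate $|\ang{\bs x}{\bs s}-\ang{\bs x_0}{\bs s_0}|\le\tfrac1\pi(\|\bs x-\bs x_0\|_2+\|\bs s-\bs s_0\|_2)$: the geodesic (angular) distance on the sphere exceeds the chordal distance, since $\pi\,\ang{\bs x}{\bs x_0}=2\arcsin(\tfrac12\|\bs x-\bs x_0\|_2)\ge\|\bs x-\bs x_0\|_2$, so the correct Lipschitz constant is $\tfrac12$ (via $\arcsin u\le\tfrac\pi2 u$) rather than $\tfrac1\pi$; this replaces your centering slack $2\delta/\pi$ by $\delta$, exactly as in the paper, and leaves the remaining bookkeeping and the final bound unchanged.
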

As with Lemma~\ref{lem:conc-prop-hamming-balls}, the theorem extends
easily to matrices $\Phi$ with independent rows in $\Rbb^N$ drawn from
an isotropic pdf in this space.

By choosing $\Phi\sim\SGR{M}{N}$ with $M = O(K\log N)$, with high
probability we ensure that the mapping $A$ is a B$\epsilon$SE.
Additionally, using (\ref{eq:numM}) with a fixed $\eta$ and the
development in Appendix~\ref{app:proof_epsilon_MK_complex}, we find
that the error decreases as

$$
\epsilon\ =\ O\left( \sqrt{\tfrac{K}{M}\, \log \tfrac{MN}{K}} \right).
$$ 

Unfortunately, this decay rate is slower, roughly by a factor of
$\sqrt{K/M}$, than the lower bound in Section~\ref{subsec:opt}.  This
error rate results from an application of the Chernoff-Hoeffding
inequality in the proof of Theorem~\ref{thm:1-bit-stable-embed}.  An
open question is whether it is possible to obtain a tighter bound
(with optimal error rate) for this robustness property.

As with Theorem~\ref{thm:hash}, Gaussian matrices provide a universal
mapping, \ie the result remains valid for sparse signals in a basis
$\Psi\in\Rbb^{N\times N}$. Moreover,
Theorem~\ref{thm:1-bit-stable-embed} can also be extended to rows of
$\Phi$ that are drawn uniformly on the sphere, since the rows of
$\Phi$ in Theorem~\ref{thm:1-bit-stable-embed} can be normalized
without affecting the outcome of the proof.  Note that normalizing the
Gaussian rows of $\Phi$ is as if they had been drawn from a uniform
distribution of unit-norm signals.

We have now established a random construction providing robust
B$\epsilon$SEs with high probability: $1$-bit quantized Gaussian
projections.  We now make use of this robustness by considering an
example where the measurements are corrupted by Gaussian noise.

\subsection{Noisy measurements and compressible signals}
\label{subsec:corrupt}
In practice, hardware systems may be inaccurate when taking measurements; this is often modeled by additive noise.  The mapping $A$ is robust to noise in an unusual way.  After quantization, the measurements can only take the values $-1$ or $1$.  Thus, we can analyze the reconstruction performance from corrupted measurements by considering how many measurements flip their signs.  
For example, we analyze the specific case of Gaussian noise on the measurements prior to quantization, \ie
\begin{equation}
\label{eq:defhn}
A_{n}(\bs x) := \sign(\Phi \bs x + \bs n),
\end{equation}
where $\bs n \in \mathbb{R}^{M}$ has i.i.d.\ elements $n_{i} \sim
\mathcal{N}(0,\sigma^{2})$.  In this case, we demonstrate, via the
following lemma, a bound on the Hamming distance between the corrupted
and ideal measurements with the B$\epsilon$SE from
Theorem~\ref{thm:1-bit-stable-embed} (see
Appendix~\ref{sec:proof-lemma-noise-corruption}).
\begin{lemma}
\label{prop:noise-corruption}
Let $\Phi$ be a matrix generated as $\Phi\sim\SGR{M}{N}$, let the
mapping $A: \mathbb{R}^{N} \rightarrow \mathcal{B}^{M}$ be defined as
in (\ref{eq:defh}), and let $A_{n}: \mathbb{R}^{N} \rightarrow
\mathcal{B}^{M}$ be defined as in (\ref{eq:defhn}).  Let $\bs
n\in\Rbb^M$ be a Gaussian random vector with \emph{i.i.d.}\ components
$n_i\sim \cl N(0,\sigma^2)$. Fix $\gamma > 0$.  Then, given $\bs
x\in\Rbb^N$, we have
\begin{align*}
&\bb E\left(\, d_H\big(A_{\bs n}(\bs x),A(\bs x)\big)\,\right)\ \leq\ e(\sigma,\|\bs x\|_{2}),\\
&\bb P\left(d_H\big(A_{\bs n}(\bs x),A(\bs x)\big) >\ e(\sigma,\|\bs x\|_{2}) +
\gamma\,\right) \leq\ e^{-2M\gamma^2},
\end{align*}
where $e(\sigma,\|\bs x\|_{2}):=\tinv{2} \frac{\sigma}{\sqrt{\|\bs x\|_{2}^2 +
\sigma^2}}\leq \tinv{2}\tfrac{\sigma}{\|\bs x\|_{2}}$.
\end{lemma}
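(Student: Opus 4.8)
The plan is to reduce the Hamming distance to a sum of i.i.d.\ indicator variables, one per measurement, and then apply a binomial tail bound exactly as in the proof of Lemma~\ref{lem:conc-prop-hamming}. Fix $\bs x\in\Rbb^N$ (the case $\bs x=0$ is immediate) and let $\bs\varphi_i$ denote the $i$-th row of $\Phi$. The $i$-th measurement flips precisely when $\sign\langle\bs\varphi_i,\bs x\rangle\neq\sign(\langle\bs\varphi_i,\bs x\rangle+n_i)$; the probability $p=p(\sigma,\|\bs x\|_2)$ of this event does not depend on $i$ since the pairs $(\bs\varphi_i,n_i)$ are identically distributed, and these flip events are mutually independent, so $M\,d_H\big(A_{\bs n}(\bs x),A(\bs x)\big)$ is $\mathrm{Binomial}(M,p)$. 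Given this, $\bb E\,d_H\big(A_{\bs n}(\bs x),A(\bs x)\big)=p$, so the first inequality reduces to the bound $p\le e(\sigma,\|\bs x\|_2)$, while the second follows from Hoeffding's inequality applied to this binomial together with the inclusion $\{d_H>e(\sigma,\|\bs x\|_2)+\gamma\}\subseteq\{d_H>p+\gamma\}$.

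The heart of the matter is therefore to evaluate, or at least bound, the one-measurement flip probability $p$. Writing $g_1:=\langle\bs\varphi_i,\bs x\rangle/\|\bs x\|_2$ and $g_2:=n_i/\sigma$, these are independent standard Gaussians and a flip occurs iff $\sign g_1\neq\sign(\|\bs x\|_2\,g_1+\sigma\,g_2)$. In the $(g_1,g_2)$-plane this event is a union of two opposite wedges at the origin, bounded by the $g_2$-axis and by the line $\|\bs x\|_2\,g_1+\sigma\,g_2=0$; by rotational invariance of the standard planar Gaussian, each wedge has probability equal to its aperture divided by $2\pi$, and an elementary computation gives aperture $\arctan(\sigma/\|\bs x\|_2)$ for each. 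Hence $p=\tfrac1\pi\arctan(\sigma/\|\bs x\|_2)=\tfrac1\pi\arccos\big(\|\bs x\|_2/\sqrt{\|\bs x\|_2^2+\sigma^2}\big)$. (Equivalently, one observes that $(\langle\bs\varphi_i,\bs x\rangle,\,\langle\bs\varphi_i,\bs x\rangle+n_i)$ is a centered bivariate Gaussian with correlation $\rho=\|\bs x\|_2/\sqrt{\|\bs x\|_2^2+\sigma^2}$ and invokes the orthant identity $\bb P[X>0,Y<0]=\tfrac14-\tfrac1{2\pi}\arcsin\rho$.)

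It remains to put this into the stated form. Set $\alpha:=\arctan(\sigma/\|\bs x\|_2)\in[0,\tfrac\pi2)$, so that $\sin\alpha=\sigma/\sqrt{\|\bs x\|_2^2+\sigma^2}$ and the desired bound $p\le e(\sigma,\|\bs x\|_2)$ reads $\alpha/\pi\le\tfrac12\sin\alpha$, \ie $\alpha\le\tfrac\pi2\sin\alpha$, an inequality that in fact holds on all of $[0,\tfrac\pi2]$: the map $\alpha\mapsto\tfrac\pi2\sin\alpha-\alpha$ has second derivative $-\tfrac\pi2\sin\alpha\le0$, so it is concave, and it vanishes at both endpoints $\alpha=0$ and $\alpha=\tfrac\pi2$, hence is nonnegative throughout. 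This gives $p\le e(\sigma,\|\bs x\|_2)$, and the final inequality $e(\sigma,\|\bs x\|_2)\le\tfrac12\,\sigma/\|\bs x\|_2$ is trivial from $\sqrt{\|\bs x\|_2^2+\sigma^2}\ge\|\bs x\|_2$.

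I expect the only genuine obstacle to be the clean evaluation of $p$: one must either set up the two-dimensional geometric picture with the correct wedge apertures or correctly recall the bivariate-Gaussian orthant-probability formula, and then verify the elementary bound $\arctan t\le\tfrac\pi2\,t/\sqrt{1+t^2}$. The reduction to a binomial, the expectation computation, and the Hoeffding step are all routine and parallel the proof of Lemma~\ref{lem:conc-prop-hamming}.
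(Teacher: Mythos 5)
Your proof is correct, and it reaches the conclusion by a genuinely different route for the key step. Both you and the paper reduce $M\,d_H\big(A_{\bs n}(\bs x),A(\bs x)\big)$ to a Binomial$(M,p)$ variable (the per-measurement flip events are i.i.d.\ since the rows $\bs\varphi_i$ and noise components $n_i$ are), and both finish with Chernoff--Hoeffding, so the skeleton is the same. The difference is in bounding the flip probability $p$. The paper conditions on the clean measurement $y_i=\scp{\bs\varphi_i}{\bs x}\sim\cl N(0,\|\bs x\|_2^2)$, writes $p=\int_0^\infty 2\,Q(u/\sigma)\,g(u;\|\bs x\|_2)\,\ud u$ with $Q$ the Gaussian tail, applies the tail bound $Q(t)\leq\tfrac12 e^{-t^2/2}$, and evaluates the resulting one-dimensional Gaussian integral to get $p\leq \tfrac12\,\sigma/\sqrt{\|\bs x\|_2^2+\sigma^2}$ directly. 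You instead compute $p$ \emph{exactly} via the two-dimensional rotational-invariance (or bivariate-Gaussian orthant) argument, obtaining $p=\tfrac{1}{\pi}\arctan(\sigma/\|\bs x\|_2)=\tfrac{1}{\pi}\arccos\big(\|\bs x\|_2/\sqrt{\|\bs x\|_2^2+\sigma^2}\big)$ --- in the same spirit as the Goemans--Williamson identity underlying Lemma~\ref{lem:conc-prop-hamming} --- and then recover the stated bound from the elementary inequality $\alpha\leq\tfrac{\pi}{2}\sin\alpha$ on $[0,\tfrac{\pi}{2}]$ (equivalently $\arcsin\lambda\leq\tfrac{\pi}{2}\lambda$, which the paper itself uses elsewhere); your concavity argument for this inequality is sound. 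Your route buys a sharper intermediate result (the exact flip probability, which would support a slightly tighter version of the lemma with $e$ replaced by $\tfrac1\pi\arctan(\sigma/\|\bs x\|_2)$), while the paper's route avoids the orthant identity at the cost of only producing an upper bound; the geometric wedge computation and the correlation $\rho=\|\bs x\|_2/\sqrt{\|\bs x\|_2^2+\sigma^2}$ you use are both correct.
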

If $\bs x_{n}^{*}$ is the estimate from a sparse consistent
reconstruction decoder $\Delta^{\rm 1bit}(A_{n}(\bs x),\Phi,K)$ from
the measurements $A_{n}(\bs x)$ with $\Phi \sim \cl N^{M\times
  N}(0,1)$ and if $M$ satisfies (\ref{eq:numM}), then it immediately follows from
Lemma~\ref{prop:noise-corruption} and
Theorem~\ref{thm:1-bit-stable-embed} that
\begin{equation}
\label{eq:noiserecon}
\ang{\bs x_{n}^{*}}{\bs x}\ \leq\ d_H\big(A_{\bs n}(\bs x),A(\bs x)\big) + \epsilon\ \leq\ \tinv{2}\tfrac{\sigma}{\|\bs x\|_{2}} +
\gamma + \epsilon,
\end{equation}
with a probability higher than $1-e^{-2M\gamma^2}-\eta$.  Given
alternative noise distributions, \eg Poisson noise, a similar
analysis can be carried out to determine the likely number of sign
flips and thus provide a bound on the error due to noise.

Another practical consideration is that real signals are not always
strictly $K$-sparse.  Indeed, it may be the case that signals are
\emph{compressible}; \ie they can be closely approximated by a
$K$-sparse signal. In this case, we can reuse the non-uniform result
of Lemma~\ref{lem:conc-prop-hamming} to see that, given $\bs x \in
\Rbb^N$ and for $\bs \Phi \sim \cl N^{M \times N}(0,1)$,
$$ 
\bb P\left( d_H\big(A(\bs x), A(\bs x_K)\big) >  d_S(\bs x,\bs x_K) + \gamma \right)\ \leq\ e^{- 2M \gamma^2}.
$$

In similar fashion to (\ref{eq:noiserecon}), if $M$ satisfies
(\ref{eq:numM}), this result and Theorem~\ref{thm:1-bit-stable-embed}
imply that, given $\bs x \in S^{N-1}$ (not necessarily sparse) and for
$\Phi \sim \cl N^{M \times N}(0,1)$ the angular reconstruction error
of $\bs x^*=\Delta^{\rm 1bit}(A(\bs x),\Phi,K)$ is such that $d_S(\bs
x^{*}, \bs x_K) \leq d_H(A(\bs x^*), A(\bs x_K)) + \epsilon =
d_H(A(\bs x), A(\bs x_K)) + \epsilon \leq\ d_S(\bs x,\bs x_K) + \gamma
+ \epsilon$, with probability higher than $1-e^{- 2M
  \gamma^2}-\eta$. Therefore, from the triangular inequality on $d_S$,
this provides the bound
\begin{align*}
d_S(\bs x^{*}, \bs x)\ &\leq\ 2\,d_S(\bs x, \bs x_K)  + \gamma + \epsilon,
\end{align*}
with the same probability. Much like conventional CS results, the
reconstruction error depends on the magnitude of the best $K$-term
approximation error of the signal, here expressed angularly by
$d_S(\bs x, \bs x_K)$.

This reconstruction error bound is non-uniform with respect to the
selection of $\bs x \in \Rbb^N$. A uniform bound on the B$\epsilon$SE for more
general classes of signals is developed
in~\cite{plan2011dimension,plan2011one}, albeit with a worse error
decay---$\epsilon = O((\tfrac{K}{M}\log \tfrac{N}{K})^{1/4})$ for compressible
signals. 

Thus far we have demonstrated a lower bound on the reconstruction
error from $1$-bit measurements (Theorem~\ref{thm:hash}) and
introduced a condition on the mapping $A$ that enables stable
reconstruction in noiseless, noisy, and compressible settings
(Definition~\ref{def:bse}).  We have furthermore demonstrated that a
large class of random matrices---specifically matrices with
coefficients drawn from a Gaussian distribution and matrices with rows
drawn uniformly from the unit sphere---provide good mappings
(Theorem~\ref{thm:1-bit-stable-embed}).  

Using these results we can characterize the error performance of any
algorithm that reconstructs a $K$-sparse signal. If the reconstructed
signal quantizes to the same quantization point as the original data,
then the error is characterized by Theorem~\ref{thm:hash}. If the
algorithm terminates unable to reconstruct a signal consistent with
the quantized data, then Theorem~\ref{thm:1-bit-stable-embed}
describes how far the solution is from the original
signal. Since~\eqref{eq:l0consist} is a combinatorially complex
problem, in the next section we describe a new greedy reconstruction
algorithm that attempts to find a solution as consistent with the
measurements as possible, while guaranteeing this solution is
$K$-sparse.

\section{BIHT: A Simple First-Order Reconstruction Algorithm}
\label{sec:introBIHT}

\subsection{Problem formulation and algorithm definition}
\label{subsec:biht-def}

We now introduce a simple algorithm for the reconstruction of sparse signals
from $1$-bit compressive measurements.  Our algorithm, \emph{Binary
  Iterative Hard Thresholding} (BIHT), is a simple modification of
IHT, the real-valued algorithm from which is takes its
name~\cite{BluDav::2008::Iterative-hard}. Demonstrating theoretical convergence guarantees for BIHT is a subject of future work (and thus not shown in this paper), however the algorithm is of significant value since it \emph{i)} has a simple and intuitive formulation and \emph{ii)} outperforms previous algorithms empirically, demonstrated in Section~\ref{sec:sims}.
We further note that the IHT algorithm has
recently been extended to handle measurement
non-linearities~\cite{Blumensath::2010::cs-nonlin-obs}; however, these
results do not apply to quantized measurements since quantization
does not satisfy the requirements
in~\cite{Blumensath::2010::cs-nonlin-obs}.

We briefly recall that the IHT algorithm consists of two steps that
can be interpreted as follows.  The first step can be thought of as a
gradient descent to reduce the least squares objective $\|\bs y -
\Phi\bs x\|_{2}^{2}/2$.  Thus, at iteration $l$, IHT proceeds by
setting $\bs a^{l+1} = \bs x^{l} + \Phi^{T}(\bs y - \Phi \bs x)$.  The
second step imposes a sparse signal model by projecting $\bs a^{l+1}$
onto the ``$\ell_{0}$ ball'', \ie selecting the $K$ largest in
magnitude elements. Thus, IHT for CS can be thought of as trying to
solve the problem
\begin{equation}
\label{eq:ihtform}
\argmin_{\bs u}\, \tfrac{1}{2}\|\bs y - \Phi \bs u\|_{2}^{2 }\ \st\ \|\bs u\|_0 = K.
\end{equation}

The BIHT algorithm simply modifies the first step of IHT to instead
minimize a consistency-enforcing objective.  Specifically, given an
initial estimate $\bs x^{0} = \bs 0$ and $1$-bit measurements $\ys$, at iteration $l$ BIHT computes
\begin{align}
\label{alg:biht1}
\bs a^{l+1}&= \bs x^{l} + \tfrac{\tau}{2}\Phi^{T}\big(\ys - A(\bs x^{l})\big),\\
\label{alg:biht2}
\bs x^{l+1}&= \eta_{K}(\bs a^{l+1}),
\end{align}
where $A$ is defined as in (\ref{eq:defh}), $\tau$ is a scalar that controls gradient descent step-size, and the function
$\eta_{K}(\bs v)$ computes the best $K$-term approximation of $\bs v$ by thresholding.
Once the algorithm has terminated (either consistency is achieved or
a maximum number of iterations have been reached), we then normalize the final
estimate to project it onto the unit sphere. Section~\ref{subsec:bihtvar} discusses several variations of this algorithm, each with
different properties.

The key to understanding BIHT lies in the formulation of the
objective.  The following Lemma shows that the term
$\Phi^{T}\big(\ys - A(\bs x^{l})\big)$ in (\ref{alg:biht1}) is
in fact the negative subgradient of a convex objective $\cl
J$. Let $[\cdot]_-$ denote the negative function, \ie $([\bs u]_-)_i=[u_i]_-$ with $[u_i]_- = u_i$ if
$u_i <0$ and 0 else, and $\bs u \odot \bs v$ denote the Hadamard product, \ie $(\bs u \odot \bs v)_i = u_iv_i$ for two vectors $\bs u$ and $\bs v$.

\begin{lemma}
\label{lemma:subgradient-l1sided}
The quantity $\frac{1}{2}\,\Phi^{T}\big(A(\bs x) - \ys\big)$
in (\ref{alg:biht1}) is a subgradient of the convex one-sided
$\ell_{1}$-norm
$$
\cl J(\bs x) = \|[\ys
\odot (\Phi\bs x)]_{-}\|_{1},
$$
\end{lemma}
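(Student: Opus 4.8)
The plan is to verify the subgradient inequality directly, one coordinate at a time. First I would rewrite $\cl J$ in a more transparent form. Writing $\bs\varphi_i$ for the $i$-th row of $\Phi$ and recalling that $[u_i]_- = u_i$ when $u_i<0$ and $0$ otherwise, we have $|[u_i]_-| = \max(-u_i,0)$, so with $\bs u = \ys\odot(\Phi\bs x)$,
$$\cl J(\bs x) \;=\; \sum_{i=1}^{M}\big(-\ys_i\langle\bs\varphi_i,\bs x\rangle\big)_{+} \;=:\; \sum_{i=1}^{M} g_i(\bs x),$$
a finite sum of compositions of the convex hinge $t\mapsto(t)_+$ with affine maps; in particular $\cl J$ is convex, as the statement asserts. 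Since subdifferentials add, it suffices to show that for every $i$ the vector $s_i := \tfrac12\big(A_i(\bs x)-\ys_i\big)\bs\varphi_i$ satisfies the term-wise inequality $g_i(\bs z)\ \ge\ g_i(\bs x)+\langle s_i,\bs z-\bs x\rangle$ for all $\bs z\in\Rbb^N$; summing over $i$ then gives $\tfrac12\Phi^{T}\!\big(A(\bs x)-\ys\big)\in\partial\cl J(\bs x)$.

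Next I would split on the $i$-th bit, using $A_i(\bs x),\ys_i\in\{-1,1\}$. If $A_i(\bs x)=\ys_i$, then $s_i=\bs 0$; moreover, using the paper's convention $\sign 0 = -1$, in this case $\sign\langle\bs\varphi_i,\bs x\rangle=\ys_i$ forces $-\ys_i\langle\bs\varphi_i,\bs x\rangle\le 0$, hence $g_i(\bs x)=0$, and the claimed inequality reduces to $g_i(\bs z)\ge 0$, which is automatic. If instead $A_i(\bs x)=-\ys_i$, then $s_i=-\ys_i\bs\varphi_i$ and $-\ys_i\langle\bs\varphi_i,\bs x\rangle\ge 0$, so $g_i(\bs x)=-\ys_i\langle\bs\varphi_i,\bs x\rangle$ exactly; substituting,
$$g_i(\bs x)+\langle s_i,\bs z-\bs x\rangle \;=\; -\ys_i\langle\bs\varphi_i,\bs x\rangle \;+\; \langle-\ys_i\bs\varphi_i,\bs z-\bs x\rangle \;=\; -\ys_i\langle\bs\varphi_i,\bs z\rangle \;\le\; \big(-\ys_i\langle\bs\varphi_i,\bs z\rangle\big)_{+} \;=\; g_i(\bs z).$$
Thus the term-wise subgradient inequality holds in both cases, and summing the $M$ inequalities yields $\cl J(\bs z)\ge\cl J(\bs x)+\langle\tfrac12\Phi^{T}(A(\bs x)-\ys),\bs z-\bs x\rangle$ for every $\bs z$, which is precisely the conclusion.

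There is no real obstacle here; the argument is short and the only delicate points are bookkeeping ones. The first is recognizing $|[\cdot]_-|$ as the hinge $(\cdot)_+$ so that convexity is immediate. The second is the measure-zero event $\langle\bs\varphi_i,\bs x\rangle=0$: because of $\sign 0 = -1$ one has $A_i(\bs x)=-1$ there, so the coefficient $A_i(\bs x)-\ys_i$ is either $0$ (if $\ys_i=-1$) or $-2\ys_i$ (if $\ys_i=1$), and in either event $s_i$ lies in the set-valued subdifferential $\partial g_i(\bs x)=\{-t\,\ys_i\bs\varphi_i:\,t\in[0,1]\}$; the case analysis above already handles this since it only used the sign relations, not differentiability. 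An equivalent route is to compute $\partial g_i(\bs x)$ explicitly — it equals $\{\bs 0\}$ when $\ys_i\langle\bs\varphi_i,\bs x\rangle>0$, $\{-\ys_i\bs\varphi_i\}$ when $\ys_i\langle\bs\varphi_i,\bs x\rangle<0$, and $\{-t\,\ys_i\bs\varphi_i:t\in[0,1]\}$ when $\langle\bs\varphi_i,\bs x\rangle=0$ — and check membership of $s_i$ in each regime; I would present the direct inequality since it is more economical.
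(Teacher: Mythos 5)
Your proof is correct and takes essentially the same route as the paper's: you decompose $\cl J$ into the per-measurement convex terms $g_i(\bs x)=(-\ysc{i}\scp{\bs \varphi_i}{\bs x})_+$ and split on whether $A_i(\bs x)$ agrees with $\ysc{i}$, exactly as the paper does with its functions $\cl J_i$. The only (harmless) difference is presentational: you verify the defining subgradient inequality directly term by term, which also cleanly absorbs the $\scp{\bs\varphi_i}{\bs x}=0$ case under the convention $\sign 0=-1$, whereas the paper computes the gradient where it exists and exhibits membership in the subdifferential set at the kink before summing.
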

Thus,  BIHT aims to decrease $\cl J$ at each step (\ref{alg:biht1}). 

\begin{proof}
We first note that $\cl J$ is convex. We can write $\cl J(\bs x) =
\sum_{i} \cl J_{i}(\bs x)$ with each convex function $\cl J_i$ given by
\begin{equation*}
\cl J_{i}(\bs x;\ys, \Phi) = 
\begin{cases}
|\langle \bs \varphi_{i},\bs x\rangle|,&\text{if}\ A_i(\bs x)\,\ysc{i} < 0,\\
0,&\text{else},
\end{cases}
\end{equation*}
where $\bs \varphi_{i}$ denotes a row of $\Phi$ and $A_i(\bs
x)=\sign\scp{\bs \varphi_i}{\bs x}$. Moreover, if $\scp{\bs
\varphi_i}{\bs x}\neq 0$, then the gradient of $\cl J_i$~is
$$
\bs\nabla \cl J_{i}(\bs x;\ys, \Phi) =\tinv{2}(A_i(\bs x) -
\ysc{i})\,\bs\varphi_i = \begin{cases}
A_i(\bs x)\, \bs \varphi_{i}&\text{if}\ \ysc{i}\,A_i(\bs x) < 0,\\
0,&\text{else}
\end{cases}\\
$$
while if $\scp{\bs
\varphi_i}{\bs x} = 0$, then the gradient is replaced by the
subdifferential set 
$$
\bs \nabla \cl J_{i}(\bs x;\ys, \Phi) =
\big\{\tfrac{\xi}{2} (A_i(\bs x) - \ysc{i})\,\bs
\varphi_i: \xi \in [0, 1]\big\}\, \ni\, \tfrac{1}{2} (A_i(\bs x) - \ysc{i})\,\bs\varphi_i.
$$
Thus, by summing over $i$ we conclude that
$\frac{1}{2}\,\Phi^{T}\big(A(\bs x) - \ys\big) \in \bs\nabla J(\bs x;\ys, \Phi)$.
\end{proof}

Consequently, the BIHT algorithm can be thought of as
trying to solve the problem:
\begin{equation*}
\bs x^{*}\ =\ \argmin_{\bs u}\ \tau\,\|[\ys \odot (\Phi\bs u)]_{-}\|_{1}    \st\ \|\bs u\|_0 = K,\, \|\bs u\|_{2}=1.
\end{equation*}

Observe that since $\ys \odot (\Phi\bs x)$ simply scales
the elements of $\Phi\bs x$ by the signs $\ys$, 
minimizing the one-sided $\ell_{1}$ objective enforces a positivity
requirement,
\begin{equation}
\label{eq:consistconst}
\ys \odot (\Phi\bs x) \geq 0,
\end{equation}
that, when satisfied, implies consistency.

Previously proposed $1$-bit CS algorithms have used a one-sided
$\ell_{2}$-norm to impose
consistency~\cite{BouBar::2008::1-Bit-compressive,Bou::2009::Greedy-sparse,LasWenYin::2010::Trust-but-verify:,Bou::2010::Reconstruction-of-sparse}.
Specifically, they have applied a constraint or objective that takes
the form $\|[\ys \odot (\Phi\bs x)]_{-}\|_{2}^{2}/2$.  Both the
one-sided $\ell_{1}$ and $\ell_{2}$ functions imply a consistent
solution when they evaluate to zero, and thus, both approaches are
capable of enforcing consistency. However, the choice of the
$\ell_{1}$ vs.\ $\ell_2$ penalty term makes a significant difference
in performance depending on the noise conditions. We explore this
difference in the experiments in Section~\ref{sec:sims}.

\subsection{BIHT shifts}
\label{subsec:bihtvar}

Several modifications can be made to the BIHT algorithm that may
improve certain performance aspects, such as consistency,
reconstruction error, or convergence speed.  While a comprehensive
comparison is beyond the scope of this paper, we believe that such
variations exhibit interesting and useful properties that should be
mentioned. 

\textbf{Projection onto sphere at each iteration.}  We can enforce
that every intermediate solution have unit $\ell_{2}$ norm.  To do
this, we modify the ``impose signal model'' step (\ref{alg:biht2}) by
normalizing after choosing the best $K$-term approximation, \ie we
replace the update of $\bs x^{l+1}$ in (\ref{alg:biht2}) by $\bs
x^{l+1} = U(\eta_{K}(\bs a^{l+1}))$, where $U(\bs v) = \bs
v/\|\bs v\|_{2}$. While this step is found in previous algorithms such
as~\cite{BouBar::2008::1-Bit-compressive,Bou::2009::Greedy-sparse,LasWenYin::2010::Trust-but-verify:},
empirical observations suggest that it not required for BIHT to
converge to an appropriate solution.

If we choose to impose the projection, $\Phi$ must be appropriately
normalized or, equivalently, the step size of the gradient descent
must be carefully chosen. Otherwise, the algorithm will not converge.
Empirically, we have found that for a Gaussian matrix, an appropriate scaling is
$1/(\sqrt{M}\|\Phi\|_{2})$, where the $1/\|\Phi\|_{2}$ controls the
amplification of the estimate from $\Phi^{T}$ in the gradient descent
step (\ref{alg:biht1}) and the $1/\sqrt{M}$ ensures that $\|\ys
- A(\bs x^{l})\|_{2} \leq 2$.  Similar gradient step scaling
requirements have been imposed in the conventional IHT algorithm and other
sparse recovery algorithms as
well (\eg \cite{HalYinZha::2007::A-fixed-point-continuation}).

\textbf{Minimizing hinge loss.}  The one-sided $\ell_{1}$-norm is
related to the \emph{hinge-loss} function in the machine learning
literature, which is known for its robustness to
outliers~\cite{HasTibFri::2009::The-elements-of-statistical}.  Binary
classification algorithms seek to enforce the same consistency
function as in (\ref{eq:consistconst}) by minimizing a function
$\sum_i [\kappa - \ysc{i} (\Phi\bs x)_i ]_{+} = \|[\kappa \bs 1 - \ys
\odot (\Phi\bs x) ]_{+}\|_1$, where $[\cdot]_{+}$ sets negative
elements to zero.  When $\kappa>0$, the objective is both convex and
has a non-trivial solution. Further connections and interpretations
are discussed in Section~\ref{sec:sims}. Thus, rather than minimizing
the one-sided $\ell_{1}$ norm, we can instead minimize the hinge-loss.
The hinge-loss can be interpreted as ensuring that the minimum value
that an unquantized measurement $(\Phi\bs x)_{i}$ can take is bounded
away form zero, \ie $|(\Phi\bs x)_{i}| \geq \kappa$.  This requirement
is similar to the sphere constraint in that it avoids a trivial
solution; however, will perform differently than the sphere
constraint.  In this case, in the gradient descent step
(\ref{alg:biht1}), we instead compute
$$ \bs a^{l+1} = \bs x^{l} - \tau\Theta^{T}(\mathrm{sign}(\Theta\bs x^{l}
- \kappa)-1)/2
$$
where $\Theta = (\ys\odot\Phi)$ scales the rows of $\Phi$ by the
signs of $\ys$. Again, the step size must be chosen
appropriately, this time as $C_{\kappa}/\|\Phi\|_{2}$, where $C_{\kappa}$ is a parameter that depends on $\kappa$.

\textbf{General one-sided objectives.} In general, any function $\cl
R(\bs x) = \sum \cl R_{i}(x_{i})$, where $\cl R_{i}$ is continuous and
has a negative gradient for $x_{i} \leq 0$ and is $0$ for $x_{i} > 0$,
can be used to enforce consistency.  To employ such functions, we
simply compute the gradient of $\cl R$ and apply it in
(\ref{alg:biht1}). As an example, the previously mentioned one-sided
$\ell_{2}$-norm has been used to enforce consistency in several
algorithms.  We can use it in BIHT by computing
$$
\bs a^{l} = \bs x^{l} + \tau\Phi^{T}[\ys \odot \Phi\bs x^{l}]_{+}
$$ in (\ref{alg:biht1}).  We compare and contrast the behavior of the
one-sided $\ell_{1}$ and $\ell_{2}$ norms in
Section~\ref{sec:sims}. 

As another example, in similar fashion to the Huber
norm~\cite{Hub::1973::Robust-regression:}, we can combine the
$\ell_{1}$ and $\ell_{2}$ functions in a piecewise fashion.  One
potentially useful objective is $\sum \cl R_{i}(\bs x)$, where $\cl
R_{i}$ is defined as follows:
\begin{equation}
\cl R_{i}(\bs x) = \left\{ 
 \begin{array}{ll}
 0,& x_{i} \geq 0,\\
| x_{i}|,& -\frac{1}{2} \leq x_{i} < 0,\\
 x_{i}^{2} + \frac{1}{4},&  x_{i} < -\frac{1}{2}.
 \end{array}
 \right.
\end{equation}
While similar, this is not exactly a one-sided Huber norm.  In a one-sided
Huber-norm, the square ($\ell_{2}$) term would be applied to values
near zero and the magnitude ($\ell_{1}$) term would be applied to
values significantly less than zero, the reverse of what we propose
here.

This hybrid objective can provide different robustness properties or
convergence rates than the previously mentioned objectives.
Specifically, during each iteration it may allow us to take advantage
of the shallow gradient of the one-sided $\ell_{2}$ cost for large
numbers of measurement sign discrepancies and the steeper gradient of
the one-sided $\ell_{1}$ cost when most measurements have the correct
sign. This objective can be applied in BIHT as with the other
objectives, by computing its gradient and plugging it into
(\ref{alg:biht1}).

\section{Experiments}
\label{sec:sims}

In this section we explore the performance of the BIHT algorithm and compare it to the performance of previous
$1$-bit CS algorithms.  To make the comparison as straightforward as possible, we
reproduced the experiments
of~\cite{LasWenYin::2010::Trust-but-verify:} with the BIHT
algorithm. 

The experimental setup is as follows.  For each data point, we draw a
length-$N$, $K$-sparse signal with the non-zero entries drawn
uniformly at random on the unit sphere, and we draw a new $M\times N$
matrix $\Phi$ with each entry $\phi_{ij} \sim \mathcal{N}(0,1)$.  We
then compute the binary measurements $\ys$ according to
(\ref{eq:defh}).  Reconstruction of $\bs x^{*}$ is performed from $\ys$ with three algorithms: \emph{matching sign pursuit}
(MSP)~\cite{Bou::2009::Greedy-sparse}, \emph{restricted-step
  shrinkage} (RSS)~\cite{LasWenYin::2010::Trust-but-verify:}, and BIHT
(this paper); the algorithms will be depicted by dashed, dotted, and
triangle lines, respectively.  Each reconstruction in this setup is
repeated for 1000 trials and with a fixed $N=1000$ and $K=10$ unless
otherwise noted.  Furthermore, we perform the trials for $M/N$ within
the range $[0,2]$.  Note that when $M/N > 1$, we are acquiring more
measurements than the ambient dimension of the signal.  While the $M/N
> 1$ regime is not interesting in conventional CS, it may be very
practical in $1$-bit systems that can acquire sign measurements at
extremely high, super-Nyquist rates.

\begin{figure}[!h]
   \centering
   \subfigure[]{\includegraphics[width=.9\imgwidth]{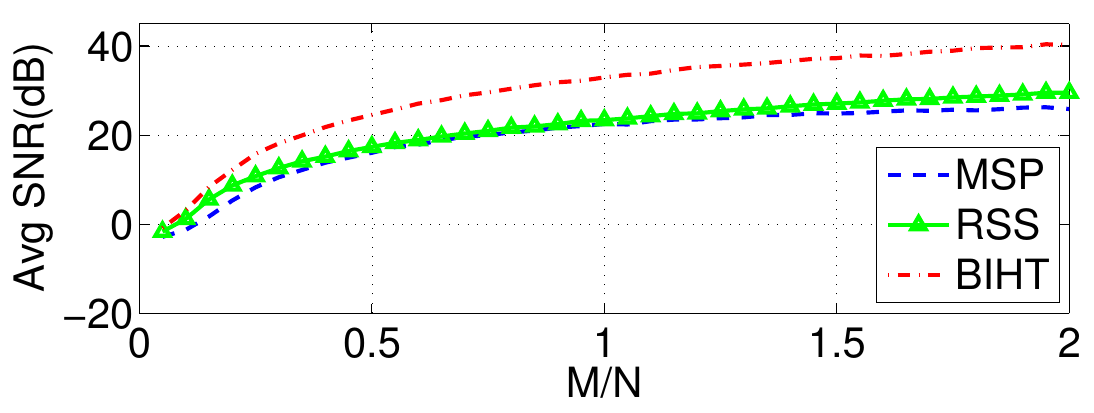}}
   \subfigure[]{\includegraphics[width=.9\imgwidth]{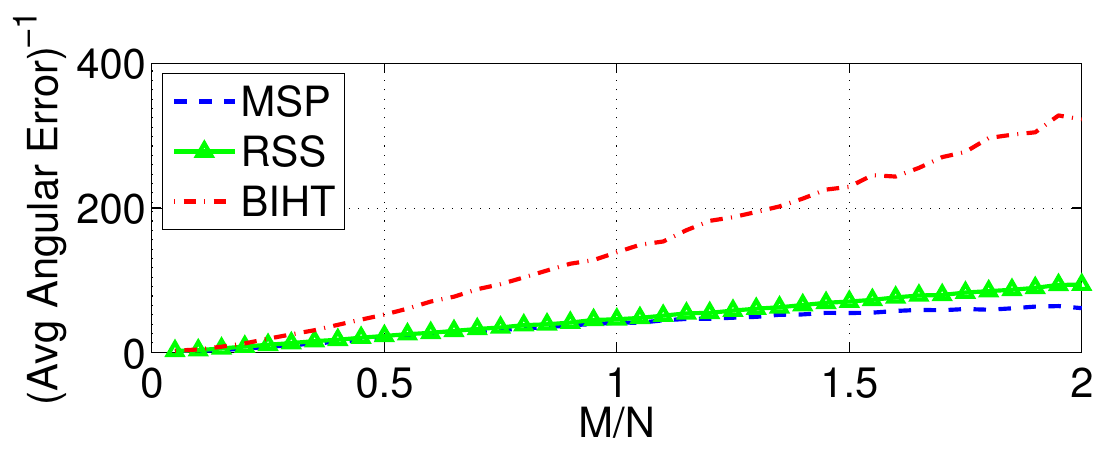}}
   \caption{Average reconstruction angular error
       $\epsilon_{\mathrm{sim}}$ vs. $M/N$, plotted two ways. (a) SNR
       in decibels, and (b) Inverse angular error
       $\epsilon_{\mathrm{sim}}^{-1}$. The plot demonstrates that BIHT
       yields a considerable improvement in reconstruction error,
       achieving an SNR as high as $40$dB when $M/N = 2$.
       Furthermore, we see that the error behaves according
       $\epsilon^{-1}_{\mathrm{sim}} = O(M)$, implying that on
     average we achieve the optimal performance rate given in
     Theorem~\ref{res:lower_bound}.}
   \label{fig:avgerror}
\end{figure}
\textbf{Average error.} We begin by measuring the average
reconstruction angular error $\epsilon_{\mathrm{sim}} := d_{S}(\bs x,
\bs x^{*})$ over the $1000$ trials.  Figure~\ref{fig:avgerror}
displays the results of this experiment in two different ways:
\emph{(i)} the signal-to-noise ratio (SNR)\footnote{In this paper we
  define the reconstruction SNR in decibels as $\mathrm{SNR}(\bs x) :=
  10\log_{10}(\|\bs x\|_{2}^{2}/\|\bs x-\bs x^{*}\|_{2}^{2})$. Note
  that this metric uses the standard euclidean error and not angular
  error.} in Figure~\ref{fig:avgerror}(a), to demonstrate that the
performance of these techniques is practical (since the angular error
is unintuitive to most observers), and \emph{(ii)} the inverse of the
angular error, \ie $\epsilon_{\mathrm{sim}}^{-1}$ in
Figure~\ref{fig:avgerror}(b), to compare with the performance
predicted by Theorem~\ref{thm:hash}.

We begin by comparing the performance of the algorithms.  While we can
observe that the angular error of each algorithm follows the
same trend, BIHT obtains smaller error (or higher SNR) than the
others, significantly so when $M/N$ is greater than $0.35$.  The
discrepancy in performance could be due to difference in the
algorithms themselves, or perhaps, differences in their formulations
for enforcing consistency.  This is explored later in this section.

We now consider the actual performance trend.  We see from
Figure~\ref{fig:avgerror}(b) that, above $M/N = 0.35$ each line
appears fairly linear, albeit with a different slope, implying that
with all other variables fixed, $\epsilon_{\mathrm{sim}} =
O(1/M)$.  This is on the order of the optimal performance
as given by the bound given in Theorem~\ref{res:lower_bound} and
predicted by Theorem~\ref{thm:hash} for Gaussian matrices.

\begin{figure}[t]
   \centering
   \subfigure[\label{fig:AngVsHamm-01}M/N = 0.1]{\includegraphics[width=.305\textwidth]{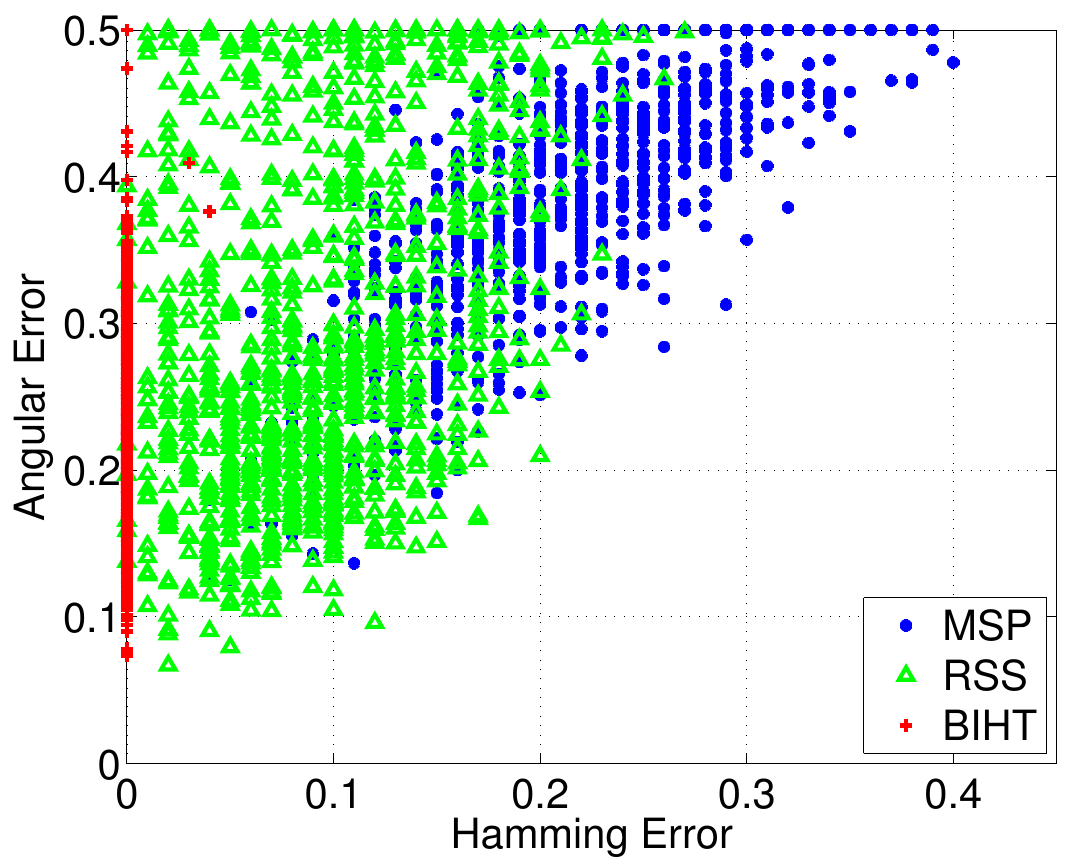}}
   \subfigure[\label{fig:AngVsHamm-07}M/N = 0.7]{\includegraphics[width=.32\textwidth]{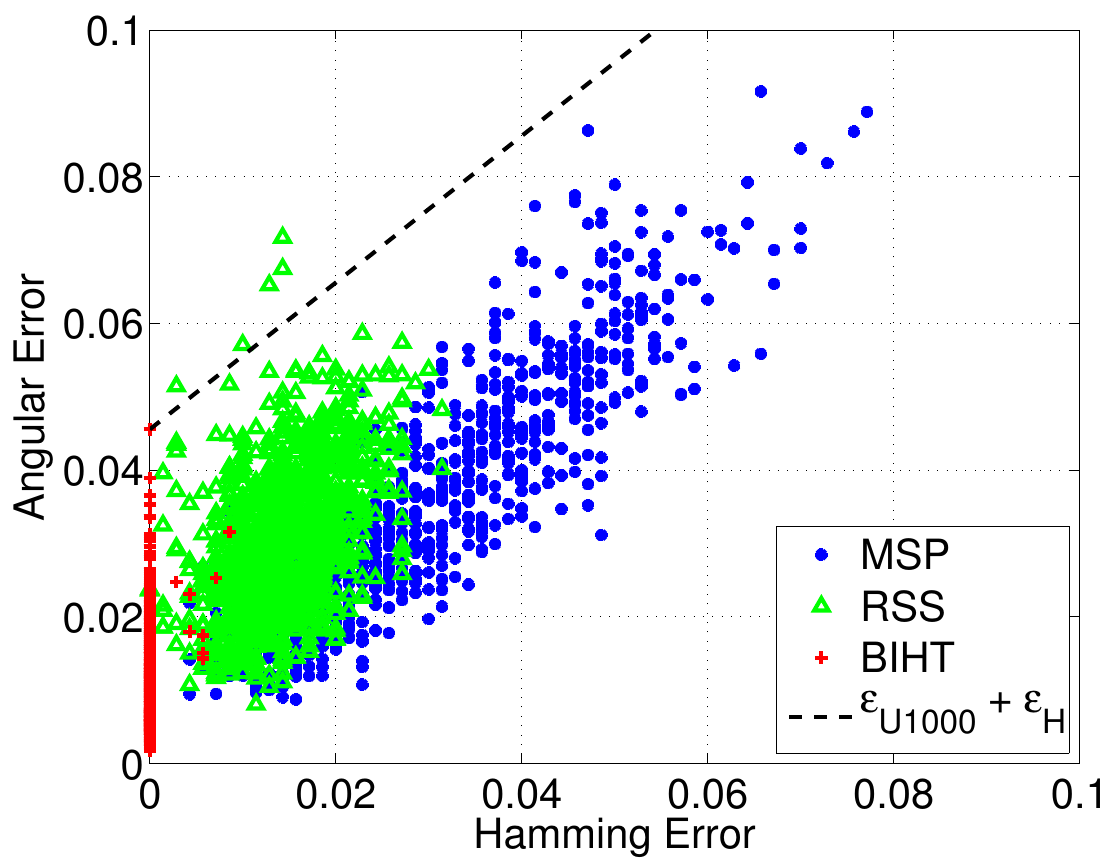}}
   \subfigure[\label{fig:AngVsHamm-15}M/N =1.5]{\includegraphics[width=.32\textwidth]{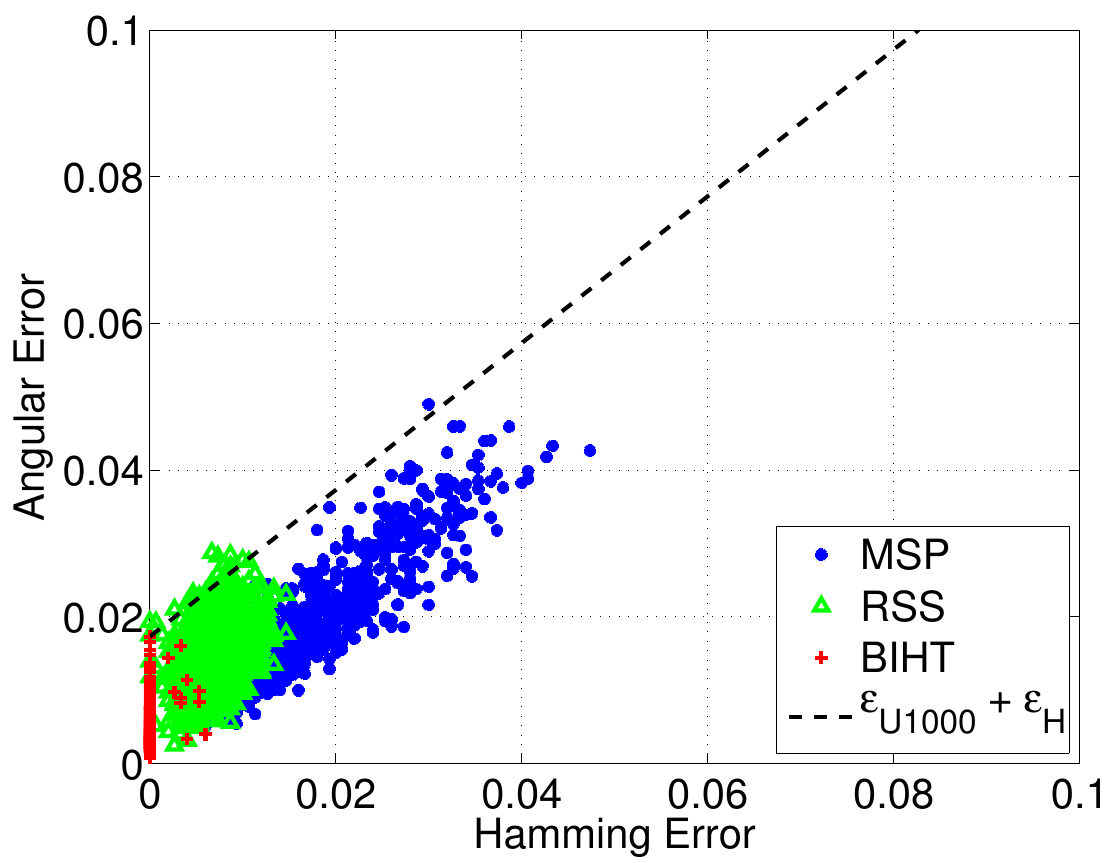}}
   \caption{Reconstruction angular error $\epsilon_{\mathrm{sim}}$
     vs.\ measurement Hamming error $\epsilon_{H}$.  BIHT returns a
     consistent solution in most trials.  For sufficiently large $M/N$
     regimes, we see a linear relationship $\epsilon_{\mathrm{sim}}
     \approx C + \epsilon_{H}$ between the average angular error
     $\epsilon_{\mathrm{sim}}$ and the hamming error $\epsilon_{H}$
     where $C$ is constant (see (a) and (b)).  The B$\epsilon$SE
     formulation in Definition~\ref{def:bse} predicts that the angular
     error is bounded by the hamming error $\epsilon_{H}$ in addition
     to an offset $\epsilon$. The dashed line
     $\epsilon_{U1000}+\epsilon_{H}$ denotes the empirical upper bound
     for $1000$ trials.\label{fig:AngVsHamm}}
\end{figure}

\textbf{Consistency.}  We also expose the relationship between the
Hamming distance $d_{H}(A(\bs x),A(\bs x^{*}))$ between the
measurements of the true and reconstructed signal and the angular
error of the true and reconstructed signal.
Figure~\ref{fig:AngVsHamm} depicts the Hamming distance vs.\ angular
error for three different values of $M/N$.  The particularly striking
result is that BIHT returns significantly more consistent
reconstructions than the two other algorithms. We observed
  this effect for ratio $M/N$ as small as $M/N = 0.1$
  (Figure~\ref{fig:AngVsHamm-01}).  This is clear from the fact that most of the red (plus)
points lie on the y-axis while the majority of blue (dot) or green
(triangle) points do not.  We find that, even in significantly
``under-sampled'' regimes like $M/N=0.1$, where the B$\epsilon$SE is
unlikely to hold, BIHT is likely to return a consistent solution
(albeit with high variance of angular errors).  We also find that in
``over-sampled'' regimes such as $M/N = 1.7$, the range of angular
errors on the y-axis is small.  Indeed, the range of angular errors
shrinks as $M/N$ increases, implying an imperical tightening of the
B$\epsilon$SE upper and lower bounds.

We can infer an interesting performance trend from
Figures~\ref{fig:AngVsHamm}(b) and (c), where the B$\epsilon$SE
property may hold.  Since the RSS and MSP algorithms often do not
return a consistent solution, we can visualize the relationship between
angular error and hamming error.  Specifically, on average the angular
reconstruction error is a linear function of hamming error,
$\epsilon_{H} = d_{H}(A(\bs x),A(\bs x^{*}))$, as similarly expressed by the
reconstruction error bound provided by B$\epsilon$SE.  Furthermore, if
we let $\epsilon_{1000}$ be the largest angular error (with consistent
measurements) over $1000$ trials, then we can suggest an empirical
upper bound for BIHT of $\epsilon_{1000} + \epsilon_{H}$.  This upper bound is denoted by the dashed line in
Figures~\ref{fig:AngVsHamm}(b) and (c).

\begin{figure}[t]
   \centering
   \subfigure[]{\includegraphics[width=.9\imgwidth]{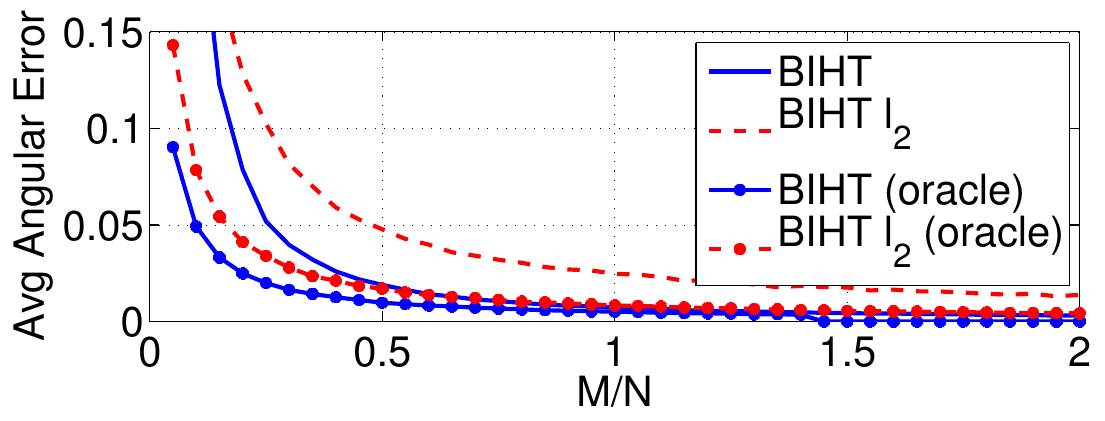}}
   \subfigure[]{\includegraphics[width=.9\imgwidth]{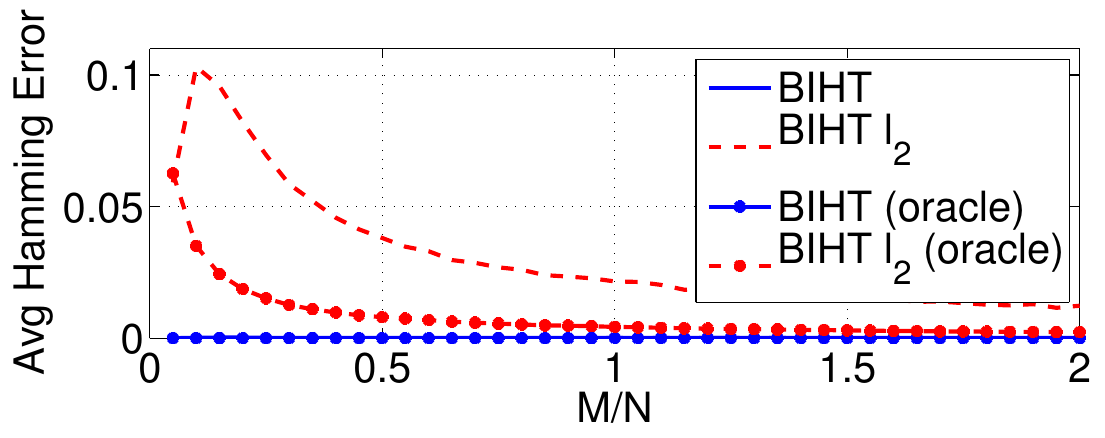}}
   \subfigure[]{\includegraphics[width=.9\imgwidth]{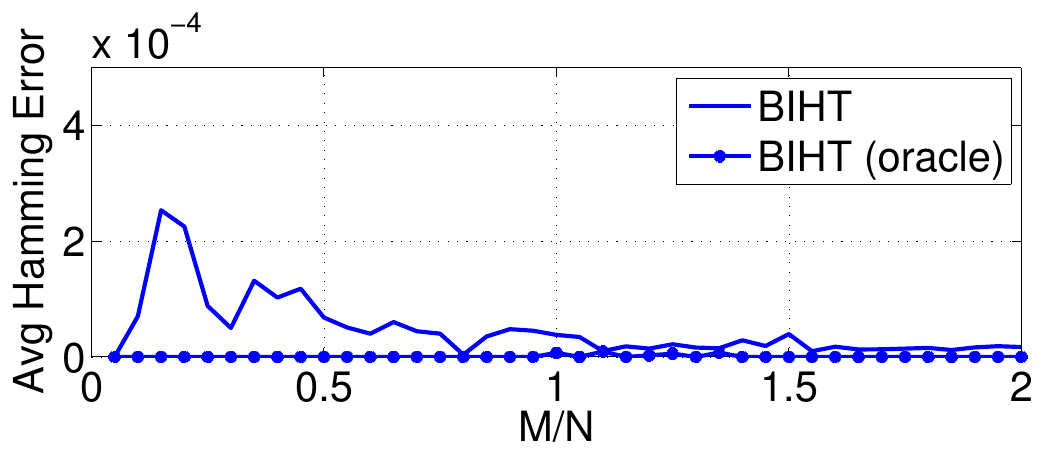}}
   \caption{Enforcing consistency: One-sided $\ell_{1}$ vs. one-sided
     $\ell_{2}$ BIHT.  When BIHT attempts to minimize a one-sided
     $\ell_{2}$ instead of a one-sided $\ell_{1}$ objective, the
     performance significantly decreases.  We find this to be the case
     even when an oracle provides the true signal support a
     priori. Note: (c) is simply a zoomed version (b).}
   \label{fig:enforce}
\end{figure}

\textbf{One-sided $\ell_{1}$ vs.\ one-sided $\ell_{2}$ objectives.}
As demonstrated in Figures~\ref{fig:avgerror} and \ref{fig:AngVsHamm},
the BIHT algorithm achieves significantly improved performance over
MSP and RSS in both angular error and Hamming error (consistency).  A
significant difference between these algorithms and BIHT is that MSP
and RSS seek to impose consistency via a one-sided $\ell_{2}$-norm, as
described in Section~\ref{subsec:bihtvar}.

Minimizing either the one-sided $\ell_{1}$ or one-sided $\ell_{2}$
objectives will enforce consistency on the measurements of the
solution; however, the behavior of these two terms appears to be
significantly different, according to the previously discussed
experiments.

To test the hypothesis that this term is the key differentiator
between the algorithms, we implemented BIHT-$\ell_{2}$, a one-sided $\ell_{2}$
variation of the BIHT algorithm that enabled a fair comparison of the one-sided objectives (see Section~\ref{subsec:bihtvar} for details).  We compared both the angular error and Hamming error
performance of BIHT and BIHT-$\ell_{2}$.  Furthermore, we implemented
\emph{oracle assisted} variations of these algorithms where the true
support of the signal is given a priori, \ie $\eta_K$ in (\ref{alg:biht2}) is
replaced by an operator that always selects the true support, and
thus the algorithm only needs to estimate the correct
coefficient values.  The oracle assisted case can be thought of
as a ``best performance'' bound for these algorithms.  Using these algorithms, we perform the same experiment detailed at the beginning of the section.

The results are depicted in
Figure~\ref{fig:enforce}.  The angular error behavior of
BIHT-$\ell_{2}$ is very similar to that of MSP and RSS and
underperforms when compared to BIHT.  We see the same situation with
regards to Hamming error: BIHT finds consistent solutions for
the majority of trials, but
BIHT-$\ell_{2}$ does not.

Thus, the results of this simulation suggest that the one-sided term
plays a significant role in the quality of the solution obtained.

One way to explain the performance discrepancy between the two
objectives comes from observing the connection between our
reconstruction problem and binary classification.  As explained
previously, in the classification context, the one-sided $\ell_{1}$
objective is similar to the hinge-loss, and furthermore, the one-sided
$\ell_{2}$ objective is similar to the so-called
\emph{square-loss}. Previous results in machine learning have shown
that for typical convex loss functions, the minimizer of the hinge
loss has the tightest bound between expected risk and the Bayes
optimal solution~\cite{RosDe-Cap::2004::Are-loss-functions} and good
error rates, especially when considering robustness to
outliers~\cite{RosDe-Cap::2004::Are-loss-functions,SreSriTew::2010::Smoothness-low-noise}.
Thus, the hinge loss is often considered superior to the square loss
for binary classification.\footnote{Additional ``well-behaved'' loss
  functions (\eg the Huber-ized hinge loss) have been
  proposed~\cite{tibshirani1996regression} and a host of
  classification algorithms related to this problem
  exist~\cite{BarFreLee::1998::Boosting-the-margin:,RatWar::2005::Efficient-margin,
    Blu::1998::On-Line-Algorithms,
    SreSriTew::2010::Smoothness-low-noise,tibshirani1996regression},
  both of which may prove useful in the $1$-bit CS framework in the
  future.} One might suspect that since the one-sided
$\ell_{1}$-objective is very similar to the hinge loss, it too should
outperform other objectives in our context. Understanding why in our
context, the geometry of the $\ell_{1}$ and $\ell_{2}$ objectives
results in different performance is an interesting open problem.

\begin{figure}
   \centering
   \subfigure[]{\includegraphics[width=.48\textwidth]{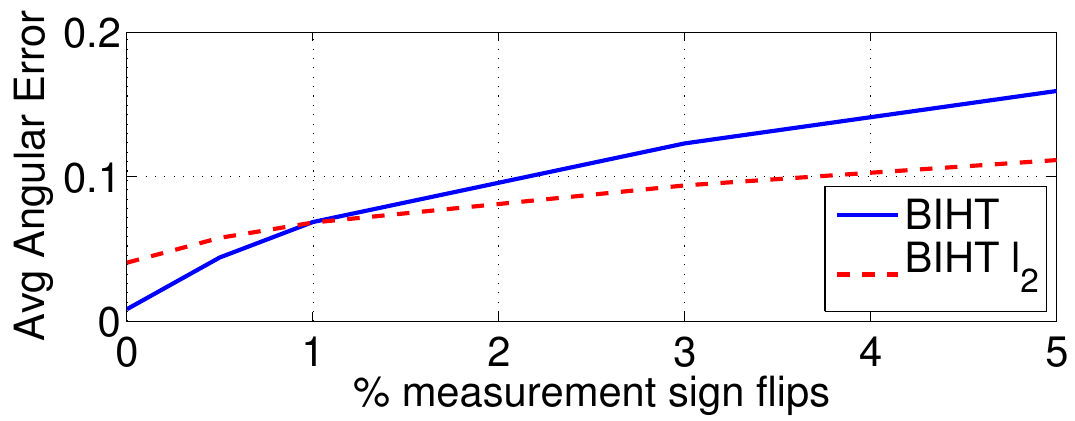}}
   \subfigure[]{\includegraphics[width=.48\textwidth]{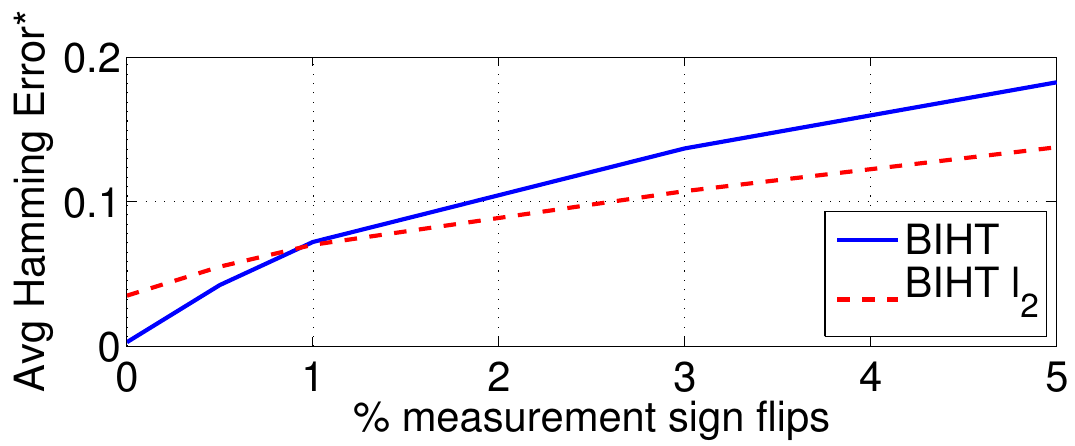}}
   \caption{Enforcing consistency with noise: One-sided $\ell_{1}$
     vs. one-sided $\ell_{2}$ BIHT.  When BIHT attempts to minimize a
     one-sided $\ell_{2}$ instead of the one-sided $\ell_{1}$
     objective, the algorithm is more robust to flips of measurement
     signs. *Note that the Hamming error in (b) is measured with
     regards to the \emph{noisy} measurements, \eg a Hamming error of
     zero means that we reconstructed the signs of the noisy
     measurements exactly.}
   \label{fig:enforcenoise}
\end{figure}

We probed the one-sided $\ell_1$/$\ell_2$ objectives further by
testing the two versions of BIHT on noisy measurements.  We flipped a number of measurement signs at random in each
trial.  For this experiment, $N=M=1000$ and $K=10$ are fixed, and we
performed 100 trials.  We varied the number of sign flips between
$0\%$ and $5\%$ of the measurements.  The results of the experiment
are depicted in Figure~\ref{fig:enforcenoise}.  We see that for both
the angular error in Figure~\ref{fig:enforcenoise}(a) and Hamming
error in Figure~\ref{fig:enforcenoise}(b), that the one-sided
$\ell_{1}$ objective performs better when there are only a few errors
and the one-sided $\ell_{2}$ objective performs better when there are significantly more errors.  This is expected since the $\ell_{1}$
objective promotes sparse errors. This experiment implies that 
BIHT-$\ell_{2}$ (and the other one-sided $\ell_{2}$-based algorithms) may be more useful when the measurements contain
significant noise that might cause a large number of sign flips, such
as Gaussian noise.

\begin{figure}[t]
   \centering
   \includegraphics[width=1\imgwidth]{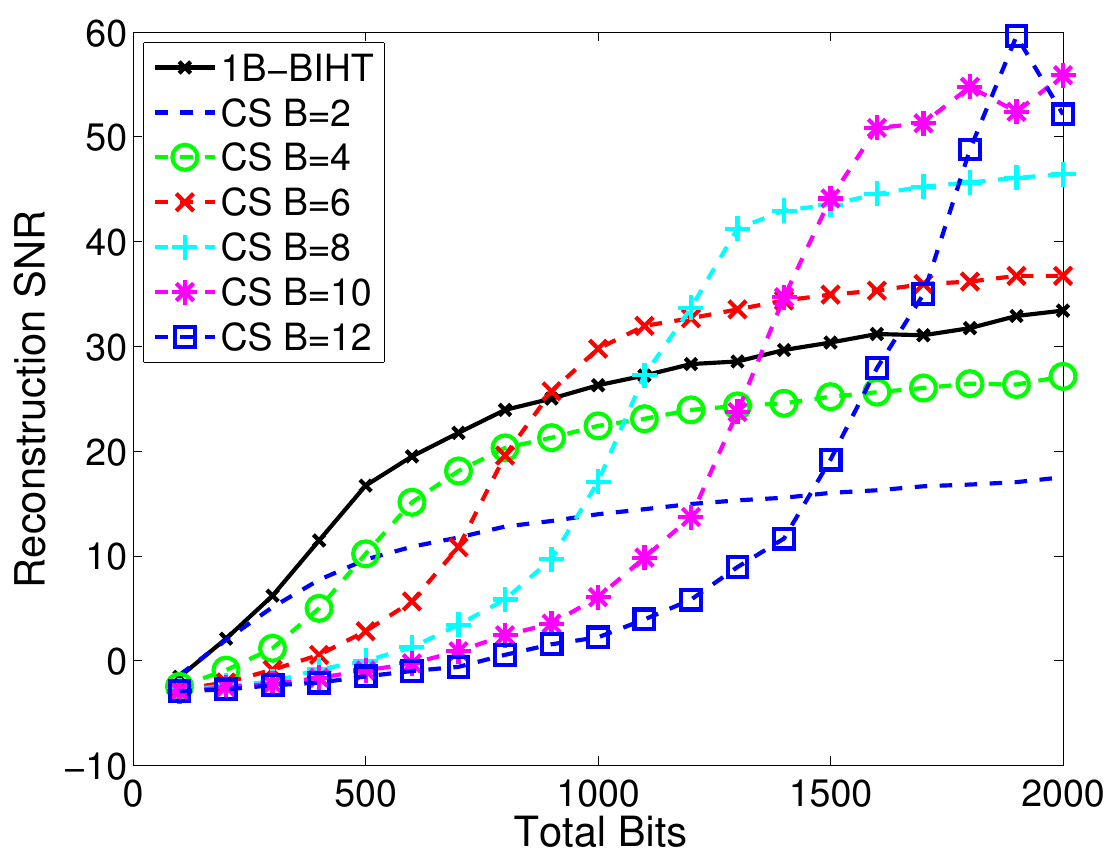} 
   \caption{Comparison of BIHT to conventional CS multibit uniform scalar quantization
     (multibit reconstructions performed using
     BPDN~\cite{CanRomTao::2006::Stable-signal}).  BIHT is competitive with standard
     CS working with multibit measurements when the total number of bits is severely constrained.  In particular, the BIHT
     algorithm performs strictly better than CS with $4$ bits per measurement.}
   \label{fig:multibit}
\end{figure}

\textbf{Performance with a fixed bit-budget.}  In some applications we are interested in reducing the total number of bits acquired due to storage or communication costs.  Thus, given a fixed total number of bits, an interesting question is how well $1$-bit CS performs in comparison to conventional CS
quantization schemes and algorithms.  For the sake of brevity, we give
a simple comparison here between the $1$-bit techniques and uniform
quantization with \emph{Basis Pursuit DeNoising}
(BPDN)~\cite{CanRomTao::2006::Stable-signal} reconstruction.  While
BPDN is not the optimal reconstruction technique for quantized measurements, it
(and its variants such as the LASSO~\cite{tibshirani1996regression})
is considered a benchmark technique for reconstruction from
measurements with noise and furthermore, is widely used in practice.

The experiment proceeds as follows.  Given a total number of bits and
a (uniform) quantization bit-depth $B$ ({i.e.}, number of bits
per measurement), we choose the number of measurements as $M =
\mathrm{total~bits}/B$, $N=2000$, and the sparsity $K=20$.  The remainder of the experiment proceeds as
described earlier (in terms of drawing matrices and signals).   For bit depth greater than $1$, we reconstruct
using BPDN with an optimal choice of noise parameter and we scale the quantizer to such that signal can take full advantage of its dynamic range.  

The results of this experiment are depicted in
Figure~\ref{fig:multibit}.  We see a common trend in each line:
lackluster performance until ``sufficient'' measurements are acquired,
then a slow but steady increase in performance as additional
measurement are added, until a performance plateau is reached.  Thus,
since lower bit-depth implies that a larger number of measurements will be used, $1$-bit CS reaches the performance plateau earlier than in
the multi-bit case (indeed, the transition point is achieved at a higher number of total bits as the bit-depth is increased).  This enables significantly improved
performance when the rate is severely constrained and higher bit-rates
per measurements would significantly reduce the number of available
measurements. For higher bit-rates, as expected from the analysis
in~\cite{BouBar::2007::Quantization-of-sparse}, using fewer
measurements with refined quantization achieves better performance. 

It is also important to note that, regardless of trend, the BIHT
algorithm performs strictly better than BPDN with $4$ bits per
measurement and uniform quantization for the parameters tested
here. This gain is consistent with similar gains observed
in~\cite{BouBar::2008::1-Bit-compressive,Bou::2009::Greedy-sparse}.  A
more thorough comparison of additional CS quantization techniques with
$1$-bit CS is a subject for future study.

\begin{figure}[t]
   \centering
   \includegraphics[width=1\imgwidth]{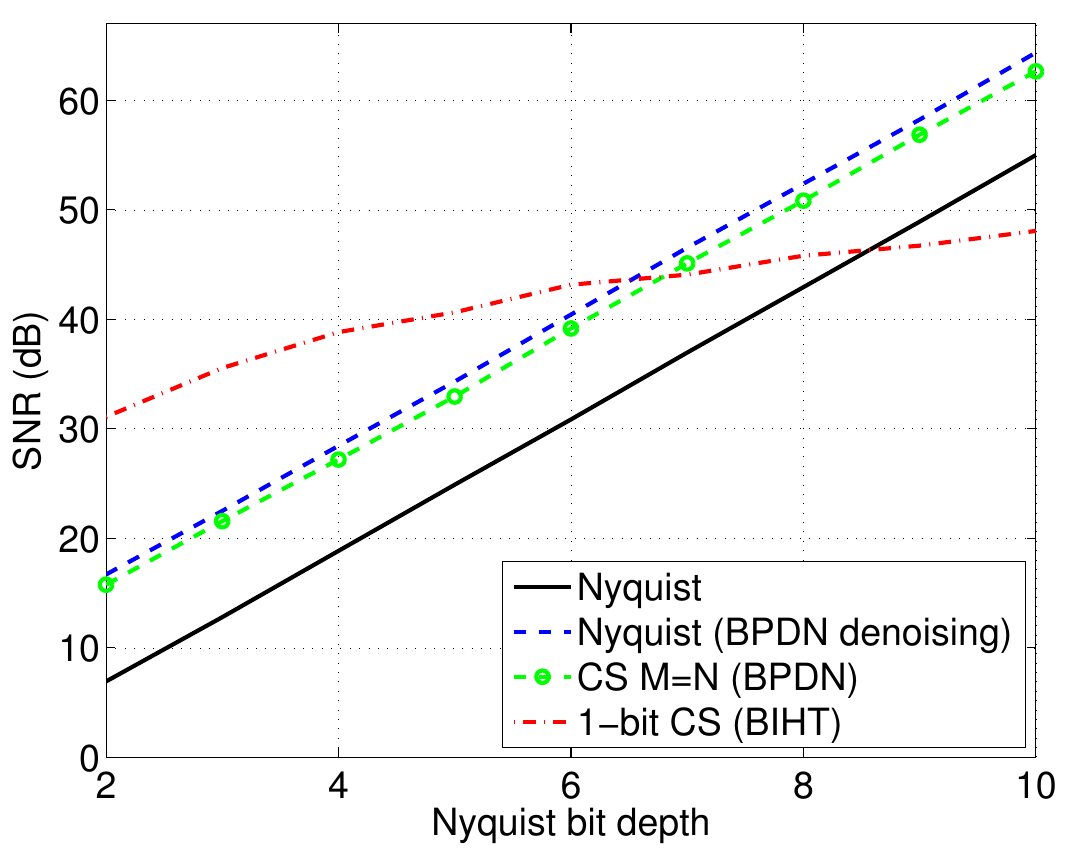} 
   \caption{Comparison of uniformly quantized Nyquist-rate samples
     with linear reconstruction (solid) and BPDN denoising
     (dashed), CS with $M=N$ and BPDN reconstruction (dash-circle), and $1$-bit quantized CS measurements with BIHT
     reconstruction (dash-dotted). Nyquist samples were quantized with
     bit-depth $\beta \in [2,10]$ and $1$-bit CS used $M=\beta N$
     measurements; the same number of bits is used in each
     reconstruction.  The Nyquist-rate lines have the classical $6.02$dB/bit-depth
     slope, as expected.  For a fixed number of bits, $1$-bit CS does
     \emph{not} follow this slope and outperforms conventional quantization
     when $\beta < 6$.}
   \label{fig:comparisonNyquist}
\end{figure}
\textbf{Comparison to quantized Nyquist samples.}  In our final
experiment, we compare the performance of the $1$-bit CS technique to
the performance of a conventional uniform quantizer applied to uniform
Nyquist-rate samples.  Specifically, in each trial we draw a new
Nyquist-sampled signal in the same way as in our previous experiments
and with fixed $N=2000$ and $K=20$; however, now the signals are
sparse in the discrete cosine transform (DCT) domain.  We consider
four reconstruction experiments.  First, we quantize the Nyquist-rate
signal with a bit-depth of $\beta$ bits per time-domain sample (and optimal
quantizer scale) and perform linear reconstruction (i.e., we just use
the quantized samples as sample values).  Second, we apply BPDN to the
quantized Nyquist-rate samples with optimal choice of noise parameter,
thus denoising the signal using a sparsity model. Third, we draw a new
Gaussian matrix with $M=N$, quantize the measurements to $\beta$ bits,
again at optimal quantizer scale, and reconstruct using BPDN. Fourth,
we draw a new Gaussian matrix with $M = \beta N$ and compute
measurements, quantize to one bit per measurement by maintaining their
sign, and perform reconstruction with BIHT.  Note that the same total
number of bits is used in each experiment.

Figure~\ref{fig:comparisonNyquist} depicts the average SNR obtained by
performing $100$ of the above trials.  The linear, BPDN, Gaussian
measurements with BPDN, and BIHT reconstructions are depicted by
solid, dashed, dash-circled, and dash-dotted lines, respectively.  The
linear reconstruction has a slope of $6.02$dB/bit-depth, exhibiting a
well-known trade-off for conventional uniform quantization.  The BPDN
reconstruction (without projections) follows the same trend, but
obtains an SNR that is at least $10$dB higher than the linear
reconstruction.  This is because BPDN imposes the sparse signal model
to denoise the signal.  We see about the same performance with the
Gaussian projections at $M=N$, although it performs slightly worse
than without projections since the Gaussian measurements require a
slightly larger quantizer range. Similarly to the results in
Fig.~\ref{fig:multibit}, in low Nyquist bit-depth regimes ($\beta <
6$), $1$-bit CS achieves a significantly higher SNR than the other two
techniques.  When $6 < \beta < 8$, $1$-bit
CS is competitive with the BPDN scenario. Thus, for a fixed number of bits, $1$-bit
CS is competitive to conventional sampling with uniform quantization,
especially in low bit-depth regimes.  

\section{Discussion}
\label{sec:disc}

In this paper we have developed a rigorous mathematical foundation for
$1$-bit CS.  Specifically, we have demonstrated a lower bound on
reconstruction error as a function of the number of measurements and
the sparsity of the signal. We have demonstrated that Gaussian random
projections almost reach this lower bound (up to a log factor) in the
noiseless case.  This behavior is consistent with and extends existing
results in the literature on multibit scalar quantization and 1-bit
quantization of non-sparse signals.

We have also introduced reconstruction robustness guarantees through
the binary $\epsilon$-stable embedding (B$\epsilon$SE) property. This
property can be thought of as extending the RIP to $1$-bit quantized
measurements.  To our knowledge, this is the first time such a
property has been introduced in the context of quantization. To be
able to use this property we showed that random constructions using
Gaussian pdf (or more generally using isotropic pdf in the signal
space $\Rbb^N$) generate such embeddings with high probability. This
construction class is still very limited compared to the numerous
random constructions known for generating RIP matrices. Extending this
class with other constructions is an interesting topic for future
research.

Using the B$\epsilon$SE, we have proven that $1$-bit CS systems are robust to measurement noise added before
quantization as well as to signals that are not exactly sparse but
compressible.

We have introduced a new $1$-bit CS algorithm,
BIHT, that achieves better performance over previous algorithms in the
noiseless case. This improvement is due to the enforcement of
consistency using a one-sided linear objective, as opposed to a quadratic one. The
linear objective is similar to the hinge loss from
the machine learning literature. 

We remind the reader that the central goal of this paper has been
signal \emph{acquisition} with quantization. As explained previously,
one motivation for our work is the development of very high speed
samplers.  In this case, we are interested in building fast samplers by relaxing the requirements on the primary hardware burden, the quantizer. Such devices are susceptible to noise.  Thus, while our noiseless results extend previous $1$-bit
quantization results (\eg see \cite{AndDatImm::2006::Locality-sensitive-hashing}
and \cite{vivekQuantFrame}) to the sparse signal model setting and are
of theoretical interest, a major contribution has been the further
development of the robust guarantees, even if they produce error rates that seem suboptimal when compared to the noiseless case.  

A number of interesting questions remain unanswered. As we discuss in
Section~\ref{subsec:results} earlier, we have found that the
B$\epsilon$SE holds for Gaussian matrices with angular error decay
roughly on the order of $O(\sqrt{K/M})$ worse than optimal. One
question is whether this gap can be closed with an alternative
derivation, or whether it is a fundamental requirement for stability.
Another useful pursuit would be to provide a more rigorous
understanding of the discrepancy between the performance of the
one-sided $\ell_{1}$ and $\ell_{2}$ objectives. Analysis of the
performance behavior might lead to better one-sided functions.

\section*{Acknowledgments}

Thanks to Rachel Ward for pointing us to the right reference with
regards to the lower bound (\ref{eq:tight}) used in
Appendix~\ref{app:subspace_intersections} and recommending several
useful articles as well as Vivek Goyal for pointing us to additional
prior work in this area.  Thanks to Zaiwen Wen and Wotao Yin for
sharing some of the data
from~\cite{LasWenYin::2010::Trust-but-verify:} for our algorithm
comparisons, as well as engaging in numerous conversations on this
topic.  Thanks also to Nathan Srebro for his advice and discussion
related to the one-sided penalty comparison and connections to binary
classification, and thanks to Amirafshar Moshtaghpour for his
correction of a small error in the bounds of Appendix~\ref{app:proof_epsilon_MK_complex}. Finally, thanks to Yaniv Plan and anomynous reviewers
for their useful advices and remarks for improving this paper, and in
particular the proof of Theorem~1 with respect to the optimality of
the $\Sigma^*_K$ covering.

\appendix

\section{Lemma~\ref{res:intersecions}: Intersections of Orthants by Subspaces}
\label{app:subspace_intersections}
While there are $2^{M}$ available quantization points provided by
$1$-bit measurements, a $K$-sparse signal will not use all of them.
To understand how effectively the quantization bits are used, we first
investigate how the $K$-dimensional subspaces projected from the
$N$-dimensional $K$-sparse signal spaces intersect orthants in the
$M$-dimensional measurement space, as shown in
Fig~\ref{fig:orthant_geometry} for $K=2$ and $M=3$.

\begin{figure}
  \centering
  \subfigure[\label{fig:simple_geom}]{
     \raisebox{5mm}{\includegraphics[height=3.5cm]{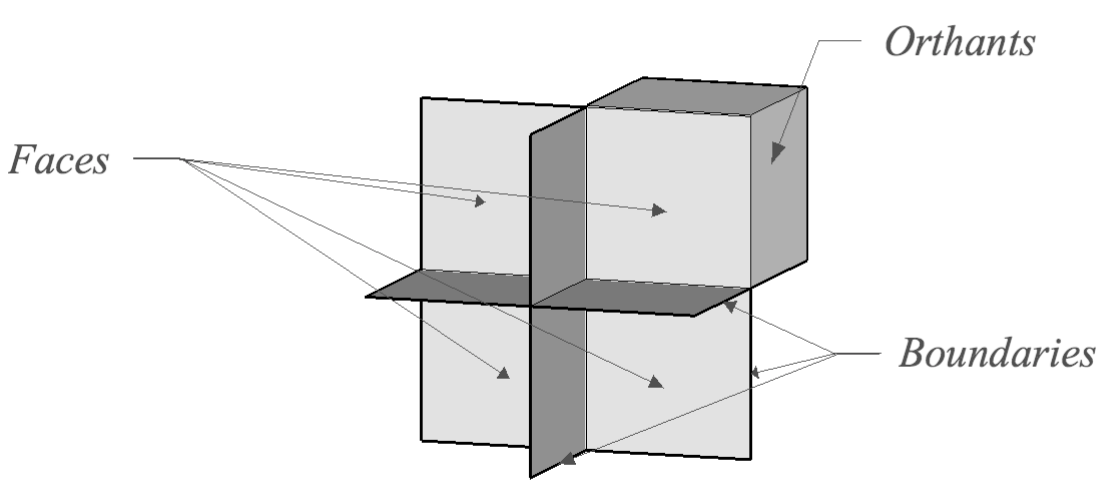}}
   }
   \subfigure[\label{fig:sph_cap}]
   {
    \includegraphics[height=4.3cm]{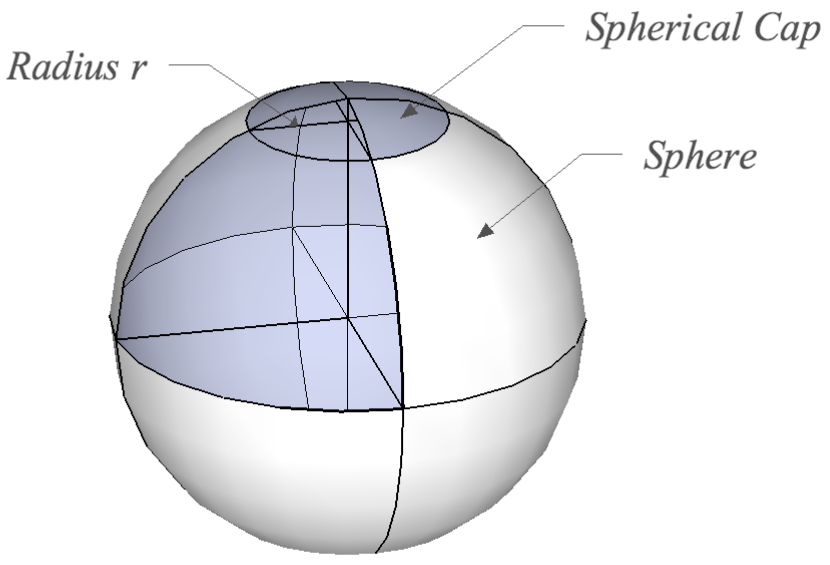}
  }
  \caption{(a) The geometry of orthants in $\mathbb{R}^{3}$. (b)
    The geometry of spherical caps.}
\end{figure}

We use $I(M,K)$ to denote the maximum number of orthants in $M$
dimensions intersected by a $K$-dimensional subspaces. A bound of for
$I(M,K)$ is developed in~\cite{coverSphereCaps,bib:Flatto70}:
\begin{align}
  I(M,K) \le 2\sum_{l=0}^{K-1}\binom{M-1}{l}.  \label{eq:tight}
\end{align}
For $K \leq M/2$, this simplifies to $I(M,K) \leq 2K {M-1
\choose K-1}$.

Using ${p \choose q-1}+{p \choose q} = {p+1 \choose q}$ we can also
derive a simple bound on (\ref{eq:tight}) for $K\le M$. We observe
that
$$
\sum_{l=0}^{K-1}\binom{M-1}{l} = \binom{M}{1} + \sum_{l=2}^{K-1}\binom{M-1}{l}
\leq \binom{M}{1} + \sum_{l=2}^{K-1}\binom{M}{l}
\leq \sum_{l=1}^{K-1}\binom{M}{l}.  
$$
Repeating the same argument, we find $\sum_{l=1}^{K-1}\binom{M}{l} \leq
\sum_{l=2}^{K-1}\binom{M+1}{l}\leq \cdots \leq \binom{M +
K - 2}{K - 1}$ and finally
\begin{align}
  I(M,K) \leq 2 \binom{M+K-2}{K-1} \leq 2 \binom{2M-1}{K-1} = 2
  (\tfrac{K}{2M})\binom{2M}{K} \leq  2^{K}\left(\frac{eM}{K}\right)^K,
  \label{eq:tight_bounded}
\end{align}
using the bound
$\binom{M}{K}\le\left(\tfrac{eM}{K}\right)^K$. 

While the bound in~\eqref{eq:tight} is tight and holds for subspaces
in a general configuration, the closed form simplified bound in~\eqref{eq:tight_bounded} can be improved by a factor of $(M-K+1)$
(which asymptotically makes no difference in the subsequent
development) using the proof we develop in the remainder of this
appendix. In addition to the improvement, the proof also provides
significant geometrical intuition to the problem.

First we define two new elements in the geometry of the problem:
orthant boundaries and their faces. Each orthant has $M$ boundaries
of dimension $M-1$, defined as the subspace with a coordinate set to
0:
\begin{align*}
  \mathfrak{B}_i=\{\bs x~|~(\bs x)_i=0\}.
\end{align*}
We split each boundary into $2^{M-1}$ faces, defined as the set
\begin{align*}
  \mathcal{F}_{i,\sv}=\left\{\bs x~|~(\bs x)_i=0~\mbox{and~}\sign{(\bs x)_j}=
  (\sv)_j\mbox{~for~all~}j\ne i\right\},
\end{align*}
where $\sv$ is the sign vector of a bordering orthant, and $i$ is
the boundary in which the face lies. Each face borders two
orthants. Note that the faces are $(M-1)$-dimensional orthants in the
$(M-1)$-dimensional boundary subspace. The geometry of the problem in
$\mathbb{R}^{3}$ is summarized in Figure~\ref{fig:simple_geom}.

Next, we upper bound $I(M,K)$ using an inductive argument that relies on
 the following two lemmas:
\begin{lemma}
  \label{res:subspace_boundary_intersec}
  If a $K$-dimensional subspace $\mathcal{S}\subset \Rbb^M$ is not the
  subset of a boundary $\mathfrak{B}_i$, then the subspace and
  boundary do intersect and their intersection is a $(K-1)$-dimensional
  subspace of $\mathfrak{B}_i$.
\end{lemma}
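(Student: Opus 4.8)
The plan is to treat both $\mathcal{S}$ and $\mathfrak{B}_i$ as linear subspaces through the origin and to reduce the whole statement to the rank--nullity theorem applied to a single linear functional. First I would introduce the coordinate functional $\ell_i \colon \Rbb^M \to \Rbb$, $\ell_i(\bs z) = (\bs z)_i$, and observe that $\mathfrak{B}_i = \ker \ell_i$ is a hyperplane of dimension $M-1$. Consequently the intersection $\mathcal{S} \cap \mathfrak{B}_i$ equals $\ker(\ell_i|_{\mathcal{S}})$, the kernel of the restriction of $\ell_i$ to $\mathcal{S}$, viewed as a linear map $\mathcal{S} \to \Rbb$.

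Next I would unpack the hypothesis. The assumption that $\mathcal{S}$ is not contained in $\mathfrak{B}_i$ says exactly that $\ell_i|_{\mathcal{S}}$ is not the zero map, i.e.\ there exists $\bs v \in \mathcal{S}$ with $(\bs v)_i \neq 0$. Hence the image of $\ell_i|_{\mathcal{S}}$ is a nonzero linear subspace of $\Rbb$, which forces it to be all of $\Rbb$; in other words $\operatorname{rank}(\ell_i|_{\mathcal{S}}) = 1$. Applying rank--nullity to $\ell_i|_{\mathcal{S}}$ then yields $\dim \ker(\ell_i|_{\mathcal{S}}) = \dim \mathcal{S} - 1 = K - 1$.

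Finally I would assemble the conclusion: $\mathcal{S} \cap \mathfrak{B}_i = \ker(\ell_i|_{\mathcal{S}})$ is a linear subspace of dimension $K-1$; it is a subspace of $\mathfrak{B}_i$ simply by the definition of intersection, and it is in particular nonempty (it contains the origin), so $\mathcal{S}$ and $\mathfrak{B}_i$ genuinely intersect. There is essentially no serious obstacle here — the only things to be mildly careful about are that the statement must be read with $\mathcal{S}$ and $\mathfrak{B}_i$ as subspaces through $0$ (the relevant setting for the orthant-counting argument), and that the ``do intersect'' clause is the qualitative part while the dimension count is the quantitative heart, both of which fall out of the single rank computation above. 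One may also dispense with the degenerate case $K = 0$ up front, since then $\mathcal{S} = \{\bs 0\} \subset \mathfrak{B}_i$ and the hypothesis fails, so we may assume $K \geq 1$ throughout.
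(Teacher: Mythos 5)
Your proof is correct and is essentially the paper's argument: both reduce the claim to a dimension count exploiting that $\mathfrak{B}_i$ is a hyperplane, the paper via the direct-sum decomposition $\mathcal{S}=(\mathcal{S}\cap\mathfrak{B}_i)\oplus\mathcal{W}$ with $\dim\mathcal{W}\le 1$, you via rank--nullity for the coordinate functional $\ell_i|_{\mathcal{S}}$. Your phrasing through $\ker(\ell_i|_{\mathcal{S}})$ is, if anything, a slightly more explicit rendering of the same computation.
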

\begin{proof}
  We count the dimensions of the relevant spaces. If $\mathcal{S}$ is
  not a subset of $\mathfrak{B}_i$, then it equals the direct sum
  $\mathcal{S}=(\mathcal{S} \cap\mathfrak{B}_i)\oplus\mathcal{W}$,
  where $\mathcal{W}\subset \Rbb^M$ is also not a subspace of
  $\mathfrak{B}_i$. Since $\dim{\mathfrak{B}_i}=M-1$,
  $\dim{\mathcal{W}}\le 1$, and
  $\dim{\mathcal{S}\cap\mathfrak{B}_i}=K-1$ follows.
\end{proof}

\begin{lemma}
  \label{res:subspace_hyperq_intersec}
  For $K>1$, a $K$-dimensional subspace that intersects an orthant also
  non-trivially intersects at least $K$ faces bordering that
  orthant.
\end{lemma}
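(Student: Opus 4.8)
The plan is to translate the combinatorial claim into a statement about the face structure of a convex polytope. Write $\mathcal{O}=\mathcal{O}_{\sv}$ and let $\overline{\mathcal{O}}=\{\bs x\in\Rbb^M:(\sv)_i x_i\ge 0\text{ for all }i\}$ be its closure, a pointed polyhedral cone whose $M$ facets lie in the coordinate hyperplanes $\mathfrak{B}_i$; the faces $\mathcal{F}_{i,\sv}$ bordering $\mathcal{O}$ are indexed by these same $i$. First I would pass to the cone $C:=\mathcal{S}\cap\overline{\mathcal{O}}$. Since $\mathcal{S}$ meets the (relative) interior of the orthant, $C$ is full-dimensional inside $\mathcal{S}$, so $\dim C=K$, and $C$ is pointed because $\overline{\mathcal{O}}$ contains no line.

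Next, I would slice $C$ with an affine hyperplane to produce a bounded $(K-1)$-polytope. Fix $\bs z$ in the relative interior of $C$ (all coordinates strictly of sign $(\sv)_i$) and set
$$
P\ :=\ \bigl\{\bs x\in\mathcal{S}\ :\ \scp{\bs z}{\bs x}=1,\ (\sv)_i x_i\ge 0\ \text{for all }i\bigr\}.
$$
Two quick facts make $P$ the right object: it contains $\bs z/\|\bs z\|_2^2$ in its relative interior, so $\dim P=K-1$; and it is bounded, since any recession direction of $P$ would be a nonzero $\bs d\in C$ with $\scp{\bs z}{\bs d}=0$, impossible because $z_id_i\ge 0$ for every $i$ with strict inequality for at least one $i$. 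Thus $P$ is a genuine $(K-1)$-dimensional polytope, and this is exactly where the hypothesis $K>1$ is used.

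Finally, I would invoke the elementary fact that a $d$-dimensional polytope has at least $d+1$ facets: applied to $P$ this gives at least $K$ facets. Every facet of $P$ is exposed by one of the inequalities $(\sv)_i x_i\ge 0$, hence lies in $\Pi\cap\mathfrak{B}_i$ for some $i$, where $\Pi$ denotes the affine hull of $P$; and since a polytope has at most one facet in any hyperplane of its affine hull, distinct facets force distinct indices $i_1,\dots,i_K$. For each such index, $C\cap\mathfrak{B}_i$ is the cone over a $(K-1)$-dimensional facet of $P$, so $\mathcal{S}\cap\mathfrak{B}_i\cap\overline{\mathcal{O}}$ is $(K-1)$-dimensional; as $\mathcal{S}\cap\mathfrak{B}_i$ itself has dimension $K-1$ by Lemma~\ref{res:subspace_boundary_intersec} (note $\mathcal{S}\not\subset\mathfrak{B}_i$ since $z_i\neq 0$), this intersection is relatively open in $\mathcal{S}\cap\mathfrak{B}_i$ and therefore contains points with every coordinate $j\neq i$ strictly of sign $(\sv)_j$, i.e.\ it meets $\mathcal{F}_{i,\sv}$ nontrivially. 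That exhibits the required $K$ faces.

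I expect the only delicate point to be the bookkeeping around $P$ — confirming boundedness and that it is genuinely $(K-1)$-dimensional — together with the degenerate case in which $\mathcal{S}$ touches $\mathcal{O}$ only along a coordinate hyperplane; the latter is harmless because such configurations do not arise for subspaces in general position (the only ones relevant to the upper bound on $I(M,K)$), or can be absorbed by applying the argument to the lower-dimensional orthant that $\mathcal{S}$ actually meets. Everything else reduces to the single observation that a $d$-polytope has at least $d+1$ facets, transported from the cross-section back to the coordinate hyperplanes.
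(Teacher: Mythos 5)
Your proof is correct at the same level of rigor as the paper's, but it takes a genuinely different route. The paper argues constructively and iteratively: starting from a point $\bs p\in\mathcal{S}$ interior to the orthant, it grows segments $\bs p\pm a\bs x_l$ along successive directions $\bs x_l\in\mathcal{S}$ chosen parallel to all coordinate hyperplanes already hit, so that each of the first $K$ iterations crosses a new boundary, the crossing point being claimed to lie in the corresponding face $\mathcal{F}_{i,\sign\bs p}$. You instead pass to the polyhedral cone $C=\mathcal{S}\cap\overline{\mathcal{O}}$, take the bounded cross-section $P$ (your dimension and boundedness checks are right, and this is where $K>1$ enters), and invoke the standard fact that a $(K-1)$-dimensional polytope has at least $K$ facets, each exposed by a distinct coordinate constraint. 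This replaces the paper's induction-like construction by a single facet count; it is arguably cleaner, localizes all the geometry in one classical statement about polytopes, and makes the bookkeeping (distinct facets $\Rightarrow$ distinct indices $i$) transparent via the at-most-one-facet-per-hyperplane observation.

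Two small caveats. First, a harmless slip: facets of $P$ are $(K-2)$-dimensional, and it is the cone over such a facet that is $(K-1)$-dimensional. Second, your last step --- that relative-interior points of $C\cap\mathfrak{B}_i$ have all coordinates $j\neq i$ of strict sign $(\sv)_j$ --- needs $\mathcal{S}\cap\mathfrak{B}_i\not\subset\mathfrak{B}_j$ for $j\neq i$, a general-position requirement beyond the degeneracy you flag (namely that $\mathcal{S}$ meet the open orthant). For example, with $\mathcal{S}=\mathrm{span}\{(1,0,0),(0,1,1)\}\subset\Rbb^3$ and the positive orthant, the facet through $(1,0,0)$ lies in two coordinate hyperplanes and meets no face, and indeed this $\mathcal{S}$ intersects only one face of that orthant, so the lemma as literally stated already presupposes generic position there. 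The paper's own proof silently suffers from the same issue (its claim that the first boundary crossing lands in a face fails in exactly this configuration), so this does not place your argument below the paper's standard; as you note, only generic subspaces matter for upper-bounding $I(M,K)$.
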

\begin{proof}
  Consider a $K$-subspace $\mathcal{S}$, a point $\bs p\in\mathcal{S}$
  interior to the orthant $\mathcal{O}_{\sign{\bs p}}$, and a vector
  $\bs x_1\in\mathcal{S}$ non-parallel to $\bs p$. The following
  iterative procedure can be used to prove the result:
  \begin{enumerate}
  \item Starting from 0, grow $a$ until the set $\bs p\pm a\bs x_l$
    intersects a boundary $\mathfrak{B}_i$, say at $a=a_l$. It is
    straightforward to show that as $a$ grows, a boundary will be
    intersected. The point of intersection is in the face
    $\mathcal{F}_{i,\sign{\bs p}}$. The set $\{\bs p\pm a\bs x_l|
    a\in(0,a_l)\}$ is in the orthant $\mathcal{O}_{\sign{\bs p}}$.
  \item Determine a vector $\bs x_{l+1}\in\mathcal{S}$ parallel to all the
    boundaries already intersected and not parallel to $\bs p$, set
    $l=l+1$ and iterate from step 1.
  \end{enumerate}
  A vector can always be found in step 2 for the first $K$ iterations
  since $\mathcal{S}$ is $K$-dimensional. The vector~is parallel to all the
  boundaries intersected in the previous iterations and therefore
  $\bs p\pm a\bs x_l$ always intersects a boundary not intersected
  before. Therefore, at least $K$ distinct faces are intersected.
\end{proof}

Lemmas~\ref{res:subspace_boundary_intersec}
and~\ref{res:subspace_hyperq_intersec} lead to the main result in this
Appendix. Lemma~\ref{res:intersecions} in
Section~\ref{subsec:opt} follows trivially.

\begin{lemma}
  The number of orthants intersected by a $K$-dimensional subspace  $\mathcal{S}$ in an $M$-dimensional space $\cl V$ is upper bounded by
  \begin{align*}
    I(M,K)\le \frac{2^K}{M-K+1}\binom{M}{K}\le2^K\binom{M}{K}.
  \end{align*}
\end{lemma}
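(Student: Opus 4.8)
The plan is to prove the bound by induction on the ambient dimension $M$, treating all $K\le M$ simultaneously, through a double-counting (incidence) argument on pairs (orthant, face) that packages Lemmas~\ref{res:subspace_boundary_intersec} and~\ref{res:subspace_hyperq_intersec} together. The base case $M=1$ is immediate: a line through the origin meets exactly two orthants and $\tfrac{2^1}{1-1+1}\binom{1}{1}=2$. Likewise, for any $M$ the case $K=1$ is trivial, since a line meets exactly two orthants while $\tfrac{2}{M}\binom{M}{1}=2$. Hence in the inductive step I may assume $2\le K\le M$.

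For the inductive step, fix an arbitrary $K$-dimensional subspace $\mathcal{S}\subset\Rbb^M$ and write $n(\mathcal{S})$ for the number of orthants it intersects. Count the set of pairs $(\mathcal{O},\mathcal{F})$ where $\mathcal{O}$ is an orthant met by $\mathcal{S}$ and $\mathcal{F}$ is a face bordering $\mathcal{O}$ that is also met by $\mathcal{S}$. Counting by orthants, Lemma~\ref{res:subspace_hyperq_intersec} (valid because $K>1$) forces each met orthant to contribute at least $K$ such faces, so the number of pairs is at least $K\,n(\mathcal{S})$. Counting by faces, each face borders exactly two orthants, so the number of pairs is at most twice the number of faces met by $\mathcal{S}$. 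The faces in a fixed boundary $\mathfrak{B}_i$ are precisely the orthants of the $(M-1)$-dimensional space $\mathfrak{B}_i$; assuming $\mathcal{S}\not\subset\mathfrak{B}_i$, Lemma~\ref{res:subspace_boundary_intersec} makes $\mathcal{S}\cap\mathfrak{B}_i$ a $(K-1)$-dimensional subspace of $\mathfrak{B}_i$, which by the induction hypothesis meets at most $I(M-1,K-1)$ of those faces. Summing over the $M$ boundaries gives at most $M\,I(M-1,K-1)$ met faces, hence
$$
K\,n(\mathcal{S})\ \le\ 2M\cdot I(M-1,K-1).
$$

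Substituting the induction hypothesis $I(M-1,K-1)\le\tfrac{2^{K-1}}{M-K+1}\binom{M-1}{K-1}$ and using $\binom{M-1}{K-1}=\tfrac{K}{M}\binom{M}{K}$ collapses the right-hand side to exactly $\tfrac{2^K}{M-K+1}\binom{M}{K}$; maximizing over $\mathcal{S}$ gives the claimed bound, and $M-K+1\ge 1$ yields the cruder $2^K\binom{M}{K}$. It remains to dispatch the degenerate configurations in which $\mathcal{S}\subset\mathfrak{B}_i$ for some $i$ (which forces $K\le M-1$): there $\mathcal{S}$ meets no orthant with $(\bar{\bs z})_i=+1$, and viewed inside $\mathfrak{B}_i\cong\Rbb^{M-1}$ it meets at most $I(M-1,K)\le\tfrac{2^K}{M-K}\binom{M-1}{K}=\tfrac{2^K}{M}\binom{M}{K}\le\tfrac{2^K}{M-K+1}\binom{M}{K}$ orthants, so the bound holds with room to spare; this branch is the reason the induction is organized on $M$ rather than on $K$ alone. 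The part needing the most care is exactly this bookkeeping: invoking Lemma~\ref{res:subspace_hyperq_intersec} only for $K>1$, confirming the degenerate case is genuinely absorbed by the $M$-induction, and checking that the binomial arithmetic closes to precisely the constant $\tfrac{1}{M-K+1}$ rather than a weaker one.
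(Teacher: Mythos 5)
Your proof is correct and follows essentially the same route as the paper: the same incidence count based on Lemmas~\ref{res:subspace_boundary_intersec} and~\ref{res:subspace_hyperq_intersec} yielding the recursion $K\,n(\mathcal{S})\le 2M\cdot I(M-1,K-1)$, closed by induction with $I(\cdot,1)=2$. Your only deviations are presentational: you organize the induction explicitly on $M$ and verify the degenerate case $\mathcal{S}\subset\mathfrak{B}_i$ by a direct computation, where the paper simply dismisses it via monotonicity of the bound.
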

\begin{proof}
The main intuition is that since the faces on each boundary are
equivalent to orthants in the lower dimensional subspace of the
boundary, the maximum number of faces intersected at each boundary is
a problem of dimension $I(M-1,K-1)$.
  
If $\mathcal{S}$ is contained in one of the boundaries in $\cl V$, the
number of orthants of $\cl V$ intersected is at most $I(M-1,K)$. Since
$I(M,K)$ is non-decreasing in $M$ and $K$, we can ignore this case in
determining the upper bound.

If $\mathcal{S}$ is not contained in one of the boundaries then
Lemma~\ref{res:subspace_boundary_intersec} shows that the
intersection of $\mathcal{S}$ with any boundary $\mathfrak{B}_i$ is a $(K-1)$-dimensional subspace in $\mathfrak{B}_i$. To count the faces of
$\mathfrak{B}_i$ intersected by $\mathcal{S}$ we use the observation in the
definition of faces above, that each face is also an orthant of
$\mathfrak{B}_i$. Therefore, the maximum number of faces of
$\mathfrak{B}_i$ intersected is a recursion of the same problem in
lower dimensions, \ie is upper bounded by $I(M-1,K-1)$. Since there
are $M$ boundaries in $\cl V$, it follows that the number of faces in
$\cl V$ intersected by $\mathcal{S}$ is upper bounded by $M\cdot I(M-1,K-1)$.

Using Lemma~\ref{res:subspace_hyperq_intersec} we know that for an
orthant to be intersected, at least $K$ faces adjacent to it should be
intersected. Since each face is adjacent to two orthants, the total
number of orthants intersected cannot be greater than twice the number
of faces intersected divided by $K$:
\begin{align}
  \label{eq:loose}
  I(M,K)\le\frac{2M\cdot I(M-1,K-1)}{K}.
\end{align}
To complete the induction we use $I(M,1)=2$ for all $M$; for $K=1$ the
subspace is a line through the origin, which can intersect only two
orthants\footnote{We recall that, from the definition (\ref{eq:orthant}), two
  different orthants have an empty intersection.}. This leads to:
$$
I(M,K) \leq \frac{2^K}{M-K+1} {M \choose K}\le2^K {M \choose K}\le 2^K\left(\frac{Me}{K}\right)^K.
$$
\end{proof}

\section{Theorem~\ref{res:lower_bound}: Distributing Signals
to Quantization Points}
\label{app:sph_cap}

To prove Theorem \ref{res:lower_bound} we consider how the available
quantization points optimally cover the set of signals $\Sigma^*_K=
\{\bs x \in\Rbb^N: \|\bs x\|=1, \|\bs x\|_0\leq K\}$. This set
corresponds to the union of $L={N \choose K}$ $K$-dimensional unit
spheres $\bigcup_{i\in [L]} S_i$, each $S_i = \{\bs x\in \Rbb^N: \|\bs
x\|=1,\, \supp \bs x \subset T_i\}$ being associated to one support
$T_i \subset [N]$ taken amongst the $L$ available $K$-length supports
of $\Rbb^N$. The cover should be optimal with respect to the worst
case distance, denoted by $r$, of any point in $\Sigma^*_K$ to its
closest quantization point. Our goal is to determine a lower bound on
the best-case $r$ we can achieve.

Unfortunately, determining the optimal cover of $\Sigma^*_K$ is not
straightforward. For instance, optimally covering each $S_i$
individually does not produce an optimal cover for their
union. Indeed, at the intersection of different $K$-dimensional
spheres in $\Sigma^*_K$ there are $K$-sparse signals, with different
support, very close to each other. A single quantization point in
$\Rbb^N$ could be close to all those signals, whereas independent
cover of each $S_i$ would have to use a different quantization point
to represent the signals on each sphere in that intersection. 

Thus, instead of determining the optimal cover of $\Sigma^*_K$, we
establish a lower bound on $r$ required to cover a subset $\widetilde
\Sigma^*_K$ of $\Sigma^*_K$ using the same number of points. A cover
of $\Sigma^*_K$ with a smaller $r$ would not be possible, since that
would also cover $\widetilde \Sigma^*_K$ with the same or smaller
$r$. Therefore, this $r$ establishes a lower bound for the cover of
$\Sigma^*_K$. To establish the bound, we pick $\widetilde \Sigma^*_K$
such that the neighborhood around the intersection of the balls is not
included. Specifically, we pick
$$
\widetilde \Sigma^*_K\ :=\ \bigcup_{i}\, \widetilde S_i \ \subset\
\Sigma^*_K,\quad \widetilde S_i := \{\bs x \in S_i: \forall\, k\in
T_i,\, |x_k| > 2r\} \subset S_i.  
$$ In other words, we pick the subset of each $K$-ball, such that all
the nonzero coordinates of the signals in the subset are greater than
$2r$. The union of those subsets for all possible supports comprises
$\widetilde \Sigma^*_K$. This choice ensures that any signal $\bs
x\in\widetilde S_i$ has distance at least $2r$ from any signal in any
other $\widetilde S_j, j\ne i$, and therefore both cannot be close to
a common quantization point in $\Rbb^N$ with distance $r$ from
each. Notice that $\widetilde S_i$ is non empty as soon as
$r<1/(2\sqrt K)$.

With this choice of $\widetilde\Sigma_K^*$, an optimal covering can be
obtained by merging the optimal coverings of each $\widetilde S_i$ for
$i\in [L]$. For an optimal covering of distance $r$, each of the
elements in $\widetilde S_i$ should belong to some ball of radius $r$
centered at the quantization point. Thus, each quantization point and
its corresponding $r$-ball should cover as large an area of
$\widetilde S_i$ as possible. This is achieved when the quantization
point is on $\widetilde S_i$, and the intersection of the ball with
$\widetilde S_i$ is a spherical cap of radius
$r$~\cite{bib:Ball97}. Thus, for the spherical caps to cover
$\widetilde S_i$, the total area of all spherical caps in the cover
should be greater than the area of $\widetilde S_i$. Furthermore,
since the overall cover of $\widetilde \Sigma^*_K$ is composed of
separate cover of each $\widetilde S_i$, the total area of all
spherical caps used for the cover should be greater than the area of
$\widetilde \Sigma^*_K$. 

Therefore, since $2^KL\binom{M}{K}$ quantization points and
corresponding spherical caps are available, according to
Lemma~\ref{res:intersecions} and
Corollary~\ref{cor:quantiz-sparsity-subspace}, to cover $L$ subsets of
$K$-spheres, as described above, the cover should satisfy
\begin{align}
  2^KL\binom{M}{K} \kappa(r) \geq \sigma(\widetilde
  \Sigma^*_K),\label{eq:area_ineq}
\end{align}
where $\sigma(\cdot)$ denotes the rotationally invariant area
measure the $K$-sphere $S_i$ and $\kappa(r)$ denotes the surface of a
spherical cap of radius $r$.

To determine the smallest $r$ satisfying~\eqref{eq:area_ineq}, we thus
need to measure the set $\widetilde \Sigma^*_K$. Choosing one $i\in
[L]$ we have $\sigma(\widetilde \Sigma^*_K) = L \sigma(\widetilde
S_i)$ since the sets $\widetilde S_k$ ($k\in [L]$) are disjoint with
identical area. We first show that
\begin{align}
  \sigma(\widetilde S_i) \geq (1 - 2r \sqrt K)^K \sigma(S^{K-1}),\label{eq:cap_ineq}
\end{align}
where $S^{K-1}$ is a $K$-dimensional sphere in $\Rbb^K$. Note that
this bound is tight at two ends: at $r=0$ where $\widetilde S_i = S_i$
(almost everywhere) and at $r=1/(2\sqrt K)$ for which $\widetilde S_i
= \emptyset$.

To prove~\eqref{eq:cap_ineq} we assume without loss of generality $T_i
= \{1,\,\cdots, K\}$ and consider $S_i = S^{K-1} = \{\bs x \in \Rbb^K:
\|\bs x\| = 1\}$ and $\widetilde S_i = \{\bs x \in \Rbb^K: \|\bs x\| =
1, \,\min |x_k| > 2r\}$. We define the intersection $\widetilde S^+_i
:= \widetilde S_i \cap \cl P$ with the positive orthant $\cl P=\{\bs x
\in \Rbb^K: \forall\,i\in [K], x_i \geq 0\}$. By symmetry,
$\sigma(\widetilde S_i) = 2^K \sigma(\widetilde S^+_i)$ since there
are $2^K$ orthants in $\Rbb^K$. As described in~\cite{bib:Ball97}, the
ratio between the measure of $\widetilde S^+_i$ and the one of the
full sphere $S^{K-1}$ equals the ratio between the volume occupied by
a cone formed by $\widetilde S^+_i$ in the unit ball $B^K\subset
\Rbb^K$ and the volume of this ball, \ie
\begin{equation}
  \label{eq:area-to-vol-ratio}
\frac{\sigma(\widetilde S^+_i)}{\sigma(S^{K-1})} = \frac{\mu(\cl C(\widetilde
S^+_i))}{\mu(B^K)},  
\end{equation}
where $\mu$ is the Lebesgue measure in $\Rbb^K$ and $\cl C(A) = \{t\bs
a: t\in [0,1],\, \bs a\in A\} \subset B^K$ is the portion of the cone
(with apex on the origin and restricted to $B^K$) formed by the subset
$A\subset S^{K-1}$. The geometry of our problem is made clear in
Figure~\ref{fig:cover-explan}. 

\begin{figure}[!h]
  \centering
  \includegraphics[width=11cm]{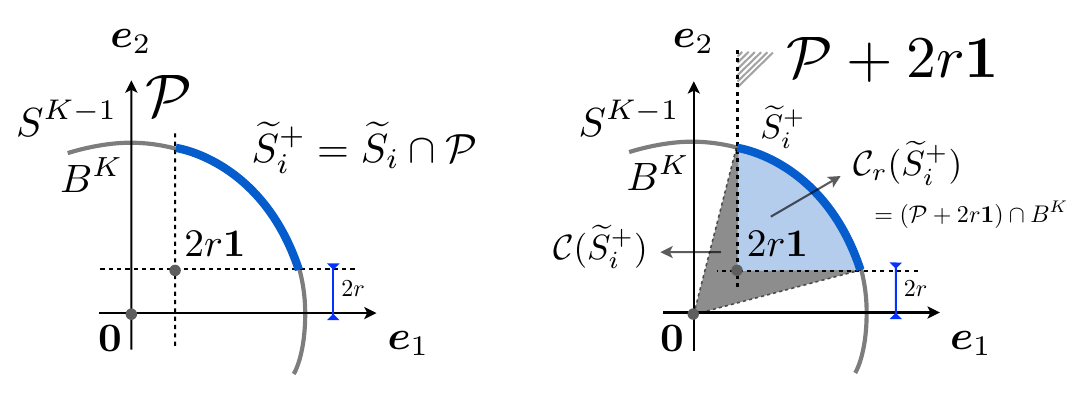}
  \caption{ The geometry of our problem in $\Rbb^2$. We only need to
    consider one orthant, $\cl P$, as the problem is the same for all
    orthants. To measure the surface area of $\widetilde S^+_i$
    relative to the surface of $S^{K-1}$, we consider the ratio of the
    volume of $\cl C(\widetilde S^+_i)$ with respect to this of $B^K$. We
    lower bound the former using $\cl C_r(\widetilde S^+_i) \subset
    \cl C(\widetilde S^+_i)$.}  \label{fig:cover-explan}
\end{figure}

To measure the volume of $\cl C(\widetilde S^+_i)$, writing $\bs
1=(1,\,\cdots,1)^T\in\Rbb^K$, we first define the set
$$
\cl C_r(\widetilde S^+_i) := (\cl P + 2r\bs 1)\cap B^K,
$$ also shown in the figure, which is non-empty for $2r \leq 1/\sqrt
K$. Since $\cl C_r(\widetilde S^+_i) \cap S_i = \widetilde S_i^+$ and
$2r\bs 1 \in \cl C(\widetilde S^+_i)$, it is straightforward to show
that $\cl C_r(\widetilde S^+_i)\subset \cl C(\widetilde S^+_i)$, and
therefore the measure of $\cl C_r(\widetilde S^+_i)$ is a lower bound
to the measure of $C(\widetilde S^+_i)$.
Furthermore, by the translation invariance of $\mu$,
$$
\mu(\cl
C_r(\widetilde S^+_i)) = \mu( (\cl P + 2r\bs 1)\cap B^K) = \mu(\cl
P\cap (B^K - 2r\bs 1)),
$$ where $B^K - 2r\bs 1$ is the unit ball centered on $-2r \bs 1$.
Setting $\alpha = (1 - 2r\sqrt K)$ it follows that $\|\bs u + 2r \bs
1\| \leq 1$ for any $\bs u\in\Rbb^K$, $\|\bs u\|\leq
\alpha$. Consequently, $\alpha B^K \subset B^K - 2r\bs 1$ and $ \cl P
\cap \alpha B^K \subset \cl P\cap (B^K - 2r\bs 1) $. Therefore, the
measure of the positive orthant of a $K$-ball with radius $\alpha$
lower bounds the measure of $\cl C(\widetilde S^+_i)$.

Putting everything together we obtain
$$
\mu(\cl C(\widetilde S^+_i))\ \geq\ \mu(\cl P \cap \alpha B^K) =
\alpha^{K} 2^{-K} \mu(B^K),
$$ which, using \eqref{eq:area-to-vol-ratio}, implies that
${\sigma(\widetilde S^+_i)}/{\sigma(S^{K-1})} \geq \alpha^{-K} 2^{-K}$
and $\sigma(\widetilde \Sigma^*_K) = L\sigma(\widetilde S_i) \geq (1 -
2r \sqrt K)^K L\sigma(S^{K-1})$.

In turn,~\eqref{eq:area_ineq} becomes
$$
2^KL\binom{M}{K} \kappa(r) \geq (1 - 2r \sqrt K)^K L\sigma(S^{K-1}),
$$ where $\kappa(r) \leq r^K \sigma(S^{K-1})$~\cite{bib:Ball97}. From 
$\binom{M}{K} \leq (eM/K)^K$, the result follows:
$$
2^K r^K (\tfrac{eM}{K})^K \geq (1 - 2r \sqrt K)^K\ \Rightarrow\ r \geq
(\tfrac{2eM}{K} + 2\sqrt K)^{-1} = K/(2eM + 2K^{3/2}). 
$$

\section{Theorem~\ref{thm:hash}: Optimal Performance via Gaussian Projections}
\label{sec:proof-hash}

To prove Theorem~\ref{thm:hash}, we follow the procedure given in
\cite[Theorem 3.3]{Boufounos::2010::univer_rate_effic_scalar_quant}.
We begin by restricting our analysis to the support set $T\subset
[N]:=\{1,\,\cdots,N\}$ with $|T| \leq D \leq N$, and thus we consider
vectors that lie on the (sub) sphere $\Sigma^*(T)=\{\bs x: \supp \bs
x\subset T, \|\bs x\|_{2}=1\}\subset \Rbb^N$.  We remind the reader
that $B_{r}(\bs x) := \{ \bs a \in S^{N-1}: \|\bs x - \bs a\|_{2} < r
\}$ is the ball of unit norm vectors of Euclidean distance $r>0$
around $\bs x$, and we write $B_r^*(\bs x) = B_r(\bs
x)\cap\Sigma^*(T)$ as in Section~\ref{subsec:results}.

Let us fix a radius $\delta>0$ to be precised later. The sphere $\Sigma^*(T)$ can be
covered with a finite set $ Q_\delta\subset \Sigma^*(T)$ of no more
than $(3/\delta)^D$ points such that, for any $\bs w\in \Sigma^*(T)$,
there exists a $\bs q\in Q_\delta$ with $\bs w\in B^{*}_{\delta}(\bs
q)$ \cite{BarDavDeV::2008::A-Simple-Proof}.

Using the notation $d_S$ defined in Sec.~\ref{subsec:prelim}, given a
vector $\bs \varphi \sim \SGR{N}{1}$ and two distinct points $\bs p$
and $\bs q$ in $Q_\delta$, we have that
$$
\bb P\big[\forall\, \bs u\in B_\delta^*(\bs p), \forall \bs v\in B_\delta^*(\bs
q): \sign\bs\varphi^T\bs u \neq \sign\bs\varphi^T\bs v\big]\ \geq\ \ang{\bs p}{\bs q}\ -\
\sqrt{\tfrac{\pi}{2} D}\,\delta,
$$ 
from Lemma~\ref{lemma:p0-p1-lower-bounds} (given in
Appendix~\ref{sec:proof-lemma-conc-prop-hamming-balls}).  Since for
all $\bs u\in B_\delta^*(\bs p)$ and $\bs v\in B_\delta^*(\bs
q)$
$$ 
\pi\,\ang{\bs p}{\bs q} \geq 2\sin(\tfrac{\pi}{2}\,\ang{\bs p}{\bs q})
= \|\bs p -\bs q\|_{2} 
\geq \|\bs u -\bs v\|_{2} - 2\delta,
$$
we can write for any $\epsilon_{o} > 0$ 
$$
\bb P\big[\forall\, \bs u\in  B_\delta^*(\bs p), \forall \bs v\in  B_\delta^*(\bs
q): \sign\bs\varphi^T\bs u \neq \sign\bs\varphi^T\bs v\ |\ \|\bs u -\bs v\|_{2}
> \epsilon_{o}\big]\ \geq\ \tfrac{\epsilon_{o}}{\pi}\ -\
(\tfrac{2}{\pi} + \sqrt{\tfrac{\pi}{2} D})\,\delta.
$$
By setting $\delta = \pi\epsilon_{o}/(4 + \pi\sqrt{2\pi
D})$ (and reversing the inequality), we obtain
\begin{align*}
&\bb P\big[\exists\, \bs u\in  B_\delta^*(\bs p), \exists \bs v\in B_\delta^*(\bs
q): \sign(\bs\varphi^T\bs u) = \sign(\bs\varphi^T\bs v)\ |\ \|\bs u -\bs v\|_{2}
> \epsilon_{o}\big]\ \leq\ 1 - \tfrac{\epsilon_{o}}{2}.
\end{align*}
Thus, for $M$ different random vectors $\bs \varphi_i$ arranged in
$\Phi=(\bs \varphi_1,\,\cdots, \bs \varphi_M)^T \sim \SGR{M}{N}$, and
for the associated mapping $A$ defined in (\ref{eq:defh}), we
get
\begin{align*}
&\bb P\big[\exists\, \bs u\in  B_\delta^*(\bs p), \exists \bs v\in  B_\delta^*(\bs
q): A(\bs u) = A(\bs v)\ |\ \|\bs u -\bs v\|_{2}
> \epsilon_{o}\big] \leq (1 - \tfrac{\epsilon_{o}}{2})^M.
\end{align*}
In other words, we have found a bound on the probability that two vectors'
measurements are consistent, even if their Euclidean distance is
greater than $\epsilon_{o}$, but only for vectors in the restricted
(sub) sphere $\Sigma^{*}(T)$.  Now we seek to cover the rest of the
space $\Sigma_{K}^{*}$ (unit norm $K$-sparse signals).

Since there are no more than ${ |Q_\delta| \choose 2} \leq (|
Q_\delta|)^2 \leq (3/\delta)^{2D}$ pairs of distinct points in $
Q_\delta$, we find
$$
\bb P\big[\exists\, \bs u, \bs v \in \Sigma^*(T): A(\bs u) = A(\bs v)\ |\ \|\bs u -\bs v\|_{2}
> \epsilon_{o}\big]\ \leq\ \big(\tinv{\pi\epsilon_{o}}(12 + 3\pi\sqrt{2\pi D})\big)^{2D}\ (1 - \tfrac{\epsilon_{o}}{2})^M.
$$

To obtain the final bound, we observe that any pair of unit $K$-sparse vectors $\bs x$ and $\bs s$ in
$\Sigma_K^*$ belongs to some $\Sigma^*(T)$ with $T=\supp \bs x \cup \supp
\bs s$ and $|T| \leq 2K$. There are no more than ${N \choose
2K} \leq (eN/2K)^{2K}$ of such sets $T$, and thus setting $D=2K$ above yields
\begin{align*}
&\bb P\big[\exists\, \bs u, \bs v \in \Sigma^*_K: A(\bs u) = A(\bs v)\ |\ \|\bs u -\bs v\|_{2}
> \epsilon_{o}\big]\\
&\leq\ (\tfrac{eN}{2K})^{2K}\ (\tinv{\pi\epsilon_{o}}(12 + 6\pi\sqrt{\pi
K}))^{4K}\ (1 - \tfrac{\epsilon_{o}}{2})^M\\
&\leq \exp\big[2K \log(\tfrac{eN}{2K}) + 4K \log(\tinv{\pi\epsilon_{o}}(12 + 6\pi\sqrt{\pi
K})) - M\tfrac{\epsilon_{o}}{2} \big], 
\end{align*}
where the second inequality follows from $1-\frac{\epsilon_{o}}{2}\leq
\exp\frac{\epsilon_{o}}{2}$. By upper bounding this probability by
$\eta$ and solving for $M$, we obtain 
$$
M \geq \tfrac{2}{\epsilon_{o}}\,\big(2K\,\log\tfrac{eN}{2K} + 4K
\log(\tinv{\pi\epsilon_{o}}(12 + 6\pi\sqrt{\pi K})) + \log
\tinv{\eta}\big).
$$
Since $K\geq 1$, we have that $\inv{\pi}(12 + 6\pi\sqrt{\pi K}) < 17 \sqrt{\tfrac{2K}{e}}$,
and thus the previous relation is then satisfied when
\begin{align*}
M\ &\geq\ \tfrac{2}{\epsilon_{o}}\,\big(2K\,\log\tfrac{eN}{2K} + 4K
\log(\tfrac{17\sqrt{2}}{\sqrt e\,\epsilon_{o}}\sqrt{K}) + \log
\tinv{\eta}\big)\\
&=\ \tfrac{2}{\epsilon_{o}}\,\big(2K\,\log N + 4K
\log(\tfrac{17}{\epsilon_{o}}) + \log
\tinv{\eta}\big).
\end{align*}

\section{Lemma~\ref{lem:conc-prop-hamming-balls}: Concentration of Measure for $\delta$-Balls}
\label{sec:proof-lemma-conc-prop-hamming-balls}

Since $T\subset [N]$ is fixed with size $|T|=D$, proving Lemma
\ref{lem:conc-prop-hamming-balls} amounts to showing that, for any
fixed $\epsilon > 0$ and $0\leq\delta\leq1$, given a Gaussian matrix
$\Phi \in \mathbb{R}^{M\times D}$, the mapping $A: \mathbb{R}^{D}
\rightarrow \mathcal{B}^{M}$ defined as $A(\bs u)=\sign(\Phi\bs u)$,
and for any fixed $\bs x,\bs s\in S^{D-1}$, we have
\begin{equation*}
\bb P \left(\ \forall \bs u\in B^{*}_{\delta}(\bs x),\,\forall \bs v\in B^{*}_{\delta}(\bs s),\ \big|\,d_H\big(A(\bs u),A(\bs v)\big)\ -\ \ang{\bs x}{\bs
s}\,\big|\ \leq\ \epsilon + \sqrt{\tfrac{\pi}{2} D}\,\delta\, \right)\ \geq\ 1 -
2\,e^{-2\epsilon^2 M},
\end{equation*}
where, in this case, $B^*_\delta(\bs p) = (B_\delta(\bs p)\cap S^{D-1}) \subset \Rbb^D$ for any $\bs p\in\Rbb^D$.

Given $\bs u'\in B^{*}_{\delta}(\bs x)$ and $\bs v'\in B^{*}_{\delta}(\bs s)$,
the quantity $Md_H\big(A(\bs u'),A(\bs v')\big)$ is the sum $\sum_i
A_i(\bs u') \oplus A_i(\bs v')$, where $A_i(\bs u')$ stands for the
$i^{\rm th}$ component of $A(\bs u')$. For one index $1\leq i\leq M$ 
\begin{align*}
A_i(\bs u') \oplus A_i(\bs v')\ &\leq\ Z^+_i := \max\big\{ A_i(\bs u) \oplus A_i(\bs v) :\ \bs u\in
B^{*}_{\delta}(\bs x), \bs v\in
B^{*}_{\delta}(\bs s)\,\big\},\\
A_i(\bs u') \oplus A_i(\bs v')\ &\geq\ Z^-_i:= \min\big\{ A_i(\bs u) \oplus A_i(\bs v) :\ \bs u\in
B^{*}_{\delta}(\bs x), \bs v\in
B^{*}_{\delta}(\bs s)\,\big\},
\end{align*}
and therefore 
$$
Z^-\ := \sum_{i=1}^M Z^-_i \ \leq\ M\, d_H\big(A(\bs u'),A(\bs v')\big)\
\leq\ \sum_{i=1}^M Z^+_i\ =:\ Z^+.
$$

Of course, the occurrence of $Z_i^+ = 0$ ($Z_i^- = 1$) means that all
vector pairs taken separately in $B^{*}_{\delta}(\bs x)$ and $B^{*}_{\delta}(\bs s)$ have
consistent (or respectively, inconsistent) measurements on the $i^{\rm th}$
sensing component $A_i$. More precisely, since $\bs\varphi_i \sim \SGR{N}{1}$, $Z^\pm_i$ are binary random variables such that $\bb P[Z^+_i =
1]=1-p_0$ and $\bb P[Z^-_i = 1] = p_1$ independently of $i$, where the
probabilities $p_0$ and $p_1$ are defined by
\begin{align*}
p_0(\bs x,\bs s,\delta)&\,=\ \bb P[Z^+_i = 0]\ =\ \bb P\big[\,\forall \bs u\in B^{*}_{\delta}(\bs x), \forall\bs v\in
B^{*}_{\delta}(\bs s),\ A_i(\bs u) = A_i(\bs v)\,\big],\\ 
p_1(\bs x,\bs s,\delta)&\,=\ \bb P\big[\,\forall \bs u\in B^{*}_{\delta}(\bs x), \forall\bs v\in
B^{*}_{\delta}(\bs s),\ A_i(\bs u) \neq A_i(\bs v)\,\big].
\end{align*}

In summary, $Z^+$ and $Z^-$ are binomially distributed with $M$ trials and probability of success $1-p_0$ and $p_1$, respectively. 
Furthermore, we have that $\E Z^+ = M\,(1-p_0)$ and $\E
Z^- = M\,p_1$, thus by the Chernoff-Hoeffding inequality,
\begin{align*}
  &\bb P\big[\,Z^+\ >\ M\,(1-p_0) + M\epsilon\,\big]\ \leq\ e^{-2M\epsilon^2},\\ 
  &\bb P\big[\,Z^-\ <\ M\,p_1 - M\epsilon\,\big]\ \leq\ e^{-2M\epsilon^2}.
\end{align*}
This indicates that with a probability higher than $1-2e^{-2M\epsilon^2}$, we
have 
$$
p_1 - \epsilon \ \leq\ d_H\big(A(\bs u'), A(\bs v')\big)\
\leq\ (1-p_0) + \epsilon.
$$
The final result follows by lower bounding $p_0$ and $p_1$ as in Lemma~\ref{lemma:p0-p1-lower-bounds}.

\begin{lemma}
\label{lemma:p0-p1-lower-bounds}
Given $0\leq \delta < 1$ and two unit vectors $\bs x,\bs s\in S^{D-1}$, we have 
\begin{align}
\label{eq:p0-lower-bound}
p_0&= \bb P\big[\,\forall \bs u\in B^*_\delta(\bs x),\ \forall \bs v\in
B^*_\delta(\bs s),\,\sign\scp{\bs\varphi}{\bs u}\, =\, \sign \scp{\bs\varphi}{\bs v}\,\big]\ \geq\ 1\ - \ang{\bs x}{\bs s}\ -\
\sqrt{\tfrac{\pi}{2} D}\,\delta,\\
\label{eq:p1-lower-bound}
p_1&= \bb P\big[\,\forall \bs u\in B^*_\delta(\bs x),\ \forall \bs v\in
B^*_\delta(\bs s),\,\sign\scp{\bs\varphi}{\bs u}\, \neq\, \sign
\scp{\bs\varphi}{\bs v}\,\big]\ \geq\ \ang{\bs x}{\bs s}\ -\
\sqrt{\tfrac{\pi}{2} D}\,\delta.
\end{align}
\end{lemma}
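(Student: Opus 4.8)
The plan is to reduce both inequalities to a single geometric event together with the one–measurement Goemans–Williamson identity. The key observation is that $\bs w\mapsto\sign\scp{\bs\varphi}{\bs w}$ is the sign of a linear functional, so it is constant on $B^*_\delta(\bs x)$ as soon as the full Euclidean ball $\{\bs z\in\Rbb^D:\|\bs z-\bs x\|_2<\delta\}\supseteq B^*_\delta(\bs x)$ lies on one side of the random hyperplane $H_{\bs\varphi}:=\{\bs z:\scp{\bs\varphi}{\bs z}=0\}$, and --- since $\bs x$ belongs to that ball --- the constant value is then $\sign\scp{\bs\varphi}{\bs x}$. Write $G_{\bs x}$ for the event that $H_{\bs\varphi}$ meets that ball, i.e.\ $\{\,|\scp{\bs\varphi}{\bs x}|<\delta\,\|\bs\varphi\|_2\,\}$, and $G_{\bs s}$ for the analogous event. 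On $G_{\bs x}^c\cap G_{\bs s}^c$ the event in the definition of $p_0$ reduces to $\{\sign\scp{\bs\varphi}{\bs x}=\sign\scp{\bs\varphi}{\bs s}\}$, and the one in the definition of $p_1$ to its complement.

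Next I would apply a union bound to get
\begin{align*}
p_0 &\ \ge\ \bb P\big[\sign\scp{\bs\varphi}{\bs x}=\sign\scp{\bs\varphi}{\bs s}\big]\,-\,\bb P[G_{\bs x}]\,-\,\bb P[G_{\bs s}],\\
p_1 &\ \ge\ \bb P\big[\sign\scp{\bs\varphi}{\bs x}\neq\sign\scp{\bs\varphi}{\bs s}\big]\,-\,\bb P[G_{\bs x}]\,-\,\bb P[G_{\bs s}].
\end{align*}
The leading probabilities are exactly the single–measurement identity already invoked for Lemma~\ref{lem:conc-prop-hamming}, namely $\bb P[\sign\scp{\bs\varphi}{\bs x}\neq\sign\scp{\bs\varphi}{\bs s}]=\ang{\bs x}{\bs s}$, so the other equals $1-\ang{\bs x}{\bs s}$. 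It therefore remains only to bound $\bb P[G_{\bs x}]$, and $\bb P[G_{\bs s}]$, which by rotation invariance has the same value.

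For this, note that $\scp{\bs\varphi}{\bs x}/\|\bs\varphi\|_2$ is distributed like the first coordinate $\hat u_1$ of a uniform point on $S^{D-1}$ (independently of $\bs x$), i.e.\ $\hat u_1^2\sim\mathrm{Beta}(\tfrac12,\tfrac{D-1}{2})$, so $\bb P[G_{\bs x}]=\bb P[|\hat u_1|<\delta]$ is, up to the Beta normalizing constant, the area of the equatorial band $\{\bs z\in S^{D-1}:|z_1|<\delta\}$; estimating $\int_0^{\delta}(1-t^2)^{(D-3)/2}\,\ud t\le\delta$ against that constant, together with $\Gamma(\tfrac{D-1}{2})/\Gamma(\tfrac D2)\ge\sqrt{2/D}$, gives a bound of order $\sqrt D\,\delta$, which, substituted above, yields $p_0\ge 1-\ang{\bs x}{\bs s}-\sqrt{\tfrac{\pi}{2}D}\,\delta$ and $p_1\ge\ang{\bs x}{\bs s}-\sqrt{\tfrac{\pi}{2}D}\,\delta$. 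I expect this last estimate to be the main obstacle --- specifically, getting the two $\bb P[G_{\bullet}]$ contributions to fit together inside $\sqrt{\tfrac{\pi}{2}D}\,\delta$: a plain union bound over the two $\delta$-balls seems to lose a constant factor, so the sharp argument likely works instead with the single convex ``capsule'' $\mathrm{conv}\big(B^*_\delta(\bs x)\cup B^*_\delta(\bs s)\big)$, which lies inside the $\delta$-neighbourhood of the segment $[\bs x,\bs s]$ and whose intersection with $H_{\bs\varphi}$ is controlled by $\min_{t\in[0,1]}|\scp{\bs\varphi}{(1-t)\bs x+t\bs s}|$. The remaining pieces --- the reduction and the Goemans–Williamson input --- are routine.
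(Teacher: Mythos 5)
Your reduction is sound, and it is in fact one direction of what the paper isolates as Lemma~\ref{lemma-W-V-xy-equiv}: if the random hyperplane $\Pi(\bs\varphi)=\{\bs z:\scp{\bs\varphi}{\bs z}=0\}$ stays at Euclidean distance at least $\delta$ from $\bs x$ (equivalently $|\scp{\bs\varphi}{\bs x}|\ge\delta\|\bs\varphi\|_2$, i.e.\ $G_{\bs x}^c$), then the sign of $\scp{\bs\varphi}{\cdot}$ is constant on $B^*_\delta(\bs x)$ and equals its value at $\bs x$, and likewise for $\bs s$; together with $\bb P[\sign\scp{\bs\varphi}{\bs x}\neq\sign\scp{\bs\varphi}{\bs s}]=\ang{\bs x}{\bs s}$ this gives $p_1\ \ge\ \ang{\bs x}{\bs s}-\bb P[G_{\bs x}]-\bb P[G_{\bs s}]$ and the analogue for $p_0$. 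The genuine gap is exactly where you suspect it, and it cannot be closed inside the plain union bound: since $\bs\varphi/\|\bs\varphi\|_2$ is uniform on $S^{D-1}$, $\bb P[G_{\bs x}]=\bb P[|\hat u_1|<\delta]$ has first-order value $\sqrt{2D/\pi}\,\delta$ (the marginal density at $0$ is $\Gamma(\tfrac D2)/(\sqrt\pi\,\Gamma(\tfrac{D-1}{2}))\sim\sqrt{D/2\pi}$), so $\bb P[G_{\bs x}]+\bb P[G_{\bs s}]\approx\sqrt{8D/\pi}\,\delta\approx 1.60\,\sqrt D\,\delta$, which strictly exceeds the claimed $\sqrt{\tfrac\pi2 D}\,\delta\approx 1.25\,\sqrt D\,\delta$. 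Hence the sentence asserting that the substitution ``yields'' \eqref{eq:p0-lower-bound}--\eqref{eq:p1-lower-bound} is not correct as written: what your argument actually proves is the lemma with the constant worsened by a factor $4/\pi$ (enough for the downstream theorems up to constants, but not the statement as given).

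The missing ingredient is not a sharper estimate of $\bb P[G_{\bs x}]$ (the value above is exact to first order in $\delta$) but the correlation between $G_{\bs x},G_{\bs s}$ and the sign event. The paper gets the constant by integrating jointly in hyperspherical coordinates adapted to the $\bs x\bs s$ plane: conditioned on the latitude angles the longitude $\phi_{D-1}$ is uniform on the circle, the sign-disagreement event is two arcs of total length $2\theta$, and each constraint $|\scp{\hat{\bs\varphi}}{\bs x}|\ge\delta$, $|\scp{\hat{\bs\varphi}}{\bs s}|\ge\delta$ removes only \emph{half} of its bad band from those arcs; equivalently, $\bb P[G_{\bs x}\cap\{\text{signs differ}\}]\le\tfrac12\,\bb P[G_{\bs x}]$, which inserted into your decomposition recovers (indeed slightly improves) the stated constant after the latitude integration via $I_{D-2}\ge\sqrt{2\pi/D}$. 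Proving that halving, however, already requires the two-dimensional geometry in the $\bs x\bs s$ plane, i.e.\ essentially the paper's computation. The ``capsule'' repair you sketch does not supply it: for the $p_1$ event the hyperplane \emph{must} cross the segment $[\bs x,\bs s]$, hence the convex hull of the two caps, so requiring it to miss the capsule would annihilate the main term; and for $p_0$, missing the capsule is equivalent to missing both balls and agreeing at the centers, which sends you back to the same two-ball union bound.
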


\begin{proof}[Proof of Lemma \ref{lemma:p0-p1-lower-bounds}] 
  We begin by introducing some useful properties of Gaussian vector
  distribution. If $\bs \varphi\sim \SGR{D}{1}$, the probability that
  $\bs \varphi\in \cl A \subset \Rbb^D$ is simply the measure $\mu$ of
  $\cl A$ with respect to the standard Gaussian density $\gamma(\bs
  \varphi) = \tinv{(2\pi)^{D/2}}\,e^{-\|\bs \varphi\|^2/2}$, \ie
    \begin{equation*}
     \bb P[\,\bs \varphi\in \cl A\,] =
  \mu(\cl A) = \int_{\cl A}\ud^D \bs \varphi\ \gamma(\bs\varphi),
  \end{equation*}
  with $\mu(\Rbb^D)=1$. 
  It may be easier to perform this integration over a hyper-spherical
  set of coordinates measured in a basis defined by the vectors $\bs
  x$ and $\bs s$.  This is possible since the pdf $\gamma$ is
  rotationally invariant.

  Specifically, we consider the canonical basis $\cl E= \{\bs
  e_1,\,\cdots,\bs e_D\}$ of $\Rbb^D$ where, by using the cross
  product $\wedge$ in $\Rbb^D$, $\bs e_1:=(\bs x\,\wedge\,\bs s)\,/\,\|\bs
  x\,\wedge\,\bs s\|_{2}$, $\bs e_D:= \bs x$ and $\bs e_{D-1}:=\bs
  e_D\,\wedge\,\bs e_1$, while the other vectors $\{\bs e_k: 2\leq
  k\leq D-2\}$ are defined arbitrarily for completing the basis.  In
  this system, the ``$\bs x\bs s$'' plane is equivalent to the plane
  spanned by $\bs e_D$ and $\bs e_{D-1}$. Moreover, any vector $\bs
  \varphi \in \Rbb^D$ can be represented by the spherical coordinates
  $(r,\phi_1,\,\cdots,\phi_{D-1})$ where $r=\|\bs
  \varphi\|_2\in\Rbb_+$, $(\phi_1,\cdots,\phi_{D-2})\in [0,\pi]^{D-2}$
  corresponds to the vector angles in each dimension, and
  $\phi_{D-1}\in[0,2\pi]$ being the angle formed by the projection of
  $\bs \varphi$ in the ``$\bs x\bs s$'' plane with $\bs x=\bs e_D$.

  The change of coordinates between the Cartesian and the spherical
  representations of $\bs \varphi$ in $\cl E$ is then defined as
  $\varphi_1= r\cos\phi_1$, $\varphi_2 = r\sin\phi_1\cos\phi_2$, ...,
  $\varphi_{D-1} =
  r\,\sin\phi_1\,\cdots\,\sin\phi_{D-2}\cos\phi_{D-1}$, and
  $\varphi_{D} = r\,\sin\phi_1\,\cdots\,\sin\phi_{D-2}\sin\phi_{D-1}$,
  while, conversely, $r=\|\bs \varphi\|_{2}$, $\tan\phi_1 =
  (\varphi_D^2+\cdots+\varphi_2^2)^{1/2}/\varphi_1$, ...,
  $\tan\phi_{D-2} =
  (\varphi_D^2+\varphi_{D-1}^2)^{1/2}\,/\,\varphi_{D-2}$, and
  $\tan\phi_{D-1} =\varphi_D\,/\,\varphi_{D-1}$.\footnote{This change
    of coordinates can be very convenient. For instance, the proof of
    Lemma \ref{lem:conc-prop-hamming} relies on the computation $\bb
    P[A_i(\bs x) \neq A_i(\bs s)] = \mu(\cl A = \{\bs\varphi:
    \phi_{D-1}\in [0,\pi\,\ang{\bs x}{\bs
      s}]\cup[\pi,\pi+\pi\,\ang{\bs x}{\bs s}]\})=\ang{\bs x}{\bs s}$,
    since for (almost) all $\bs \varphi\in \cl A$, $\bs x$ and $\bs s$
    live in the two different subvolumes determined by the plane
    $\{\bs u:\scp{\bs\varphi}{\bs u}=0\}$
    \cite{GoeWil::1995::Improved-approximation,ShaJacSta::2010::Randomly-driven}.}

  We now seek a lower bound on $p_1$. Computing this probability
  amounts to estimating
$$
p_1 = \bb P[\,\forall \bs u\in B^*_\delta(\bs x),\ \forall \bs v\in
B^*_\delta(\bs s),\ \scp{\bs\varphi}{\bs u}\scp{\bs\varphi}{\bs v} \leq
0\,]\ =\ \mu(\cl W_{\delta}),
$$
where $\cl W_{\delta} := \{\bs\varphi: \scp{\bs\varphi}{\bs
  u}\scp{\bs\varphi}{\bs v} \leq 0,\ \forall \bs u\in B^*_\delta(\bs
x),\ \forall \bs v\in B^*_\delta(\bs s)\}$ is the set of all vectors
$\bs \varphi$ such that its inner product with $\bs u$ and $\bs v$
result in different signs. 

Note that if $B^*_\delta(\bs x)\cap B^*_\delta(\bs s) \neq \emptyset$,
then $p_1 = 0$ since $p_1\leq \bb P[\,\forall \bs u\in B^*_\delta(\bs
x)\cap B^*_\delta(\bs s), \ \scp{\bs\varphi}{\bs u}^2 = 0\,] = 0$.  This
non-empty intersection is avoided when $\ang{\bs x}{\bs s}\geq
\tfrac{4}{\pi}\arcsin\delta/2$. Furthermore, since $\arcsin\lambda\leq
\tfrac{\pi}{2} \lambda$ for any $0\leq \lambda\leq 1$, this occurs if
$\ang{\bs x}{\bs s}\geq \delta$.

The remainder of the proof is devoted to finding an appropriate way to integrate the set $\cl W_{\delta}$. To this end, we begin by demonstrating that estimating $p_1$ can be simplified with the following equivalence
(proved just after the completion of the proof of Lemma~\ref{lemma:p0-p1-lower-bounds}).

\begin{lemma}
\label{lemma-W-V-xy-equiv}
The set $\cl W_{\delta}\subset \Rbb^{D}$ is equal to the set
$$
\cl V^-_{\delta}\ =\ \{\bs \varphi: \scp{\bs\varphi}{\bs
  x}\scp{\bs\varphi}{\bs s} \leq 0, \|\bs x\,-\,\cl
P_{\Pi(\varphi)}\,\bs x\| \geq \delta, \|\bs s\,-\,\cl
P_{\Pi(\varphi)}\,\bs s\| \geq \delta\},
$$
where $\cl P_{\Pi(\bs \varphi)}$ is the orthogonal projection
on the plane $\Pi(\bs\varphi)=\{\bs u\in\Rbb^D: \scp{\bs \varphi}{\bs
  u}=0\}$.
\end{lemma}

Using the hyper spherical coordinate system developed earlier and
denoting the angle $\pi\,d_S(\bs x,\bs s)$ by $\theta$, membership in
$\cl V^-_{\delta}$ can be expressed as
\begin{empheq}[left={\bs \varphi =(r,\phi_1,\,\cdots,\phi_{D-1})\in
\cl V^-_{\delta}\ \Leftrightarrow\ \empheqlbrace}]{align}
&\tan \phi_{D-1}\ \in\ [0, \tan \theta],\tag{R1}\label{eq:R1}\\
&\sin\phi_1\,\cdots\,\sin\phi_{D-2}\,|\sin\phi_{D-1}|\ \geq\ \delta,\tag{R2}\label{eq:R2}\\
&\sin\phi_1\,\cdots\,\sin\phi_{D-2}\,|\sin(\phi_{D-1}-\theta)|\ \geq\ \delta.\tag{R3}\label{eq:R3}
\end{empheq}

Indeed, requirement \eqref{eq:R1} enforces $\scp{\bs \varphi}{\bs x}\scp{\bs
  \varphi}{\bs s}\leq 0$, while \eqref{eq:R2} and \eqref{eq:R3} are direct translations
of the requirements that $\|\bs x\,-\,\cl P_{\Pi(\varphi)}\,\bs x\| =
|\scp{\widehat{\bs \varphi}}{\bs x=\bs e_D}| \geq \delta$ and $\|\bs
s\,-\,\cl P_{\Pi(\varphi)}\,\bs s\| = |\scp{\widehat{\bs \varphi}}{\bs
  s=-\sin\theta\,\bs e_D + \cos\theta\,\bs e_{D-1}}| \geq \delta$,
with~$\widehat{\bs \varphi} = \tfrac{1}{\|\bs \varphi\|}\bs \varphi$.

We are now ready to integrate to find $p_{1}$:
\begin{align*}
p_{1} =  \mu(\cl V^-_{\delta})\ &= \tinv{(2\pi)^{D/2}}\,\int_{\Rbb_+}\ud
  r\ r^{D-1}e^{-r^2/2}\ \bigg[\big(\int_0^\pi\ud\phi_1\,\sin^{D-2}\phi_1\,\big) \cdots \big(\int_0^\pi\ud\phi_{D-2}\,\sin\phi_{D-2}\,\big)\,\bigg]\cdots\\
  &\qquad\qquad\qquad\qquad
  \big[\int_{[0,\theta]\,\cup\,[\pi,\pi+\theta]}\ud\phi_{D-1}\
  \chi_{g(\delta,\bs\varphi)}(\phi_{D-1})\,\chi_{g(\delta,\bs\varphi)}(\phi_{D-1}-\theta)\,
  \big],
\end{align*}
with $\chi_\lambda(\phi)=1$ if $|\sin\phi\,|\geq \lambda$ and 0 else,
for some $\lambda\in[0,1]$, and
$g(\delta,\bs\varphi)=\delta/(\sin\phi_1\,\cdots\,\sin\phi_{D-2})$.

However,
$$
\int_{[0,\theta]\,\cup\,[\pi,\pi+\theta]}\ud\phi\
\chi_\lambda(\phi)\,\chi_\lambda(\phi-\theta)\ =\ \max(2\theta -
4\arcsin\lambda,\ 0),
$$
and $\max(2\theta - 4\arcsin\lambda, 0) \geq 2\theta -
2\pi\lambda$, since $\lambda\leq \arcsin\lambda \leq
\tfrac{\pi}{2}\lambda$ for any $\lambda\in[0,1]$. Consequently,
\begin{align*}
  \mu(\cl V^-_{\delta})&\geq\ \tinv{(2\pi)^{D/2}}\,\int_{\Rbb_+}\ud
  r\ r^{D-1}e^{-r^2/2}\ \cdots\\
&\quad
  \bigg[\big(\int_0^\pi\ud\phi_1\,\sin^{D-2}\phi_1\,\big) \cdots
  \big(\int_0^\pi\ud\phi_{D-2}\,\sin\phi_{D-2}\,\big)\,\bigg]
  \big(2\theta -
  \tfrac{2\pi\delta}{(\sin\phi_1\,\cdots\,\sin\phi_{D-2})}\big)\\
&=\ \frac{\theta}{\pi}\ -\ 2\pi\delta\ \frac{I_{D-3}\,I_{D-4}\cdots
I_0}{(2I_0)\,I_{1}\cdots I_{D-2}}\ =\ \frac{\theta}{\pi}\ -\ \frac{\pi\,\delta}{I_{D-2}}, 
\end{align*}
with $I_n:=\int_0^\pi\ud\phi \,\sin^{n}\phi$ and knowing that
$(2\pi)^{D/2} = (2 I_0)\,(I_1\,\cdots I_{D-2}) \int_{\Rbb_+}\ud
  r\ r^{D-1}e^{-r^2/2}$ since
$$
\mu(\Rbb^D) = 1 = (2\pi)^{-D/2} (\int_0^{2\pi}
\ud \phi_{D-1})\,(I_1\,\cdots I_{D-2})\,\int_{\Rbb_+}\ud
  r\ r^{D-1}e^{-r^2/2}. 
$$

Using the fact that $I_n =
\sqrt{\pi}\,\Gamma(\tfrac{n+1}{2})/\Gamma(\tfrac{n}{2}+1) \geq
{\sqrt{\pi}}/{\sqrt{\frac{n}{2}+\inv{4}}}$, we obtain $ I_{D-2}\geq
\tfrac{\sqrt{\pi}}{\sqrt{\frac{D}{2}-\frac{3}{4}}}\geq
\sqrt{\tfrac{2\pi}{D}}$, and thus
$$
p_1\ \geq\ \ang{\bs x}{\bs s}\ -\ \sqrt{\tfrac{\pi}{2} D}\,\delta.
$$
If we want a meaningful bound for $p_1\geq 0$, then we must have $\ang{\bs x}{\bs s}\geq
\sqrt{\tfrac{\pi}{2} D}\,\delta \geq \delta$. Therefore, as soon as the lower bound is
positive, the aforementioned condition $\ang{\bs x}{\bs s}\geq\delta$
always holds.

The lower bound for $p_0$ is obtained similarly. It is straightforward
to show that $p_0=\mu(\cl V^+_{\delta})$, with $\cl V^+_{\delta} =
\{\bs \varphi: \scp{\bs\varphi}{\bs x}\scp{\bs\varphi}{\bs s} > 0,
\|\bs x\,-\,\cl P_{\Pi(\varphi)}\,\bs x\| \geq \delta, \|\bs
y\,-\,\cl P_{\Pi(\varphi)}\,\bs s\| \geq \delta\}$. Lower bounding
$\mu(\cl V^+_{\delta})$ as for $\mu(\cl V^+_{\delta})$, the only
difference occurring with the integral on $\phi_{D-2}$ given by 
\begin{multline*}
\int_{[\theta,\pi]\,\cup\,[\pi+\theta,2\pi]}\ud\phi_{D-1}\
\chi_{g(\delta,\bs\varphi)}(\phi_{D-1})\,\chi_{g(\delta,\bs\varphi)}(\phi_{D-1}-\theta)\
\cdots\\ 
=\ 2\pi - 2\theta -
4\arcsin g(\delta,\bs\varphi)\ \geq\ 2(\pi - \theta) - 2\pi g(\delta,\bs\varphi).
\end{multline*}
Therefore, the lower bound of $p_0$ amounts to change $\theta\to
\pi-\theta$ in the one of $p_1$, which provides the result.
\end{proof}

\begin{proof}[Proof of Lemma \ref{lemma-W-V-xy-equiv}] If $\delta = 0$, there is nothing to prove.
  Therefore $\delta > 0$ and if $\bs \varphi^*$ belongs to either
  $\cl V_{\delta}$ or $\cl W_{\delta}$, we must have
  $\scp{\bs\varphi}{\bs x}\scp{\bs\varphi}{\bs s} < 0$.  It is also
  sufficient to work on the restriction of $\cl V_{\delta}$ and
  $\cl W_{\delta}$ to unit vectors.

\noindent{\underline{\emph{(i)} $\cl V_{\delta} \subset
    \cl W_{\delta}$}:} By contradiction, let us assume that $\bs
\varphi^*\in \cl V_{\delta}$ but $\bs \varphi^*\notin
\cl W_{\delta}$. Without any loss of generality, $\scp{\bs
  \varphi^*}{\bs x}> 0$ and $\scp{\bs \varphi^*}{\bs s}<
0$. Since $\bs \varphi^*\notin \cl W_{\delta}$, there exist two
vectors $\bs u^*\in B^*_\delta(\bs x)$ and $\bs v^*\in B^*_\delta(\bs
s)$ such that $\scp{\bs\varphi^*}{\bs u^*}\scp{\bs\varphi^*}{\bs
  v^*}>0$. If $\scp{\bs\varphi^*}{\bs u^*}>0$ and $\scp{\bs\varphi^*}{\bs
  v^*}>0$, then, since $\scp{\bs\varphi^*}{\bs s}< 0$ and by
continuity of the inner product, there exist a $\lambda\in(0,1)$ such
that $\scp{\bs\varphi^*}{\bs s(\lambda)}=0$ with $\bs s(\lambda) = \bs
s + \lambda (\bs v^* - \bs s)$. Therefore, $\bs s(\lambda) \in
\Pi(\bs\varphi)$ and, by definition of the orthogonal projection,
$\|\bs s -\cl P_{\Pi(\bs \varphi)}\,\bs s\| \leq \|\bs s -\bs
s(\lambda)\|\leq \lambda \delta < \delta$ which is a
contradiction.  If $\scp{\bs\varphi^*}{\bs u^*}<0$ and
$\scp{\bs\varphi^*}{\bs v^*}<0$, we apply the same reasoning on $\bs
x$ and $\bs u^*$. Therefore, $\cl V_{\delta} \subset
\cl W_{\delta}$.

\noindent{\underline{\emph{(ii)} $\cl W_{\delta} \subset \cl V_{\delta}$}:}
If $\bs \varphi^*\in \cl W_{\delta}$ with $\bs \varphi^*\notin
\cl V_{\delta}$, we have either $\|\bs x\,-\,\cl
P_{\Pi(\varphi^*)}\,\bs x\| < \delta$ or $\|\bs s\,-\,\cl
P_{\Pi(\varphi^*)}\,\bs s\| < \delta$. Let us say that $\|\bs
x\,-\,\cl P_{\Pi(\varphi^*)}\,\bs x\| < \delta$. Then, for $\bs w =
\bs x\, +\, \delta\, (\cl P_{\Pi(\varphi^*)}\,\bs x - \bs x)/\|\cl
P_{\Pi(\varphi^*)}\,\bs x - \bs x\| \in B^{*}_{\delta}(\bs x)$,
$\scp{\bs \varphi^*}{\bs x}\scp{\bs \varphi^*}{\bs w} = (\scp{\bs
\varphi^*}{\bs x})^2\big(1-\delta/\|\cl
P_{\Pi(\varphi^*)}\,\bs x - \bs x\|\big) + \delta\, \scp{\bs
\varphi^*}{\cl P_{\Pi(\varphi^*)}\,\bs x}< 0$. However,
$\bs \varphi^*\in \cl W_{\delta}$ and
$\scp{\bs \varphi^*}{\bs x}\scp{\bs \varphi^*}{\bs s} < 0$, leading to
$\scp{\bs \varphi^*}{\bs w}\scp{\bs \varphi^*}{\bs s} > 0$, which is a
contradiction.
\end{proof}

\section{Theorem~\ref{thm:1-bit-stable-embed}: Gaussian Matrices Provide B$\epsilon$SEs}
\label{sec:proof-thm-1-bit-stable-embed}
The strategy for proving Theorem~\ref{thm:1-bit-stable-embed} will be
to count the number of pairs of $K$-sparse signals that are Euclidean
distance $\delta$ apart.  We will then apply the concentration results
of Lemma~\ref{lem:conc-prop-hamming-balls} to demonstrate that the
angles between these pairs are approximately preserved.  We
specifically proceed by focusing on a single $K$-dimensional subspace
(intersected with the unit sphere) and then by applying a union bound
to account for all possible subspaces.

Let $T\subset [N]$ be an index set of size $|T|= K$,
$\Sigma^*(T)=\{\bs w\in\Rbb^N: \supp \bs w\subset T,~~\|\bs
w\|_{2}=1\}$ be the sphere of unit vectors with support $T$.  We first
use again the fact that the sphere $\Sigma^*(T)$ can be
$\delta$-covered by a finite set of points $Q_{T,\delta}$. That is,
for any $\bs w\in \Sigma^*(T)$, there exists a $\bs q \in
Q_{T,\delta}$ such that $\bs w \in B^*_\delta(\bs q) = B_\delta(\bs
q)\cap \Sigma^*_T = \{\bs w'\in \Sigma^*_T: \|\bs w'-\bs q\|_{2}\leq
\delta\}$ \cite{BarDavDeV::2008::A-Simple-Proof}. Note that the size of
$Q_{T,\delta}$ is bounded by $| Q_{T,\delta}| \leq C_\delta =
(3/\delta)^K$.

Let $\Phi_{T}$ be the matrix formed by the columns of $\Phi$ indexed
by $T$ and note that $\Phi_{T}\bs w = \Phi\bs w$.  Given $\epsilon'\geq
0$, for all pairs of points $\bs p, \bs q \in
Q_{T,\delta}$, we have
\begin{multline}
\label{bound:union}
  \mathbb{P}\left(\forall\,\bs u\in B^{*}_{\delta}(\bs p),\ \forall \bs v\in B^{*}_{\delta}(\bs q),\ \big|\,d_H\big(A(\bs u),A(\bs v)\big)\ -\
  \ang{\bs p}{\bs q}\,\big|\ \leq\ \epsilon' + \sqrt{\tfrac{\pi}{2} K}\,\delta\, \right)\\ 
  \geq\ 1\ -\ 2\,(\tfrac{3}{\delta})^{2K}\,e^{-2\epsilon'^2 M}.
\end{multline}
This follows from Lemma \ref{lem:conc-prop-hamming-balls} with $D=K$, since $\Phi_T$ is a Gaussian matrix 
and by invoking the union bound, since there are ${C_\delta \choose 2}\leq C^2_\delta=(3/\delta)^{2K}$
such pairs $\bs x, \bs s$. 

The bound (\ref{bound:union}) can be extended to all possible index
sets $T$ of size $K$ via the union bound.  Specifically, for all
$T\subset [N]$ and all pairs of points $\bs p, \bs q \in
Q_{T,\delta}$, we have now jointly
\begin{multline}
\label{eq:allindex}
  \mathbb{P}\left(\forall\bs u\in B^{*}_{\delta}(\bs p), \forall\bs v\in
  B^{*}_{\delta}(\bs q),\ \big|\,d_H\big(A(\bs u),A(\bs v)\big)\ -\
  \ang{\bs
    p}{\bs q}\,\big|\ \leq\ \epsilon' + \sqrt{\tfrac{\pi}{2} K}\,\delta\, \right)\\ 
  \geq\ 1\ -\ 2\,(\tfrac{eN}{K})^K\,(\tfrac{3}{\delta})^{2K}\,e^{-2\epsilon'^2 M}
\end{multline}
since there are no more than ${N \choose K} \leq (eN/K)^K$ possible $T$.

We can reformulate this last result as follows. Let us take any pair
of points on the sphere $\bs x, \bs s\in S^{N-1}$ such that their joint
support $T=\supp(\bs x)\,\cup\,\supp(\bs s)$ has a size $|T|\leq
K$. We have obviously $\bs x, \bs s\in \Sigma^*(T)$.  Taking the
covering set $Q_{T,\delta}$ defined for $\Sigma^*(T)$, there exist two
points $\bs p, \bs q \in Q_{T,\delta}$ such that $\bs x\in
B^{*}_{\delta}(\bs p)$ and $\bs s\in B^{*}_{\delta}(\bs q)$. From
(\ref{eq:allindex}), with a probability exceeding $1 -
2\,(\tfrac{eN}{K})^K\,(\tfrac{3}{\delta})^{2K}\,e^{-2\epsilon'^2 M}$,
we have
\begin{equation}
\label{eq:angleb}
\big|\,d_H\big(A(\bs x),A(\bs s)\big)\ -\ \ang{\bs p}{\bs q}\,\big|\
\leq\ \epsilon' + \sqrt{\tfrac{\pi}{2} K}\,\delta.
\end{equation}
To obtain our final bound, consider that $\bs x\in B^{*}_{\delta}(\bs p)$ implies that $\pi\,\ang{\bs x}{\bs
  p}\leq 2\arcsin \delta/2 \leq \pi\delta/2$, and $\ang{\bs
s}{\bs q}$ can be similarly bounded. Thus, $\ang{\bs x}{\bs s}\geq \ang{\bs p}{\bs q} - \delta$
and $\ang{\bs x}{\bs s}\leq \ang{\bs p}{\bs q} + \delta$, and (\ref{eq:angleb}) becomes
\begin{equation}
\big|\,d_H\big(A(\bs x),A(\bs s)\big)\ -\ \ang{\bs x}{\bs s}\,\big|\
\leq\ \epsilon' + (1+\sqrt{\tfrac{\pi}{2} K})\,\delta.
\end{equation}
Let us define the probability of failure as $2\,(\tfrac{eN}{K})^K\,(\tfrac{3}{\delta})^{2K}\,e^{-2\epsilon'^2
M} = \eta,$ where $0<\eta<1$, and set $\epsilon' =
(1+\sqrt{\tfrac{\pi}{2} K})\,\delta$ and
$2\epsilon'=\epsilon$. Solving for $M$, we finally get that $|d_H(A(\bs x),A(\bs s)) - \ang{\bs x}{\bs s}\,|
\leq \epsilon$ with a probability bigger than $1-\eta$ if
$$
M\ \geq\ \tfrac{2}{\epsilon^2}\big(K\,\log(\tfrac{9eN}{K}) +
2K\,\log(\tfrac{2(1+\sqrt{2\pi K})}{\epsilon}) + \log(\tfrac{2}{\eta}) \big).
$$
Since $K\geq 1$, we have that $2(1+\sqrt{2\pi K}) \leq 2 (1 +
\sqrt{2\pi})\sqrt{K} < 35 \sqrt K/\sqrt{9e}$, and
thus the previous relation is satisfied if 
\begin{align*}
M\ &\geq\ \tfrac{2}{\epsilon^2}\big(K\,\log(\tfrac{9eN}{K}) +
2K\,\log(\tfrac{35/\sqrt{9e}}{\epsilon} \sqrt K) + \log(\tfrac{2}{\eta})
\big),\\
&=\ \tfrac{2}{\epsilon^2}\big(K\,\log(N) +
2K\,\log(\tfrac{35}{\epsilon}) + \log(\tfrac{2}{\eta})
\big).
\end{align*}

\section{Lemma~\ref{prop:noise-corruption}: Stability with Measurement Noise}
\label{sec:proof-lemma-noise-corruption}

In Lemma~\ref{prop:noise-corruption}, since $\Phi\sim\SGR{M}{N}$,
each $y_i = (\Phi\bs x)_i$ follows a Gaussian distribution $\cl
N(0,\|\bs x\|_{2}^2)$, and furthermore, since we have independent
additive noise, $z_i = y_i + n_i = (\Phi\bs x)_i + n_i$ follows the
Gaussian distriubtion $\cl N(0,\|\bs x\|_{2}^2+\sigma^2)$.

We begin by bounding the probability that any noisy measurement
$z_{i}$ has a different sign than the original corresponding
measurement $y_{i}$, \ie we bound $\tilde p := \bb P(z_i y_i < 0)$. This
quantity is interesting since $M\, d_H\big(A_{\bs n}(\bs x),A(\bs
x)\big)$ follows a Binomial distribution with $M$ trials and
probability of success $\tilde  p$ and thus we also have $\bb
E\big(d_H\big(A_{\bs n}(\bs x),A(\bs x)\big)\big) = \tilde  p$.

To solve for the bound, we compute
\begin{align*}
  \tilde  p = \int_{\Rbb} \ud u\ \bb P(z_i y_i < 0\, |\,y_i = u\,)\, f_{y_i}(u)\
  =\ \int_{\Rbb} \ud u\ \bb P(u^2 + u n_i < 0)\,
  g(u;\|\bs x\|_{2}),
\end{align*}
with the pdf $f_{y_i}(t) = g(t;\sigma') = \inv{\sqrt{2\pi}
t}\exp(-t^2/2\sigma'^2)$. This leads to
\begin{align*}
  \tilde  p\ &= \int_{0}^\infty \ud u\ \bb P(n_i < -u)\, g(u;\|\bs x\|_{2})\
  +\ \int_{-\infty}^0 \ud u\ \bb P(n_i > -u)\, g(u;\|\bs x\|_{2})\\
&= \int_{0}^\infty \ud u\ 2\,Q(u/\sigma)\, g(u;\|\bs x\|_{2})\ \leq\ \int_{0}^\infty \ud u\ e^{-\frac{u^2}{2\sigma^2}}\,g(u;\|\bs
x\|_{2})\\ 
&=\ \tinv{\sqrt{2\pi}\|\bs x\|_{2}}\int_{0}^\infty \ud u\
e^{-\tinv{2}(\frac{(\|\bs x\|_{2}^2 + \sigma^2)\, u^2}{\sigma^2\|\bs
x\|_{2}^2})}\ =\ \tinv{2} \frac{\sigma}{\sqrt{\|\bs x\|_{2}^2 + \sigma^2}},
\end{align*}   
where $Q(u)=\int_u^\infty\ud t\, g(t;1)$ denotes the tail integral of the standard Gaussian
distribution which is bounded by $Q(t)\leq\tinv{2}e^{-t^2/2}$ for
$t\geq 0$
(see for instance \cite[Eq. (13.48)]{JKB-CUDV1}). 

Thus, we have $\tilde  p \leq e(\sigma,\|\bs x\|_{2})=\tinv{2} \frac{\sigma}{\sqrt{\|\bs x\|_{2}^2
+ \sigma^2}}$ and, by applying the Chernoff-Hoeffding inequality to the distribution of $d_H\big(A_{\bs n}(\bs x),A(\bs x)\big)$, 
\begin{eqnarray*}
\lefteqn{\bb P\big[M\,d_H\big(A_{\bs n}(\bs x),A(\bs x)\big) > \ M\,e(\sigma,\|\bs x\|_{2}) +
M\epsilon\,\big]}\hspace{3cm}&&\\
&\leq& \bb P\big[M\,d_H\big(A_{\bs n}(\bs x),A(\bs x)\big) >\ M\,\tilde  p + M\epsilon\,\big]\\ 
&\leq&\ e^{-2M\epsilon^2},
\end{eqnarray*}
which proves the lemma.

\section{Asymptotic bound on $\epsilon$ in Theorems \ref{thm:hash} and
\ref{thm:1-bit-stable-embed}}
\label{app:proof_epsilon_MK_complex}
Both Theorem \ref{thm:hash} and Theorem \ref{thm:1-bit-stable-embed}
provide guarantees on the worst-case error $\epsilon$ of the form
\begin{equation}
  \label{eq:gener-eps-def}
  \epsilon^n \le \tfrac{1}{M}\,\big( \alpha\,K\log(N) +
  \beta\log(\tinv{\eta}) + \gamma\,K \log(\tfrac{1}{\epsilon})\,\big),
\end{equation}
for some exponent $n\in\{1,2\}$, and for given constants $\alpha,\beta,\gamma> 0$.

In this appendix we show that, considering $0<\eta<1$ fixed, the relation
(\ref{eq:gener-eps-def}) implies
\begin{equation}
  \label{eq:gener-eps-asymp}
  \epsilon^n = O\big( \tfrac{K}{M} \log (\tfrac{MN}{K}) \big)
\end{equation}
asymptotically in ${M}/{K}$ and $N$. Notice that, up to a redefinition
$\epsilon^n \to \epsilon$ and $\gamma/n \to \gamma$, it is sufficient
to prove the relation for $n=1$. We also define $\rho := \beta
\log(1/\eta)$.

First, we consider $N$ fixed and show $\epsilon = O\big( \tfrac{K}{M}
 \log (\tfrac{M}{K}) \big)$.
Let us assume this is not the case, i.e., for all
$c>0$, and all $R_0 > 0$, there exists a ratio ${M}/{K} > R_0$ such
that $\epsilon > c\,({K}/{M}) \log ({M}/{K})$. Therefore,
$$
\log \tinv{\epsilon} < \log \tfrac{M}{K} - \log\big( c \log
\tfrac{M}{K}\big) < \log \tfrac{M}{K} - \log( c \log R_0),
$$
Thus~\eqref{eq:gener-eps-def} becomes
$$
\epsilon \leq \tfrac{1}{M}\,\big( \alpha\,K \log N + \rho + \gamma\,K
\log \tfrac{M}{K} - \gamma\,K\,\log( c \log R_0)).
$$

Using $\epsilon > c\,({K}/{M}) \log ({M}/{K})$, this last inequality
becomes
\begin{align}
  \alpha\,\log N + \tinv{K}\,\rho + \gamma\,\log \tfrac{M}{K}
  \ >\ c\, \log \tfrac{M}{K} + \gamma\,\log( c \log R_0).
  \label{eq:ineq_error}
\end{align}

For fixed $N$ and $\eta$, and since we reasonably have $K\geq 1$, the
parameters $c$ and $R_0$ can always be selected so that $ \gamma \log(
c \log R_0) > \alpha\,\log N + \rho/K$.  In this
case,~\eqref{eq:ineq_error} implies $\gamma \log\tfrac{M}{K} >
c\,\log \tfrac{M}{K}$. Taking $c> \gamma$, which is still compatible
with the selection of $R_0$ and $c$ above, leads to a
contradiction. Thus $\epsilon=O\left(\tfrac{K}{M}
\log\tfrac{M}{K}\right)$ for fixed $N$.

Next, we assume $N$ varies and $R:=M/K$ is fixed, and show that
$\epsilon^n = O((1/R) \log (R N))$. We again restrict the analysis to
$n=1$. Now we assume for all $c>0$ and all $N_0>0$, there is an $N > N_0$ such that $\epsilon > (c/R) \log(R N)$. Therefore, $\log
\tfrac{1}{\epsilon} < - \log ( (c/R) \log (R N_0))$ and
(\ref{eq:gener-eps-def}) becomes
$$
\tfrac{c}{R} \log(R N) + \tfrac{\gamma}{R} \log ( \tfrac{c}{R} \log (RN_0)) < \tfrac{\alpha}{R} \log N + \tfrac{1}{M}\rho.   
$$ 
Since $R$ and $\eta$ are fixed and we have $M\ge 1$, the parameters $c$ and $N_0$ can be selected so that $ \tfrac{\gamma}{R} \log ( \tfrac{c}{R} \log (RN_0)) >\tfrac{\rho}{M}$, that implies $\tfrac{c}{R} \log(R N) < \tfrac{\alpha}{R} \log N$. Taking $c >\tfrac{\alpha}{\log R} $ leads to a contradiction and completes the proof.

\footnotesize

\end{document}